\documentclass[onecolumn, draftclsnofoot,11pt]{IEEEtran}

\usepackage[letterpaper,top=2.2cm,bottom=2.2cm,left=2.2cm,right=2.2cm,marginparwidth=1.75cm]{geometry}
\usepackage[table]{xcolor}
\usepackage{booktabs}
\usepackage{pgf} 
\newcommand{\heatcell}[1]{%
  \pgfmathsetmacro{\gamma}{1.8}%
  \pgfmathsetmacro{\hue}{0.33*pow(#1/100, \gamma)}
  \edef\temp{\noexpand\cellcolor[hsb]{\hue,0.55,1.0}}%
  \temp #1\%%
}
\usepackage{cite} 
\usepackage{amssymb}
\usepackage{amsthm}
\usepackage[english]{babel}
\usepackage[most]{tcolorbox}
\usepackage{amsmath}
\usepackage{amsfonts}
\usepackage{url}
\usepackage{graphicx} 
\usepackage{bm}
\usepackage{algorithm}
\usepackage{algorithmic}
\usepackage{hyperref}
\usepackage{caption}
\usepackage{subcaption}
\usepackage{lineno}
\newtheorem{definition}{Definition}
\newtheorem{lemma}{Lemma}

\newtheorem{theorem}{Theorem}
\newtheorem*{theorem*}{Theorem}
\newtheorem{remark}{Remark}
\newtheorem{prop}{Proposition}

\global\long\def\RR{\mathbb{R}}

\global\long\def\EE{\mathbb{E}}

\newcommand{\bfm}{\mathbf{m}}

\newcommand{\bfw}{\mathbf{w}}

\newcommand{\rmd}{\mathrm{d}}

\newcommand{\rmA}{\mathrm{A}}
\newcommand{\rmB}{\mathrm{B}}
\newcommand{\rmC}{\mathrm{C}}
\newcommand{\rmD}{\mathrm{D}}
\newcommand{\rmH}{\mathrm{H}}

\newcommand{\rmF}{\mathrm{F}}
\newcommand{\rmI}{\mathrm{I}}

\newcommand{\rmT}{\mathrm{T}}

\newcommand{\rmX}{\mathrm{X}}
\newcommand{\rmY}{\mathrm{Y}}
\newcommand{\rmZ}{\mathrm{Z}}

\def \tensor {\otimes}

\def \ds {\rmd s}

\def\<{\langle}
\def\>{\rangle}

\def\tr{\mathrm{tr}}
\def\Pr{\mathrm{Pr}}
\newcommand{\ket}[1]{|#1\>}
\newcommand{\bra}[1]{\<#1|}

\def \calE {\mathcal{E}}

\def \calG {\mathcal{G}}

\def \calL {\mathcal{L}}
\def \calM {\mathcal{M}}
\def \calN {\mathcal{N}}
\def \calO {\mathcal{O}}

\def \calS {\mathcal{S}}
\def \calT {\mathcal{T}}

\def \calW {\mathcal{W}}
\def \calX {\mathcal{X}}

\def \sfB {\mathsf{B}}

\def \sfD {\mathsf{D}}

\def \sfM {\mathsf{M}}
\def \sfN {\mathsf{N}}

\def \sfd {\mathsf{d}}

\usepackage{stackengine}

\def \eqand {\text{ and }}

\def \genei {\mathsf{g}_i}
\def \genej {\mathsf{g}_j}
\def \dt {\mathrm dt}
\def \imag {\mathrm i}

\title{\huge Quantum Hamiltonian Learning using Time-Resolved Measurement Data and its Application to Gene Regulatory Network Inference}

\author{
\IEEEauthorblockN{Mohammad Aamir Sohail\IEEEauthorrefmark{1},  Ranga R. Sudharshan\IEEEauthorrefmark{2}, S. Sandeep Pradhan\IEEEauthorrefmark{1}, Arvind Rao\IEEEauthorrefmark{2}}\\
\vspace{5pt}\IEEEauthorrefmark{1}\normalsize Department of EECS, University of Michigan, Ann Arbor, USA\\
 \IEEEauthorrefmark{2}\normalsize Department of Computational Medicine and Bioinformatics, University of Michigan, Ann Arbor, USA 
\date{August 2025}}

\begin{document}

\maketitle
\begin{abstract}
We present a new Hamiltonian-learning framework based on time-resolved measurement data from a fixed local IC-POVM and its application to inferring gene regulatory networks.  We introduce the quantum Hamiltonian-based gene-expression model (QHGM), in which gene interactions are encoded as a parameterized Hamiltonian that governs gene expression evolution over pseudotime. We derive finite-sample recovery guarantees and establish upper bounds on the number of time and measurement samples required for accurate parameter estimation with high probability, scaling polynomially with system size. To recover the QHGM parameters, we develop a scalable variational learning algorithm based on empirical risk minimization. Our method recovers network structure efficiently on synthetic benchmarks and reveals novel, biologically plausible regulatory connections in Glioblastoma single-cell RNA sequencing data, highlighting its potential in cancer research. This framework opens new directions for applying quantum-like modeling to biological systems beyond the limits of classical inference.
\end{abstract}

\section{Introduction}
Quantum Hamiltonian learning (QHL) refers to the task of inferring the parameters of a Hamiltonian associated
with a many-body quantum system without requiring resources that scale exponentially with the system size. This task is essential in areas such as quantum simulation, condensed matter physics, and the characterization of quantum devices \cite{burgarth2017evolution,wang2017experimental,kwon2020magnetic,wang2020machine,guo2025hamiltonian,karjalainen2025hamiltonian}. For example, in condensed matter physics, Hamiltonian learning allows the identification of effective spin-interaction models in quantum materials, based on spectroscopic measurements of spin excitations \cite{karjalainen2025hamiltonian}.
A conceptually straightforward approach to QHL is quantum process
tomography
\cite{altepeter2003ancilla,leung2003choi,rahimi2011quantum,mohseni2008quantum}. However, this approach requires resources that scale exponentially with system size,
making it impractical for real experiments. \cite{wang2018quantum}.

To overcome the prohibitive costs of process tomography, a wide range of efficient QHL methods have been developed across diverse settings. For instance, sample-efficient algorithms have been proposed for learning local Hamiltonians from Gibbs or thermal states \cite{anshu2021sample,haah2022optimal,gu2024practical,chen2025learning,Bakshi2024LearningHamiltonians}, single eigenstates \cite{qi2019determining,dupont2019eigenstate,chertkov2018computational,greiter2018eigenstate}, and the steady state of the Hamiltonian \cite{bairey2020learning,evans2019scalable}. Additional methods focus on using measurements of local observables \cite{bairey2019learning,cao2020supervised,Shabani2011EstimationManyBody}, short-time evolution 
\cite{stilck2024efficient,zubida2021optimal,hangleiter2024robustly,yu2023robust}, and time-resolved measurements \cite{gupta2023hamiltonian,zhang2014quantum}. Several works have also proposed using a trusted quantum simulator to infer the Hamiltonian of an untrusted device \cite{weibe2014quantumsim,weibe2014QHL,Wiebe2015QuantumBootstrapping}. Recent advances include scalable algorithms based on matrix product state (MPS),\cite{wilde2022scalably} and noise-resilient learning methods \cite{yu2023robust,liu2025optimal}. Additionally, a series of works have demonstrated Heisenberg-limited scaling for Hamiltonian learning \cite{Huang2023LearningManyBody,Li2024HeisenbergHamiltonianBosons,Hu2025AnsatzFreeHamiltonian,Sinha2025ImprovedHamiltonianLearning,Baran2025HeisenbergLimitedQuantumHamiltonian}. In other words, the error in estimated parameters scales as the inverse of total evolution time, rather than the square-root of total evolution time as seen in the standard quantum limit \cite{giovannetti2004quantum,braginskiui1975quantum}.

QHL has been traditionally applied in quantum physics to understand particle interactions. However, this can be extended to understand the underlying interaction structure of complex systems, which may not be inherently quantum-scale but exhibit behavior that cannot be accurately described by classical probabilistic models. This broader viewpoint is central to the \textit{quantum-like} modeling paradigm \cite{khrennikov2004quantum,khrennikov2015quantum,khrennikov2006quantum}, which utilizes the mathematical tools of quantum information theory, such as non-commutative observables, superposition in Hilbert space, and quantum measurements modeled as positive operator valued measure (POVM), to model information processing in complex systems.
As put forth in \cite{plotnitsky2023quantum}, the quantum formalism can be applied ``not only to physical systems, but to systems of any origin — whether biological, social, or financial — as long as their behavior exhibits some distinguishing features of quantum systems". The \textit{quantum-like} paradigm has been explored in diverse fields beyond physics, including cognitive science, decision theory, finance, and neuroscience \cite{khrennikov2010ubiquitous, melkikh2015nontrivial,haven2017palgrave,pothos2022quantum,plotnitsky2023quantum}.

An interesting application where \textit{quantum-like} modeling can offer an advantage is the inference of gene regulatory networks (GRNs).
The transcriptional state of a cell is highly dynamic and tightly regulated by complex interactions between genes, often mediated through multiple proteins. Understanding these regulatory relationships is critical for deciphering cellular behavior and function, yet represents one of the fundamental challenges in systems biology. In recent years, the increasing availability of large-scale single-cell RNA sequencing (scRNA-seq) data \cite{svensson2020curated}, which essentially captures the expression levels of individual genes across cells, has greatly facilitated the development and refinement of computational tools for GRN inference. These datasets provide unprecedented resolution to capture cell-to-cell variability, enabling more accurate modeling of gene interactions across diverse cellular states and conditions. Current classical methodologies can be broadly classified into correlation-based \cite{kim2015ppcor} \cite{specht2017leap}, tree-based ensemble methods \cite{huynh2010inferring} \cite{papili2018sincerities}, information-theory-based \cite{margolin2006aracne}, and Bayesian network-based models \cite{sanchez2018bayesian}. While these techniques have yielded valuable insights, they may fall short in capturing the nuanced and context-dependent nature of biological regulation. Empirical studies increasingly show that gene-expression data can violate the classical law of total probability and present non-classical features such as interference of probabilities \cite{harney2020effects,basieva2011quantum,rieper2010quantum,siebert2023quantum,melkikh2015nontrivial}, as well as violation of the Bell inequality in the macroscopic world \cite{aerts2000violation}. Moreover, in cancer progression, certain cell types have been observed to exist in hybrid states, simultaneously expressing features of multiple phenotypes, in a manner reminiscent of quantum superposition \cite{alvarez2024quantum, neftel2019integrative}.
These observations suggest that the regulatory interactions in GRNs can be better modeled using a \textit{quantum-like} paradigm. 

Recent research has started to explore \textit{quantum-like} models for inferring complex biological networks \cite{RomanVicharra2023,Dubovitskii2025,konar2025alz,romero2025quantum,utro2024perspective}. 
A quantum circuit model has been proposed in \cite{RomanVicharra2023} for GRN inference from scRNA-seq data, where each gene is represented as a qubit.  
However, the circuit construction is sensitive to gene ordering, as it prioritizes genes with higher activation ratios in the qubit mapping. The model also relies on the (classical) KL divergence loss function \cite{Cover2006}, which requires evaluating joint probability distributions, resulting in computational costs that scale exponentially with network size. 
Furthermore, existing methods lack a systematic framework for integrating multi-omics data, such as genomics, transcriptomics, and proteomics, thereby limiting their effectiveness in practical biological applications.

Building on these observations, we propose that the QHL framework offers a powerful approach to modeling gene regulatory dynamics. It enables scalable and sample-efficient inference of complex, nonlinear interactions that classical methods might overlook. Additionally, it offers a flexible foundation for integrating multi-omics data. However, applying QHL to GRNs is challenging from multiple perspectives. Existing QHL approaches are primarily tailored for quantum many-body systems and rely on entangled initial states, random Pauli measurements, Gibbs states, or access to eigenstates, resources that do not have direct relevance in the context of GRNs.
To address this gap, we formulate a new Hamiltonian learning problem grounded in statistical learning theory and motivated by the biology of GRNs. 
Below, we summarize the key contributions of this work.

\noindent $1.$ Hamiltonian Learning from Time-Resolved Measurement Data: 
We formulate a new Hamiltonian learning problem using measurement outcomes from a fixed local informationally complete POVM (IC-POVM)
collected at multiple times, starting from a fixed initial state. We characterize the sample complexity in terms of the number of time samples, denoted as $\sfN_t$,  
and the number of measurement outcomes per time sample, denoted as $\sfN_c$, sufficient to achieve small estimation error with high probability (see Theorem~\ref{thm:emp_strongconvexity}). Both $\sfN_t$ and $\sfN_c$ scale polynomially with the number of qudits. Furthermore, we establish a finite-sample uniform convergence bound for the empirical loss (see Theorem~\ref{thm:unif}).

\noindent $2.$ Quantum Hamiltonian-Based Generative Modeling of Gene Expression: We instantiate our QHL framework in the context of GRNs and introduce the quantum Hamiltonian-based gene-expression model (QHGM) (see Fig.~\ref{fig:QHGM}) for simulating GRNs. In this model, genes are treated
as qubits. The QHGM generates gene-expression data by modeling regulatory interactions as \textit{quantum-like} couplings encoded in a parameterized Hamiltonian. Each single-qubit IC-POVM outcome corresponds to the expression level of an individual gene, and together they yield the gene-expression profile of a cell. We employ pseudotime \cite{stassen2021generalized}, which orders cells along inferred developmental trajectories derived from scRNA-seq data, as an approximate analogue of physical evolution time. We construct a Hamiltonian for GRN by defining biologically interpretable interaction terms using tensor products of computational basis states. 

\noindent $ 3.$ Scalable and Sample-Efficient Network Inference Algorithm: We develop a scalable variational quantum network inference algorithm (VQ-Net) for learning QHGM parameters from scRNA-seq data (see Methods and Fig.~\ref{fig:algorithm}). VQ-Net is built on an empirical risk minimization framework and minimizes the negative log-likelihood loss over mini-batches of scRNA-seq data collected at multiple pseudotime bins.

\noindent 4. Numerical Evaluation on Synthetic Data: We provide comprehensive numerical results and performance of VQNet on synthetic gene-expression data generated by QHGM  (see Fig.~\ref{fig:synthetica_data_summary}). These experiments validate the theoretical sample-complexity bounds derived in our learning framework, demonstrating the trade-off among the number of time samples, the number of measurement samples per time, and estimation accuracy.

\noindent 5. Application to glioblastoma scRNA-seq data: We apply our framework to scRNA-seq from glioblastoma (GBM) patients \cite{ruiz2025charting}, focusing on the gene regulatory programs governing differentiation of OPC-like cells (see Fig.~\ref{fig:finalgrn}). GBM is the most common malignant primary brain tumor, with poor prognosis and complex cellular heterogeneity \cite{grochans2022epidemiology, rodriguez2022glioblastoma}. To our knowledge, this is the first application of \textit{quantum-like} modeling to infer biologically relevant regulatory networks in cancer research. Our results reveal potential GRN structures and interaction patterns that reflect the cellular plasticity within malignant OPC-like populations, opening new avenues for quantum-driven exploration of information flows in biological systems.

\section{Main Results}
We have organized our findings into three main subsections. The first subsection presents our theoretical framework for Hamiltonian learning. 
The second subsection builds on this foundation by demonstrating how the same framework can be applied to infer GRNs. In the final subsection, we present numerical results on synthetic and real scRNA-seq data.

\subsection{Hamiltonian Learning using Time Dynamics of IC-POVM}

\begin{definition}[Statistical Model] Consider a Hamiltonian $\rmH(\bfw) = \sum_{j=1}^c w_j \rmH_j$ that acts on a quantum system of $n$ qudits, each of dimension $\sfd$. It consists of $c$ local terms $\{\rmH_j\}$ acting on a subset of qudits and a parameter vector $\bfw \in \calW_B := \{\bfw \in \RR^c: \|\bfw\|_2 \leq B\}$. For an evolution time $t$, define the corresponding evolved quantum state as follows:
$\rho_t(\bfw) := U_t(\bfw)\,\rho_0\,U_t^\dagger(\bfw),$ 
where $U_t(\bfw) = \exp\{-\imag\,t\,\rmH(\bfw)\}$, and \(\rho_0\) is the initial state. After the evolution, an IC-POVM measurement, denoted as $\Lambda:= \{\Lambda_{m}: m\in \calM\}$, is performed on each qudit, producing an outcome in a finite set $\calM$. This results in a tensored-product measurement acting
on the entire system of dimension $\sfD = \sfd^n$. 
This measurement generates an outcome vector \(\bfm:=(m^{(1)},\cdots, m^{(n)}) \in \calM^{n}\) at time $t$ according to the probability given as
\[\phi(\bfm|t,\bfw) := \tr(\Lambda_{\bfm} \rho_{t}(\bfw)),\]
where $\Lambda_{\bfm} = (\Lambda_{m^{(1)}} \otimes \cdots \otimes \Lambda_{m^{(n)}})$.

\end{definition}

Consider a set of $\sfN_t$ evolution times, denoted as $\calT:=\{t_1, t_2, \dots, t_{\sfN_t}\}$, independently drawn from a design distribution \(\pi(t)\) over \((0, t_{\max}]\). 
For each $t_i \in \calT$, we are given $\sfN_c$ measurement outcomes, denoted as $\{\bfm_{(i,1)}, \cdots, \bfm_{(i,\sfN_c)}\}$. Here, $\bfm_{(i,k)}$ is a $n$-length vector that denotes the $k$-th outcome collected at time $t_i$. These outcomes are independently and identically generated according to the probability distribution $\phi(\bfm|t_i,\bfw^*)$ determined by  (unknown) parameters $\bfw^* \in \calW_B$, i.e., each measurement arises from an independent preparation of 
$\rho_0$, followed by evolution for time 
$t_i$, and performing the IC-POVM $\Lambda$. Note that the outcomes $\{\bfm_{(1,k)},\bfm_{(2,k)}, \cdots,\bfm_{(\sfN_t,k)}\}$ are independent but not identically distrbuted. 

\vspace{5pt}
\textit{Objective:} Given the collection of measurement outcomes across all times, our goal is to obtain an estimate $\widehat{\bfw}:= \widehat{\bfw}\big(\{\bfm_{(i,k)}\}_{i=1,k=1}^{\sfN_t,\sfN_c}\big)
 \in \calW_B$ of parameter $\bfw^*$ such that for any $\delta\in (0,1)$, a bound of the following form holds with probability at least $(1-\delta)$: 
\[\|\widehat{\bfw}-\bfw^*\|_2\leq g(\delta, \sfN_c,\sfN_t).\]
To obtain an estimator $\widehat{\bfw}$ that meets the high-probability error bound stated above, 
we consider the empirical risk minimization framework. Given $\sfN_t$ time samples and $\sfN_c$ independent and identically distributed measurement outcomes for each time $t$, define the empirical and expected loss as follows
\[
\widehat{L}(\bfw)
:= \frac{1}{\sfN_t} \sum_{i=1}^{\sfN_t} 
\frac{1}{\sfN_c} \sum_{k=1}^{\sfN_c} 
\ell\!\left(\phi(\bfm_{(i,k)}|t_i, \bfw)\right)\qquad\eqand
\qquad
L(\bfw)
:= \EE_{t\sim\pi}\left[
\EE_{\bfm\sim\phi(\cdot|t,\bfw^*)}\!\left[\ell\!\left(\phi(\bfm|t,\bfw)\right)\right]\right],
\]
where $\ell(\cdot):= -\log(\cdot)$.
Then, we obtain the minimizer of empirical loss $\widehat{L}(\bfw)$ as 
\[
\widehat{\bfw}
:= \arg\min_{\substack{\bfw\in\calW_B}}
\widehat{L}(\bfw).
\]

\textit{Assumptions:} We state the following assumptions. 
$\textsf{A}(i)$ The expected loss function $L(\bfw)$ is $\mu_0$- strongly convex (SC) over $\calW_B$. $\textsf{A}(ii)$ The likelihood function $\phi$ is bounded,
$\phi(\bfm|t,\bfw) \;\ge\; p_{\min} > 0$ for all $\bfm\in\calM^n,t \in (0,t_{\max}],$ and $ \bfw\in\calW_B.$

Our main result is a sample-efficient learning algorithm for the QHL problem. We provide a detailed proof in the Method section. 
\begin{theorem}\label{thm:emp_strongconvexity} For the Hamiltonian learning problem described above,
fix a confidence $\delta>0$ and an empirical SC tolerance $\varepsilon>0$. Under the assumptions stated above, if the number of sampled times $\sfN_t$ and the number of measurement outcomes per time $\sfN_c$ are chosen such that

\begin{align*}
    \sfN_t &= \tilde{\calO}\Big(\frac{c^3}{\varepsilon^2} \log\Big(\frac{1}{\delta}\Big)\Big)
\eqand 
\sfN_c = \tilde{\calO}\Big(\frac{c^3}{\varepsilon^2}\,
\log\Big(\frac{\sfN_t}{\delta}\Big)\Big),
\end{align*}
where $\tilde{O}$ hides logarithmic factors in $c$ and $1/\varepsilon$, as well as fixed constants $\mu_0,B,t_{\max},\eqand p_{\min}$.
Then, with a probability at least $(1-2\delta)$, the empirical loss $\widehat{L}$ is $(1-2\varepsilon)\mu_0$-strongly convex over $\calW_B$. 

Furthermore, suppose $\widehat{\bfw} \in \mathrm{interior} (\calW_B)$ with probability 1. Then, with probability at least $(1-3\delta)$, the empirical minimizer satisfies
\begin{equation}
    \|\widehat{\bfw} - \bfw^*\|_2 \leq {\calO}\bigg(\frac{1}{(1\!-\!2\varepsilon)\mu_0}\sqrt{\frac{c}{\sfN_c} \!\log\Big(\frac{\sfN_t}{\delta}\Big)}\bigg).
\end{equation}
\end{theorem}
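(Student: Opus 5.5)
The proof has two parts: first show that the empirical Hessian $\nabla^2\widehat{L}$ stays uniformly close to the population Hessian $\nabla^2 L$ over $\calW_B$, then turn this into a parameter error bound by the usual strong-convexity argument at an interior minimizer.

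\textbf{Regularity.} Write $\phi(\bfm|t,\bfw)=\tr(\Lambda_{\bfm}U_t(\bfw)\rho_0U_t^\dagger(\bfw))$. Using the Duhamel formula for derivatives of $\exp\{-\imag t\rmH(\bfw)\}$, and normalizing so that $\|\rmH_j\|\le 1$, each first, second and third partial derivative of $\phi$ is bounded by a polynomial in $t_{\max}$. Since $\ell=-\log$, we have
\[
\nabla^2\ell(\phi)=-\frac{\nabla^2\phi}{\phi}+\frac{\nabla\phi\nabla\phi^{\top}}{\phi^2}.
\]
With $\textsf{A}(ii)$ this gives $\|\nabla^2\ell\|_{\max}\le K_2$ entrywise, hence operator norm at most $cK_2$. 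The third derivative gives $\nabla^2\ell$ a Lipschitz constant in $\bfw$ of order $c^{3/2}K_3$. Here $K_2,K_3$ depend only on $t_{\max}$ and $p_{\min}$.

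\textbf{Two-level concentration of the Hessian.} Define the intermediate loss
\[
\bar L(\bfw)=\frac{1}{\sfN_t}\sum_i \EE_{\bfm\sim\phi(\cdot|t_i,\bfw^*)}\,\ell(\phi(\bfm|t_i,\bfw)),
\]
and split
\[
\nabla^2\widehat L-\nabla^2 L=(\nabla^2\widehat L-\nabla^2\bar L)+(\nabla^2\bar L-\nabla^2L).
\]
\begin{itemize}
\item \emph{Time sampling.} The term $\nabla^2\bar L-\nabla^2 L$ is an average of $\sfN_t$ i.i.d.\ bounded matrices in $t$. At a fixed $\bfw$, Hoeffding on each of the $c^2$ entries (or matrix Hoeffding) plus a union bound gives operator-norm deviation $\lesssim c\sqrt{\log(c^2/\delta)/\sfN_t}$.
\item \emph{Measurement sampling.} For each $t_i$, $\nabla^2\widehat L-\nabla^2\bar L$ is an average of $\sfN_c$ i.i.d.\ bounded matrices. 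The same argument applies, now union bounded over the $\sfN_t$ times, which produces the $\log(\sfN_t/\delta)$ factor.
\item \emph{Uniformity over $\calW_B$.} Take an $\eta$-net of $\calW_B$ of size $(3B/\eta)^c$ and use the Hessian Lipschitz constant to pass from the net to all of $\calW_B$. The $\log$ of the net size contributes a factor $c\log(1/\eta)$.
\end{itemize}
Choosing $\eta\asymp \varepsilon\mu_0/c^{3/2}$ and requiring each deviation to be at most $\varepsilon\mu_0$ yields $\sfN_t,\sfN_c=\tilde\calO(c^3\varepsilon^{-2}\log(\cdot))$: $c^2$ comes from the squared operator-norm scale and $c$ from the net entropy. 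On the intersection of the two events, which has probability at least $1-2\delta$,
\[
\nabla^2\widehat L\succeq \nabla^2 L-2\varepsilon\mu_0 I\succeq(1-2\varepsilon)\mu_0 I,
\]
using $\textsf{A}(i)$.

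\textbf{Error bound.} Since $\widehat{\bfw}$ is interior, $\nabla\widehat L(\widehat{\bfw})=0$. Strong convexity of $\widehat L$ gives
\[
(1-2\varepsilon)\mu_0\|\widehat{\bfw}-\bfw^*\|_2^2\le \langle\nabla\widehat L(\bfw^*)-\nabla\widehat L(\widehat{\bfw}),\,\bfw^*-\widehat{\bfw}\rangle,
\]
so $\|\widehat{\bfw}-\bfw^*\|_2\le\|\nabla\widehat L(\bfw^*)\|_2/((1-2\varepsilon)\mu_0)$.

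At $\bfw^*$ the score has conditional mean zero: $\EE_{\bfm}\nabla\ell=-\sum_{\bfm}\nabla\phi=0$, because the probabilities sum to one. Hence $\nabla\widehat L(\bfw^*)$ is, conditionally on the times, an average of $\sfN_t\sfN_c$ independent zero-mean vectors with entries bounded by $K_1$. Vector Hoeffding or Bernstein with a union bound over coordinates and times gives
\[
\|\nabla\widehat L(\bfw^*)\|_2\lesssim\sqrt{c\log(\sfN_t/\delta)/\sfN_c}
\]
with probability at least $1-\delta$. (This is loose, since it does not use the averaging over $\sfN_t$, but it matches the stated rate.) A final union bound gives probability at least $1-3\delta$.

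\textbf{Main obstacle.} The hardest part is the uniform-in-$\bfw$ Hessian concentration: the net argument must yield exactly $c^3$ and not worse. That requires carefully bounding the third derivatives of $\log\phi$ via nested Duhamel integrals and tracking the $c$-dependence of the operator versus entrywise norms.
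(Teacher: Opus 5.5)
Your proposal is correct and follows essentially the same route as the paper. Both use the same within-time/across-time split of the Hessian deviation through the intermediate average $\frac{1}{\sfN_t}\sum_i \bar H_{t_i}$, pointwise concentration followed by an $\eta$-net with a Hessian Lipschitz constant of order $c^{3/2}$ (you use entrywise Hoeffding with a union bound over the $c^2$ entries where the paper uses matrix Bernstein, which only affects logarithmic factors), transfer of $\mu_0$ via \textsf{A}(i), and the bound $\|\widehat{\bfw}-\bfw^*\|\le\|\nabla\widehat L(\bfw^*)\|/((1-2\varepsilon)\mu_0)$, with the zero-mean score concentrated per time and union-bounded over the $\sfN_t$ times; your $c^3$ accounting ($c^2$ from the squared operator-norm scale of $\nabla^2\ell$, $c$ from the net entropy) matches the paper's factor $L_\phi^4\cdot c$ with $L_\phi\propto\sqrt{c}$.
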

\noindent\textit{Scaling with system size.}
The above bounds show that the required number of sampled times $\sfN_t$ and the number of measurement results per time $\sfN_c$ scale polynomially with the number of Hamiltonian parameters. When the number of parameters $c$ scales polynomially in  $n$, then both $\sfN_t$ and $\sfN_c$ also scale polynomially with the number of qudits. 

\vspace{5pt}
\noindent\textit{Insufficient $\sfN_t$ but $\sfN_c$ is large.} 
Note that for a given sampled time $t_i$, it may happen that different parameter values can induce the same measurement distributions. In particular, for a given time $t$, there may exist $\bfw \neq \bfw^*$ such that $\phi(\cdot | t, \bfw) = \phi(\cdot | t, \bfw^*).$
If the number of time samples $\sfN_t$ is too small, such non-identifiability can persist across the entire set of sampled times, so that distinct parameters remain indistinguishable from the observed data. Increasing $\sfN_c$ only reduces the variance in estimating the per-time expected loss. As a result, increasing $\sfN_c$ alone cannot compensate for insufficient identifiability caused by the limited time samples.

\vspace{5pt}
\noindent\textit{Sufficiently large $\sfN_t$ but $\sfN_c=1$.} If the number of time samples \(\sfN_t\) becomes very large, taking only a single measurement at each time point is insufficient for accurate parameter recovery. When the number of measurements per time is small, the empirical estimate of the measurement distribution at each time is dominated by sampling noise. Consequently, although increasing \(\sfN_t\) provides sufficient identifiability of the parameters, the information available at each time remains too noisy to be useful. 

We next establish a finite-sample uniform convergence guarantee for the empirical loss $\widehat{L}$ around the expected loss $L$ over $\calW_B$. The following theorem shows that as the number of time samples $\sfN_t$ and the number of measurements per time sample $\sfN_c$ increase, $\widehat{L}$ uniformly concentrates around $L$, thus justifying the use of empirical risk minimization as a faithful approximation to the expected risk minimization problem.

\begin{theorem}\label{thm:unif}
Under the assumption $\textsf{A}(ii)$, for any $ \delta >0$, the following non-asymptotic uniform deviation bound holds with probability at least $\big(1-4\delta)$:

\begin{align}
\!\!\sup_{\bfw\in\calW_B}\!\big|\widehat{L}(\bfw)-L(\bfw)\big|
\le\;
&\frac{72 \sqrt{\pi c}B L_p }{\sqrt{\sfN_t}}
+\frac{2\sqrt{2\log(2\sfD)}}{p_{\min}\sqrt{\sfN_c}} -3\log p_{\min}\bigg(\sqrt{\frac{2\log(2/\delta)}{\sfN_t}}+\sqrt{\frac{\log(2\sfN_t/\delta)}{2\sfN_c}}\bigg).
\label{eq:unif_bound_final}
\end{align}
\end{theorem}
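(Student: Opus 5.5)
We prove Theorem~\ref{thm:unif} by inserting an intermediate quantity that separates the two sources of randomness. Define the ``time-empirical, measurement-expected'' loss
\[
\bar L(\bfw) := \frac{1}{\sfN_t}\sum_{i=1}^{\sfN_t} \EE_{\bfm\sim\phi(\cdot|t_i,\bfw^*)}\big[\ell(\phi(\bfm|t_i,\bfw))\big],
\]
and split
\[
\sup_{\bfw}|\widehat L-L| \le \sup_{\bfw}|\widehat L-\bar L| + \sup_{\bfw}|\bar L - L|.
\]
Under $\textsf{A}(ii)$ every loss value satisfies $0\le \ell(\phi)\le -\log p_{\min}$, so all summands are bounded. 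Each supremum is controlled by the standard template: a McDiarmid bounded-differences step gives deviation from its expectation, and a symmetrization plus Rademacher-complexity step bounds the expectation. A union bound over the failure events gives total probability $4\delta$.

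\textbf{Time term $\sup|\bar L-L|$.} The $t_i$ are i.i.d.\ from $\pi$, and each summand lies in $[0,-\log p_{\min}]$. McDiarmid therefore gives a deviation $-\log p_{\min}\sqrt{\log(2/\delta)/(2\sfN_t)}$, up to the constant absorbed into the stated $-3\log p_{\min}\sqrt{2\log(2/\delta)/\sfN_t}$. Symmetrization bounds the expectation by twice the Rademacher complexity of the class $\{t\mapsto \EE_{\bfm}[\ell(\phi(\bfm|t,\bfw))]:\bfw\in\calW_B\}$. To bound this complexity I would show that the class is Lipschitz in $\bfw$:
\[
\|\nabla_{\bfw}\phi(\bfm|t,\bfw)\|\le L_p,
\]
where $L_p$ comes from the Duhamel formula and is of order $t_{\max}\max_j\|\rmH_j\|$ times a POVM norm. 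The logarithm contributes a factor $1/p_{\min}$, or this is absorbed into $L_p$. I would then apply Dudley's entropy integral over the ball $\calW_B\subset\RR^c$, using covering numbers $(3B/\epsilon)^c$. This produces a term of order $\sqrt{c}\,B L_p/\sqrt{\sfN_t}$, and the constant $72\sqrt{\pi}$ matches Dudley's integral with the Gaussian-integral evaluation $\int_0^1\sqrt{\log(1/u)}\,du=\sqrt{\pi}/2$. This entropy-integral computation, including getting the Lipschitz constant right through the matrix exponential, is the main technical obstacle.

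\textbf{Measurement term $\sup|\widehat L-\bar L|$.} I would condition on $\calT$ and work per time: it suffices to bound $\max_i\sup_{\bfw}|\widehat L_i(\bfw)-\bar L_i(\bfw)|$, where $\widehat L_i$ and $\bar L_i$ are the per-time averages. For fixed $i$, McDiarmid over the $\sfN_c$ i.i.d.\ outcomes gives deviation $-\log p_{\min}\sqrt{\log(2\sfN_t/\delta)/(2\sfN_c)}$ after a union bound over the $\sfN_t$ times. For the expectation I would avoid a parametric cover and instead use that $\bfm$ ranges over a finite set. Writing
\[
\widehat L_i-\bar L_i=\sum_{\bfm}(\hat p_i(\bfm)-p_i(\bfm))\,\ell(\phi(\bfm|t_i,\bfw)),
\]
the symmetrized process reduces to a Rademacher average over a class indexed by the outcome, with envelope controlled by $1/p_{\min}$ via Lipschitzness of $\log$ on $[p_{\min},1]$. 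Talagrand's contraction then passes to the linear class of probability vectors $\{\tr(\Lambda_{\bfm}\rho)\}$. That class is a subset of the convex hull of functionals $\rho\mapsto\tr(\Lambda_{\bfm}\rho)$, which I would dominate by a finite maximum over $2\sfD$ extreme sign patterns (the eigen-directions of the $\sfD$-dimensional state space with $\pm$). Massart's finite-class lemma then yields $\sqrt{2\log(2\sfD)}/\sqrt{\sfN_c}$, and multiplying by $2/p_{\min}$ gives the stated term. Justifying this reduction to $2\sfD$ atoms is the delicate point. If it fails, I would fall back to a matrix-Rademacher bound
\[
\EE\Big\|\frac1{\sfN_c}\sum_k\sigma_k \Lambda_{\bfm_k}\Big\|_{op}\le\sqrt{2\log(2\sfD)/\sfN_c},
\]
via the noncommutative Khintchine/matrix Hoeffding inequality, which naturally produces exactly the $\log(2\sfD)$ factor. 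Combining the two terms and the four failure events gives \eqref{eq:unif_bound_final}.
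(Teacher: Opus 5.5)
Your proposal is essentially the paper's proof. It uses the same split into the across-time deviation $\sup_{\bfw}\big|\frac{1}{\sfN_t}\sum_i L_{t_i}(\bfw)-L(\bfw)\big|$ and the within-time deviations, which are handled through $\max_i$ with a union bound over the $\sfN_t$ times. Each term gets McDiarmid plus symmetrization. The time class is bounded by Dudley's entropy integral over $\calW_B$ with Lipschitz constant $L_\phi/p_{\min}$; this is the $L_p$ in the statement. The measurement class is bounded by contraction by $1/p_{\min}$.

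Do not use the Massart ``$2\sfD$ atoms'' reduction, because it fails. The quantity $\sup_{\rho}\tr\big((\sum_k\sigma_k\Lambda_{\bfm_k})\rho\big)=\lambda_{\max}\big(\sum_k\sigma_k\Lambda_{\bfm_k}\big)$ is attained at an eigenvector that moves with $\sigma$. Since IC-POVM elements cannot all commute, no fixed finite set of states dominates it for every $\sigma$. The reduction is valid only in the commuting, classical case, where the $\pm$ basis states suffice. Your fallback, the noncommutative Khintchine bound with $\|\sum_k\Lambda_{\bfm_k}^2\|\le\sfN_c$, is exactly the paper's Lemma~\ref{lem:khintchine} route, and it is the step that actually closes the argument.

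Two minor points:
\begin{itemize}
\item The Euclidean Lipschitz constant of $\phi$ carries a factor $\sqrt{c}$: $\|\nabla_{\bfw}\phi\|\le 2t_{\max}\sqrt{c}\,\|\rmH\|_{\infty,\max}=L_\phi$. It is not merely of order $t_{\max}\max_j\|\rmH_j\|$.
\item Your Rademacher bounds hold for every sample, so you can bound the expectation directly, as your template does. This makes the paper's extra McDiarmid step from expected to empirical Rademacher complexity unnecessary, and it only improves the constants multiplying $-\log p_{\min}$.
\end{itemize}
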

\noindent An important implication of this uniform convergence result is that it provides a key ingredient for establishing asymptotic strong consistency of the empirical minimizer as $\sfN_t, \sfN_c \to \infty$ (see \cite[Theorem~4.10]{wainwright2019high}).

\subsection{Application to Gene Regulatory Network Inference}
\captionsetup[figure]{font=small, labelfont=bf, width=\textwidth}
\begin{figure}[!htb]
    \centering
    \includegraphics[width=\textwidth]{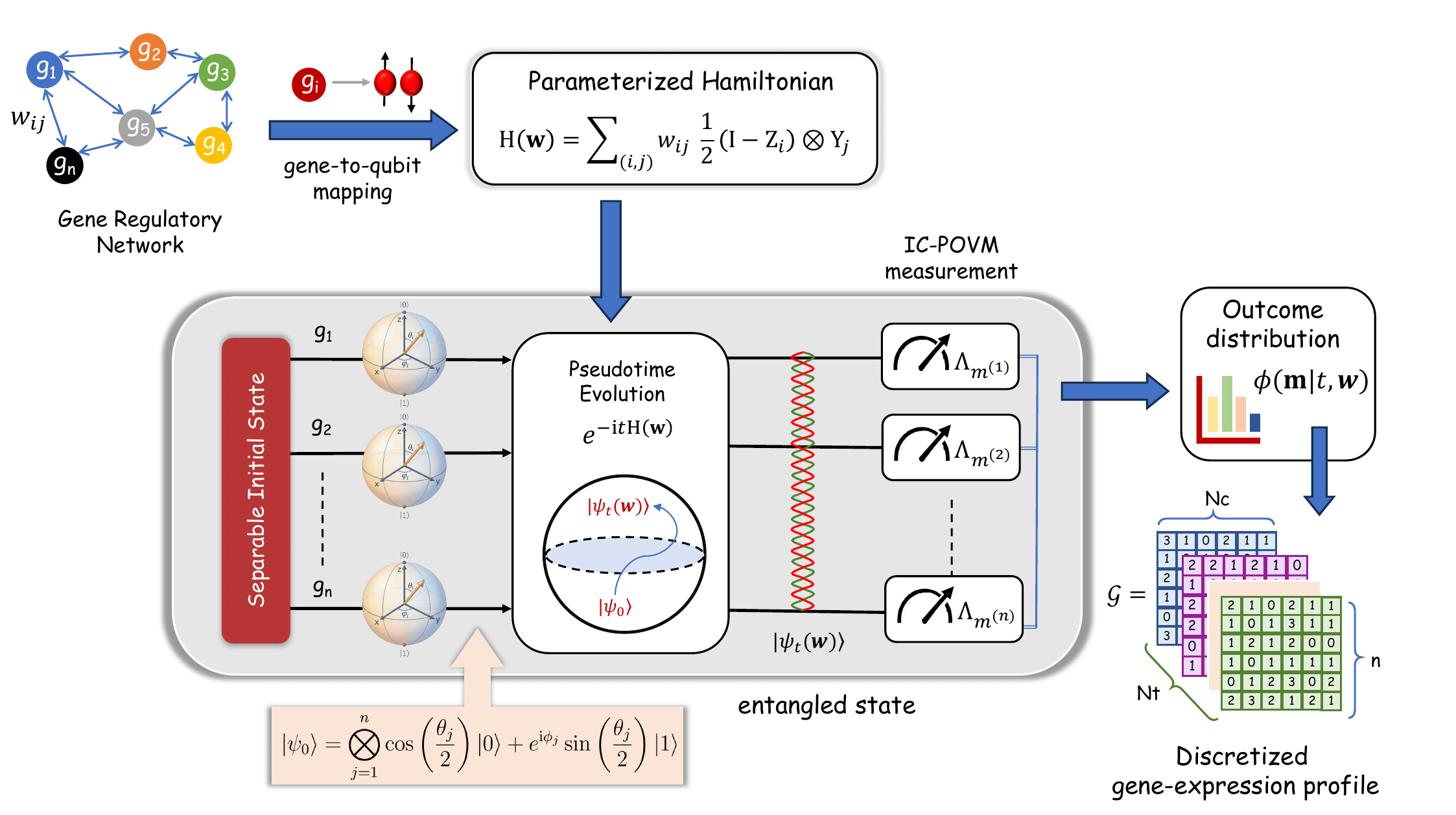}
    \caption{Overview of the quantum Hamiltonian-based gene-expression model (QHGM). A gene regulatory network (GRN) is mapped
to a parameterized Hamiltonian $\rmH(\bfw)$, where the presence of gene 
$\genei$ induces the action of a Pauli-
$\rmY$ operator on gene 
$\genej$ with regulatory weights $w_{ij}$. The model begins from an initial separable state, representing independent gene states.
As the system evolves along pseudotime, correlations between genes are gradually introduced, resulting
in an entangled quantum state. At each pseudotime point, this state is measured using a
fixed single-qubit IC-POVM, producing a probability distribution
$\phi(\bfm|t,\bfw)$ over measurement outcomes. Collecting repeated measurements outcomes at each pseudotime point yields discretized
gene-expression profiles, denoted as $\calG$, with dimension $(\sfN_t,\sfN_c,n)$ that serve as the observable data for inference. Here, $\sfN_t$ is the number of pseudotime bins, $\sfN_c$ is the number of independently measured cells per
bin, and $n$ is the number of genes in the network.
}
    \label{fig:QHGM}
\end{figure}

Building on our QHL framework, we apply it to GRNs and obtain a statistical generative model for gene-expression data, which we call the \emph{quantum Hamiltonian-based gene-expression model}  (QHGM) (see Fig.~\ref{fig:QHGM}). 
In QHGM, for simplicity of exposition, we represent each gene as a qubit. The computational basis states $\ket{1}$ and $\ket{0}$ correspond to the transcriptional state of the gene, indicating whether it is expressed or unexpressed in a cell, respectively.
The model consists of three main components: $(i)$ a Hamiltonian that encodes the regulatory structure of the GRN, $(ii)$ initial state preparation and pseudotime state evolution, and $(iii)$ an IC-POVM measurement, each described in detail below. 

\noindent \textit{1. Hamiltonian for GRNs:}  
The regulatory interactions between genes are encoded in a parameterized Hamiltonian as
\begin{equation}
    \rmH(\bfw)
= \sum_{(i,j)\in \calE}
 w_{ij} \; \tfrac{1}{2}(\rmI  -  \rmZ_i) \tensor \rmY_j, \label{eqn:grn_hamiltonian}
\end{equation}
where $\calE := \{(i,j) : i,j \in \{1,2,\cdots,n\}, \, i \neq j\}$ and $n$ is the number of genes in the network. 
The weights $w_{ij}$ capture both the strength and direction of regulatory influence, with each $w_{ij}$ quantifying the effect of gene $\genei$ on gene $\genej$. 
A positive $w_{ij}$ indicates activation, meaning that the presence of gene $\genei$ promotes the expression of $\genej$, whereas a negative $w_{ij}$ implies that $\genei$ suppresses $\genej$. 
Each coefficient is bounded, i.e., $|w_{ij}| \leq w_{\max}$, to maintain biologically meaningful interaction strengths \cite{szklarczyk2025string,muller2023expanding,wingender1996transfac}.
The magnitude $|w_{ij}|$ reflects the strength of regulation, with larger absolute values corresponding to stronger activating or repressing effects. 
We exclude the self-interaction links, as they correspond to intrinsic gene dynamics rather than inter-gene regulation. 
We provide additional details on the construction of the GRN Hamiltonian in the Methods section. 

\noindent \textit{2. Initial State Preparation and Pseudotime Evolution:}  The QHGM begins by initializing the system in a separable state
\begin{equation}
    \ket{\psi_0} := \bigotimes_{i=1}^{n} \ket{\psi^{(i)}_0}, 
\quad 
\text{where} \quad 
\ket{\psi^{(i)}_0} = \cos\theta_i\,\ket{0} + e^{i\phi_i}\sin\theta_i\,\ket{1}. \label{eqn:initial_state}
\end{equation}
Each qubit-state $\ket{\psi^{(i)}_0}$ represents the transcriptional state of $\genei$ as a point on the Bloch sphere, 
where 
\(\sin^2\theta_i\) and \(\cos^2\theta_i\) denote the prior probabilities of $\genei$ being expressed or unexpressed, respectively. {The angle $\phi_i$ encode its initial directional or kinetic phase information \cite{la2018rna}.} This initialization reflects an assumption that genes exist independently without correlations, with correlations emerging dynamically through the regulatory interactions 
encoded in the Hamiltonian. The system then evolves following the Schrödinger equation 
\(
{\mathrm{d}\ket{\psi_t}}/{\dt} = -\imag \rmH(\bfw) \ket{\psi_t},
\)
with the time-independent Hamiltonian \(\rmH(\bfw)\), yielding a final state $\ket{\psi_t(\bfw)} = \exp\{-\imag t\,\rmH(\bfw)\}\,\ket{\psi_0},$
where $t$ represents the pseudotime, which corresponds to the temporal progression of cell-state transition. Pseudotime is the approximate position of cells along a trajectory that quantifies the relative progression of the underlying biological process \cite{trapnell2014dynamics}. 
  
\noindent\textit{3. Measurement and Gene Expression Readout:} Following pseudotime evolution, 
each qubit (gene) is measured using a single-qubit IC-POVM 
$\{\Lambda_m\}_{m=0}^3$ given as  
\begin{align}\label{eqn:ic-povm_qgrn}
\Lambda_0&=\tfrac14\!\big(\rmI+\tfrac12 \rmX+\tfrac{\sqrt3}{2} \rmZ\big),
\hspace{49pt}\Lambda_1=\tfrac14\!\big(\rmI-\tfrac12 \rmX-\tfrac{1}{\sqrt2} \rmY+\tfrac12 \rmZ\big),\nonumber\\
\Lambda_2&=\tfrac14\!\big(\rmI-\tfrac12 \rmX+\tfrac{1}{\sqrt2} \rmY-\tfrac12 \rmZ\big),
\hspace{20pt}\Lambda_3=\tfrac14\!\big(\rmI+\tfrac12 \rmX-\tfrac{\sqrt3}{2} \rmZ\big).
\end{align}
We provide further details on the construction of IC-POVM for GRN in Methods. 
The outcomes of the IC-POVM measurement provide a discretized representation of the gene expression at a given pseudotime. For example, the measurement label for the $j$-th qubit, denoted by $m^{(j)} \in \{0,1,2,3\}$, represents a discretized expression level of gene $\genej$: \( m^{(i)} = 0 \) indicates that gene \( \genej \) is not expressed, while \( m^{(j)} = 3 \) signifies that gene \( \genej \) is highly expressed in a cell.
The overall measurement outcome vector is denoted as $\bfm = (m^{(1)}, m^{(2)}, \ldots, m^{(n)})$ and spans $4^n$ possible joint outcomes. The probability distribution over these joint outcomes at pseudotime $t$ is given by  
$\phi(\bfm|t,\bfw) = \<\psi_{t}(\bfw)|\Lambda_{\bfm}|\psi_{t}(\bfw)\>.$

Collecting repeated measurement outcomes at each pseudotime $t_i\in\{t_1,\cdots,t_{\sfN_t}\}$ yields a discretized gene-expression dataset
\[\calG := \{\bfm_{(i,k)}\}_{i=1,k=1}^{\sfN_t,\sfN_c} \in \{0,1,2,3\}^{\sfN_t \times \sfN_c \times n},\] which serves as the observable data
for inference. Here, $\sfN_t$ denotes the number of pseudotime bins (discrete time points),
$\sfN_c$ denotes the number of independently measured cells (measurement outcomes) per bin,
and $n$ is the number of genes (qubits) in the network. This concludes the description of QHGM.

\captionsetup[figure]{font=small, labelfont=bf, width=1\textwidth}
\begin{figure}[htbp]
    \centering
    \includegraphics[width=0.85\linewidth]{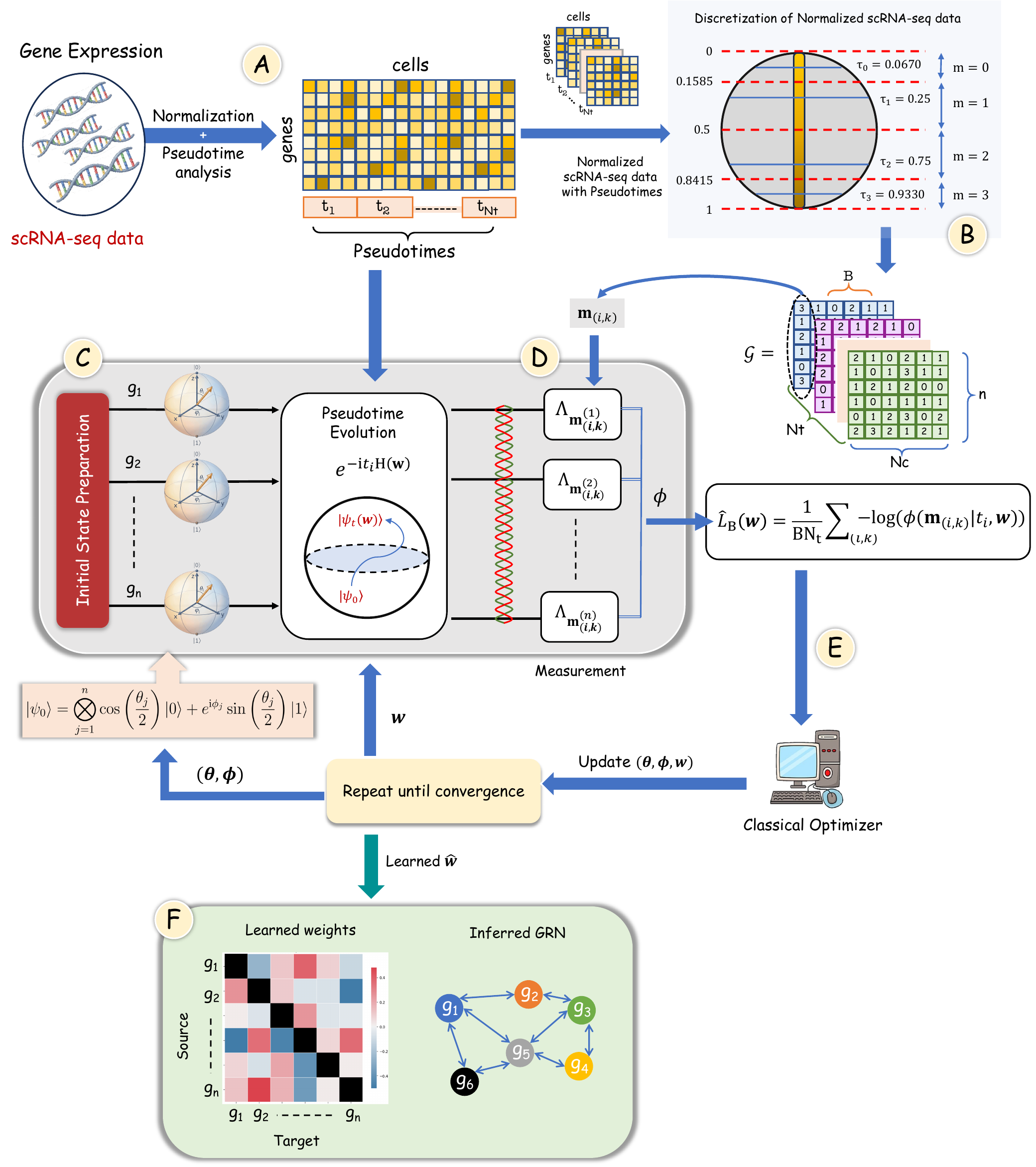}
    \caption{VQ-Net. \textbf{(A)} Raw scRNA-seq data are preprocessed, normalized, and assigned pseudotime values,
providing a temporal ordering of cells along a developmental trajectory.
\textbf{(B)} The normalized pseudotime-ordered scRNA-seq data are converted into four discrete values, denoted as $\calG$, where $\sfN_t$ denotes the number of pseudotime bins,
$\sfN_c$ denotes the number of independently measured cells per bin,
and $n$ is the number of genes in the network.
\textbf{(C)} Prepares a separable initial state and evolves
under the parameterized Hamiltonian
$\rmH(\bfw)
\,=\, \sum_{(i,j)}
w_{ij}\,\tfrac{1}{2}\bigl(\rmI - \rmZ_i\bigr)\,\otimes\,\rmY_j,$
which encodes directed regulatory interactions.
\textbf{(D)} For each pseudotime bin $t_i$, the single-qubit IC-POVM is applied as measurement
observables on entangled evolved states conditioned on the discretized scRNA-seq data $\bfm_{(i,k)}$ as input. Here, $k$ is the index of the cell in the corresponding pseudotime bin.
\textbf{(E)} The parameters $(\boldsymbol{\theta},\boldsymbol{\phi},\bfw)$ are optimized
by minimizing the mini-batch empirical loss using a classical optimizer.
\textbf{(F)} The learned weights $\bfw$ are visualized as a signed, asymmetric weight
matrix, from which the GRN is inferred.
} 
    \label{fig:algorithm}
\end{figure}

\vspace{10pt}
\noindent\textit{Learning regulatory coefficients $\bfw$:} Building on QHL using empirical risk minimization, we design a variational quantum network inference algorithm (VQ-Net) to learn the weights 
from the discretized scRNA-seq data collected at multiple pseudotime points. 
In this algorithm, the normalized scRNA-seq data is first converted into four discrete values. During each iteration, the algorithm minimizes the average negative log-likelihood function across all pseudotime bins. Furthermore, if the priors $\theta_i$ and $\phi_i$ are not available for the initial state preparation, 
the algorithm jointly learns these parameters alongside $\bfw$.   
Further implementation details are provided in the {Methods} and summarized in Fig.~\ref{fig:algorithm}.

\subsection{Numerical Experiments and Results}
We evaluate the proposed VQ-Net through numerical experiments on synthetic data generated using QHGM. 
Moreover, we demonstrate the usage of QHGM on the GBMap scRNA-seq data to infer GRNs with both known and novel regulatory interactions.

\captionsetup[figure]{font=small, labelfont=bf, width=\textwidth}
\begin{figure}[p]
    \centering
    \includegraphics[width=0.95\textwidth]{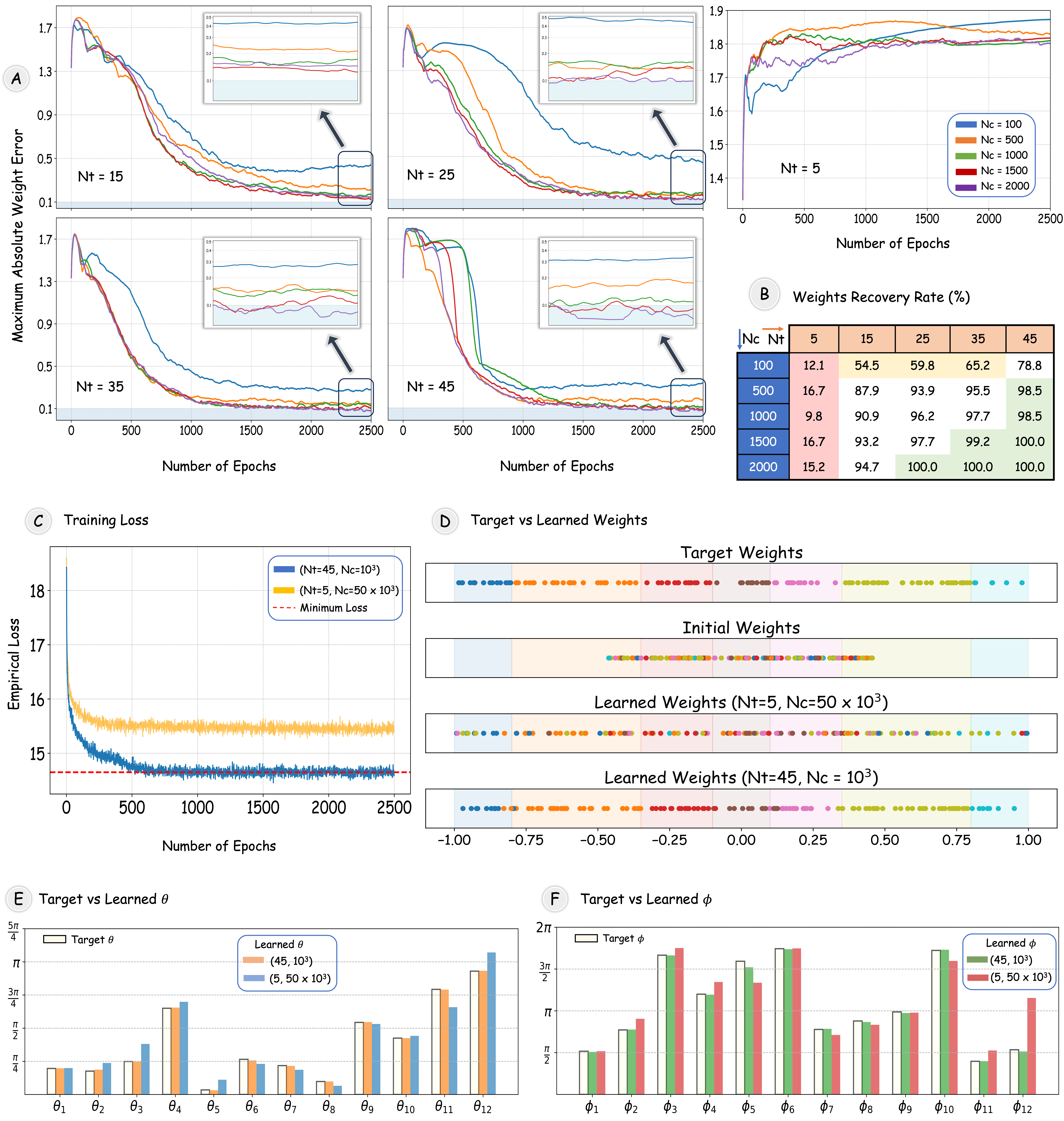}
    \caption{Performance of VQ-Net on the synthetic data generated using QHGM. 
\textbf{(A)} Maximum absolute weight error versus epochs for different numbers of
sampled times $\sfN_t$ and measurements per time  $\sfN_c$.  
\textbf{(B)} Percentage of recovered weights within $10\%$ error, illustrating
the tradeoff between empirical identifiability (controlled by $\sfN_t$) and sampling
variance (controlled by $\sfN_c$).  
\textbf{(C)} Batch empirical loss during training is computed using a mini-batch of size $20$ and $200$ for $\sfN_t=45 \eqand 5$, respectively. The results indicate convergence to
the theoretical optimum for $\sfN_t = 45$.  
\textbf{(D)} Learned weights compared to ground truth, demonstrating strong
recovery for $\sfN_t = 45$ and dispersion for $\sfN_t = 5$.  
\textbf{(E--F)} Recovered initial-state parameters
$(\boldsymbol{\theta},\boldsymbol{\phi})$ with relative errors of $(0.0098, 0.0194) $  for $(\sfN_t,\sfN_c) = (45, 10^3)$ and $(0.1638,0.180)$ for $(\sfN_t,\sfN_c) = (5,50\times 10^3)$, respectively.
}
    \label{fig:synthetica_data_summary}
\end{figure}

\noindent \textbf{Synthetic Data}: To evaluate the performance of {VQ-Net}, we generate synthetic time-resolved measurement data using the QHGM as the ground-truth generative model.  We consider a $12$-qubit system with
$132$ randomly chosen weights $w_{ij} \in [-1,1]$.  
A detailed description of the simulation setup, including data generation and training, is provided in Methods. 
Fig.~\ref{fig:synthetica_data_summary} summarizes the empirical tradeoff between the number
of times samples $\sfN_t$ and the number of measurement per time sample $\sfN_c$. 

In Fig.~\ref{fig:synthetica_data_summary}A, we test different $\sfN_t \in \{5,15,25,35,45\}$ and
$\sfN_c \in \{100,500,1000,1500,2000\}$ and report the maximum absolute weight error between learned weights and true weights ($\max_{(i,j)}|w_{ij}-w^{\text{true}}_{ij}| \in [0,2]$) against training epochs. 
For $\sfN_t=5$, the error diverges and eventually saturates near the maximum error of 2 for all values of $\sfN_c$. This suggests potential issues with insufficient empirical identifiability due to fewer time samples. 
In contrast, when $\sfN_t$ exceeds 15, the error decreases monotonically and converges towards zero. Additionally, more stable convergence and reduced error are observed as $\sfN_c$ increases.
In Fig.~\ref{fig:synthetica_data_summary}B, we report the percentage of recovered
weights within a tolerance of $0.1$ (i.e., less than $10\%$ maximum absolute error). For all $\sfN_t $ greater than $ 5$,
the recovery rate improves monotonically as $\sfN_c$ increases. However, when the number of
measurements per time is small ($\sfN_c = 100$), we observe a sharp degradation in
performance, up to a $30\%$ drop in recovery rate (highlighted in yellow), relative to
$\sfN_c > 100$. The sudden decrease can be attributed to high sample variance at low $\sfN_c$, which results in noisy empirical estimates of the true likelihood function, even when there are enough time samples. In this scenario, the optimizer is significantly influenced by sampling noise rather than structural information, causing many weights to exceed the 0.1 tolerance level. 
For $\sfN_t=5$, the recovery rate decreases significantly, by as much as \(80\%\) compared to larger values of \(\sfN_t\) (highlighted in red). This decline is likely due to parameter indistinguishability when the number of sampled time points is small. Specifically, multiple distinct weight configurations produce nearly identical measurement statistics across the observed times, resulting in a drop in weight recovery accuracy that does not reliably improve even with increasing \(\sfN_c\). For example, the relative error, defined as ${\|\bfw^{\text{learned}}-\bfw^{\text{true}}\|}/{\bfw^{\text{true}}}$, for \((\sfN_t,\sfN_c) = (5,50\times 10^3)\) is 1.0055, whereas it is 1.251 for \((\sfN_t,\sfN_c) = (5,10^2)\).

Fig.~\ref{fig:synthetica_data_summary}C shows the batch empirical loss, computed as the
average negative log-likelihood over a mini-batch, versus training epochs. 
In this figure, $(\sfN_t,\sfN_c)=(45,10^3)$
(blue) converges to the theoretical minimum (dashed line), which corresponds to the expected negative log-likelihood loss averaged over the sampled times. In contrast, $(\sfN_t,\sfN_c)=(5, 50\times 10^3)$
(yellow) remains trapped above the optimum. The learned weights (color-coded by target bins) corresponding to these two cases are shown in Fig.~\ref{fig:synthetica_data_summary}D.  For $\sfN_t=45$, the learned weights
cluster tightly around their true values. For $\sfN_t=5$, the learned weights remain dispersed
despite receiving the same initialization. Finally, in Fig.~\ref{fig:synthetica_data_summary}E and F, we inspect the reconstructed initial-state parameters $(\boldsymbol{\theta},\boldsymbol{\phi})$.
For $(45,10^3)$, the relative errors for $\boldsymbol{\theta}$ and $\boldsymbol{\phi}$ are
$0.0098$ and $0.0194$, respectively. Interestingly, even when the weights are not accurately
estimated at $(5,50\times10^3)$, the initial-state parameters are learned with
a lower error of $0.1638$ for $\boldsymbol{\theta}$ and $0.180$ for $\boldsymbol{\phi}$. This indicates that
{VQ-Net} can simultaneously infer both the weights and the initial
state, and the estimation of initial-state parameters is comparatively more robust to
changes in $\sfN_t$ and $\sfN_c$ than the recovery of weights. These observations align closely with our theoretical analysis: the indistinguishability of the parameters is governed by the number of times sampled $\sfN_t$, while
the sampling variance of the empirical loss is controlled by the number of measurements per time $\sfN_c$. Therefore, simply increasing \( \sfN_t \) or \( \sfN_c \) on its own is not enough to drive the weight estimation error to zero. To achieve consistent recovery, it is necessary to have both a sufficient number of time samples and a sufficient number of measurements at each time point.

\vspace{10pt}
\noindent
\textbf{GBMap scRNA-seq data}: We implemented the VQ-Net to build QHGM using the {core GBmap}  scRNA-seq data comprising 16 independent studies and 109 patients, post standardization and batch normalization \cite{ruiz2025charting}. We focused on Differentiated-like and Stem-like cells in the {annotation level 2} classification, yielding approximately 127{,}000 cells in total. At {annotation level 3}, these cells were further classified as Astrocyte-like (AC-like) (50{,}847), Mesenchymal-like (MES-like) (33{,}167), Neural Progenitor Cell-like (NPC-like) (22{,}117), and Oligodendrocyte Progenitor Cell-like (OPC-like) (21{,}390). 
 Pseudotime is inferred using the VIA algorithm \cite{stassen2021generalized}, with an OPC-like cell as the root representing the putative progenitor state and AC-like and MES-like cells as terminal endpoints (see Fig.~\ref{fig:finalgrn}A). Cells are embedded in UMAP space based on the 5,000 most variable genes, and a streamplot is overlaid on the embedding to visualize the dominant probabilistic flow along pseudotime. Higher pseudotime values correspond to more differentiated states, and the trajectory highlights branching points and intermediate transitional states along the differentiation continuum.  To examine the gene regulatory dynamics along this trajectory, we considered a set of $14$ genes: \textsf{BCAN}, \textsf{STMN1}, \textsf{HES6}, \textsf{ETV1}, \textsf{CADM2}, \textsf{MMP16}, \textsf{CKB}, \textsf{LIMA1}, \textsf{VCAN}, \textsf{JPT1}, \textsf{ASCL1}, \textsf{CDK4}, \textsf{TUBB2B}, and \textsf{NCAM1}. A detailed description of the data preparation and training details is provided in Methods. The learned weights are shown in Fig.~\ref{fig:finalgrn}B, which illustrates the median learned weights across independent training instances. To quantify interaction selectivity, we computed the coefficient of variation (CV), defined as the ratio of the standard deviation to the absolute value of the mean, separately for positive and negative learned weights (Fig.~\ref{fig:finalgrn}C). A high CV indicates selective regulation, in which a gene strongly influences only a subset of the gene set, whereas a low CV reflects broad and more uniform regulatory influence across the gene set. Fig.~\ref{fig:finalgrn}D shows the inferred GRN, consisting of the strongest regulatory interactions, defined by row-wise selection of the top 15th percentile of positive and negative weights. 

\captionsetup[figure]{font=small, labelfont=bf, width=1\textwidth}

\begin{figure}[p]
    \centering
    \includegraphics[width=0.95\linewidth]{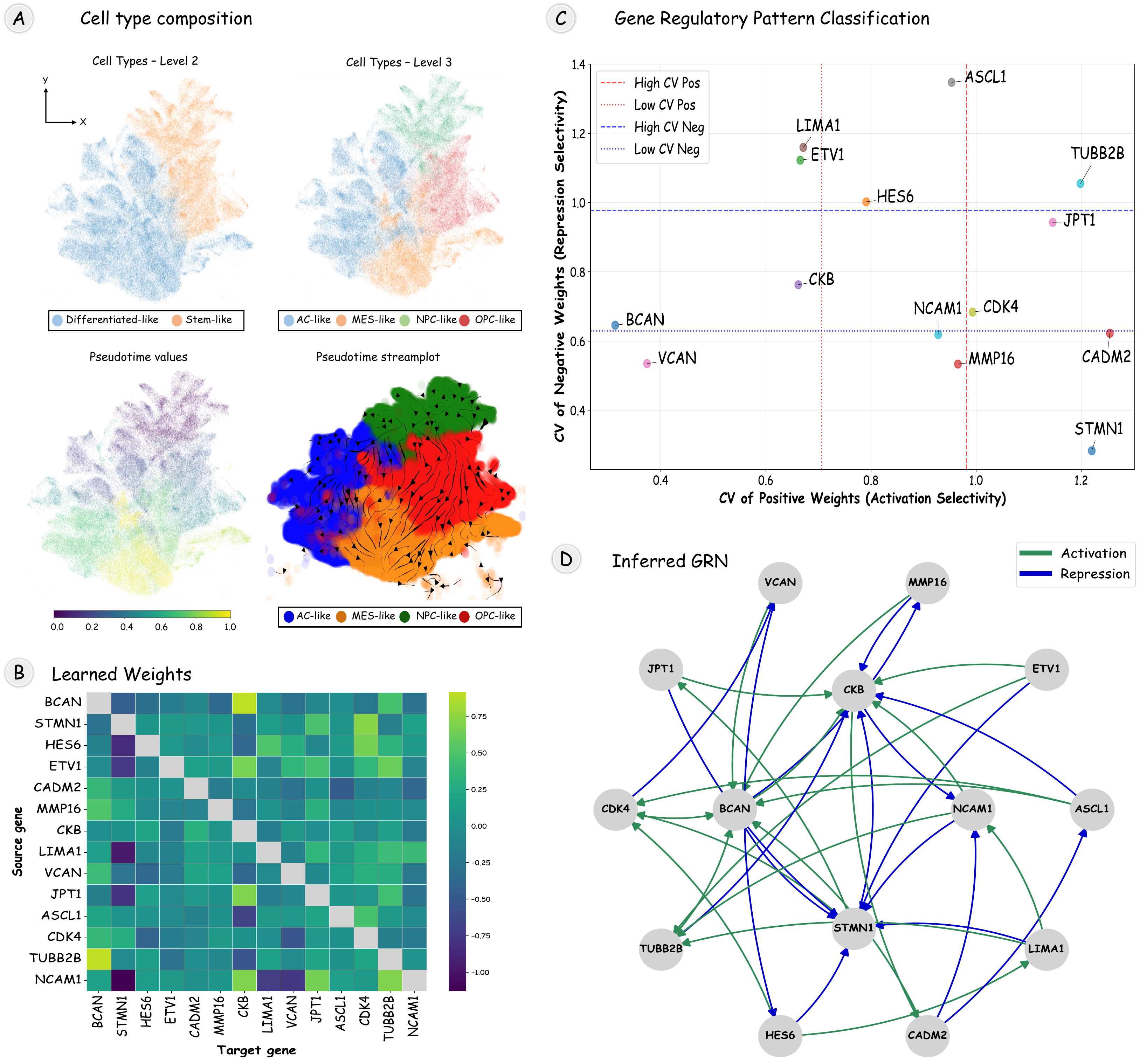}
    \caption{
    Inferred GRN from QHGM on GBMap scRNA-seq data. 
    \textbf{(A)} Cells are shown at increasing annotation granularity (Level 2 and Level 3) at UMAP coordinates. The trajectory inference is set with an OPC-like cell as the root node and progresses toward AC-like and MES-like cell types specified as terminal endpoints. The streamplot overlaid on the UMAP embedding visualizes this progression, showing the dominant probabilistic flow of cells along pseudotime, where higher pseudotime values correspond to more differentiated states.
    \textbf{(B)} Heatmap of median learned weights across 10 simulations, 
    which summarizes central tendencies across simulations.
    \textbf{(C)} Gene-wise classification of regulatory behavior based on the coefficient of variation (CV) of positive and negative weights across simulations. The joint CV analysis distinguishes genes with selective, stable activation or repression from those exhibiting high variability, providing a quantitative measure of regulatory consistency and context dependence.
    \textbf{(D)} GRN is inferred from the learned weights by selecting the top 15th percentile of positive (activation) and negative (repression) Hamiltonian weights separately, visualized as a directed network. Green and blue edges denote activating and repressing interactions, respectively.
    }
    \label{fig:finalgrn}
\end{figure}

To validate the accuracy of these inferred interactions, we compared the weights learned via VQ-Net against known regulatory pairs reported in the literature. As seen in Fig.~\ref{fig:finalgrn}D, QHGM correctly identifies ASCL1’s broad regulatory influence and recovers strong interactions with targets such as \textsf{BCAN}, \textsf{CDK4}, and {\textsf{CKB}}, confirming its established role in driving GBM heterogeneity \cite{myers2024transcription}. Nodes like \textsf{STMN1}, integrate various regulatory inputs that have opposing effects. This is consistent with the known sensitivity of microtubule dynamics and cell cycle regulators to competing upstream signals \cite{bao2017high}. The cyclin-dependent kinase axis, particularly \textsf{CDK4}/\textsf{6}, is well-established as a driver of GBM cell proliferation and cell-cycle progression, with inhibition altering subtype programs in proneural stem-like cells supporting dynamic reprogramming under perturbation \cite{li2017CDK4} \cite{michaud2010pharmacologic}. Proliferation-associated genes (\textsf{CDK4}, \textsf{STMN1}, \textsf{TUBB2B}) are coupled to extracellular matrix (ECM) and cell-adhesion components (\textsf{BCAN}, \textsf{\textsf{VCAN}}, \textsf{NCAM1}, \textsf{MMP16}). This aligns with the literature on ECM proteoglycans (e.g., versican/\textsf{VCAN}), which promotes glioma proliferation and invasion \cite{wei2024vcan}, and adhesion molecules that mediate GBM cell migration in a context-dependent manner \cite{turaga2016adhering}. Beyond the central hub structure, the model captures specific pairwise regulatory dynamics that are well supported by functional biology studies. 
Also, QHGM recovers biologically meaningful feedback regulatory loops. Notably, the model identifies a positive association between \textsf{VCAN} and \textsf{BCAN}. This finding aligns with analyses comparing high versus low \textsf{VCAN} expression groups, where increased VCAN levels were accompanied by upregulation of \textsf{BCAN} \cite{wei2024vcan}. These results suggest that our \textit{quantum-like} model demonstrates high precision in recovering interactions.

Overall, QHGM reveals extensive feedback loops and context-dependent sign switching, particularly involving lineage regulators such as \textsf{ASCL1}, which has been implicated in multiple biological contexts, such as proneural transcriptional programs \cite{parkinson2022proneural} and the neural stem cell-like features of glioma stem cells \cite{cheng2019inhibition}. Such context dependency indicates that regulatory effects are not fixed, instead vary with the global state of the cell, consistent with the heterogeneity observed in GBM stem-like populations and dynamic state transitions \cite{suva2014reconstructing}. Moreover, the order and combination of regulatory events, such as lineage specification versus microenvironmental engagement, can yield distinct downstream outcomes, supporting the notion that OPC-like GBM cells occupy a continuum of regulatory states rather than discrete phenotypes. The inferred GRN reveals that OPC-like GBM states are governed by a densely interconnected regulatory architecture rather than by linear or modular signaling pathways. Collectively, these results suggest that regulatory information in OPC-like cells propagates through interfering, context-dependent, and non-separable pathways, producing dynamic cellular states that are not adequately captured by classical additive or modular gene regulatory models. 

\section{Discussion}
In this work, we introduce a framework for Hamiltonian learning based on time-resolved measurement outcomes from a fixed local IC-POVM, and the system evolves from a fixed initial state. We evaluate the QHL problem from a non-asymptotic perspective, deriving sample complexity bounds that scale polynomially with the number of qudits. Our approach first establishes the strong convexity of the empirical loss with high probability, which allows for a rigorous error bound between the empirical minimizer and the true parameters $\mathbf{w}^*$. Furthermore, using Rademacher complexity arguments, we obtain a high-probability, finite-sample uniform convergence guarantee for the empirical loss. The resulting bound identifies the contributions of two distinct stochastic sources: a $\sfN_t^{-1/2}$ term arising from the time sampling and a $\sfN_c^{-1/2}$ term arising from the finite number of measurement outcomes collected at each time point. 
Although these terms provide a baseline for convergence, tighter error bounds can be found by exploiting the local structure of the loss. For example, through local Rademacher complexity \cite{bartlett2005local}, one can potentially achieve faster rates of order $\calO({1}/{\sfN_t}\big)$ and $\calO({1}/{\sfN_c}\big)$.
A promising direction is developing a learning algorithm that achieves Heisenberg-scaling error bounds in the fixed-measurement-model setting. Another natural next step is
to move beyond closed-system dynamics and investigate open-system evolution, where learning non-unitary generators introduces new questions around identifiability and sample complexity.

Having established the theoretical guarantees of QHL, we next apply it to GRN inference.
In contrast to quantum circuit-based approaches \cite{RomanVicharra2023}, our Hamiltonian
formulation is agnostic to gene ordering, and the computational complexity of VQ-Net scales polynomially with the number of genes, making it suitable for large GRNs. From a modeling perspective, the QHGM provides a \textit{quantum-like} representation of
biological regulation: regulatory interactions correspond to coupling terms in the
Hamiltonian and pseudotime play the role of a physical evolution time. This viewpoint may
offer deeper insight into biological networks that exhibit non-classical statistical
features such as interference and contextuality
\cite{harney2020effects,basieva2011quantum}, which are challenging to capture with
classical graphical models. A practical advantage of the Hamiltonian formulation is extensibility and expressivity. The QHL framework can
incorporate multi-omics data (gene expression \cite{macosko2015highly}, chromatin accessibility \cite{buenrostro2015single}, transcription factor binding \cite{kaya2019cut}, epigenetics \cite{encode2012integrated}, etc.) by
adding corresponding terms to the Hamiltonian or introducing multiple measurements. Higher-order regulatory interactions can be represented via hypergraph-inspired Hamiltonians
(e.g., $k$-local terms for $k>2$), enabling models that go beyond pairwise interactions \cite{pickard2025scalable}. Further, avenues to incorporate additional ancillas corresponding to external factors (other genes, environmental factors), such as open system characterizations, may also be worth looking at.
All of these directions provide a structured conceptual path toward integrating multi-scale biological mechanisms and exogenous terms within
a unified inference framework.

Beyond genomics, this Hamiltonian perspective suggests potential applications in other networked
systems. A compelling direction is modeling mind--body interactions using
time-series data from IoT wearables (e.g., heart-rate variability, stress markers, behavioral
signals), as well as in evolving social systems. Such data exhibit contextuality and interference-like effects that violate
classical Markov assumptions, suggesting that \textit{quantum-like} models may reveal underlying information
structures in human physiology and social behavior, analogous to \textit{quantum-like} dependencies in GRNs. Finally, although we demonstrate the usage of the QHL framework on GRN inference, the methodology is not specific to genomics. This can be extended to quantum many-body learning problems, and other \textit{quantum-like} systems, such as social and economic networks, where contextual or non-classical probabilistic effects may arise.

\section{Methods}
In this section, we present the proof of Theorems \ref{thm:emp_strongconvexity} and \ref{thm:unif} and outline the technical details of QHGM, VQ-Net, and the training details of our numerical results. We begin by introducing the necessary preliminaries used in the proofs. We then describe the construction of the parameterized Hamiltonian used
to model GRNS (Eq.~\eqref{eqn:grn_hamiltonian}), followed by the
construction of the IC-POVM measurements employed in our framework (see
Eq.~\eqref{eqn:ic-povm_qgrn}). Next, we will provide the details of the VQ-Net algorithm. Finally, we will outline the data generation and training details used in our numerical results.

\subsection{Preliminaries}\label{app:primer}

\noindent Let $\mathcal{H}:= \mathbb{C}^{\sf D}$ denote the $\sf D$-dimensional Hilbert space, which serves as the configuration space for quantum states. For the rest of the paper, $\|\cdot\|$ denotes the Euclidean norm for vectors and the spectral (or Schatten-$\infty$) norm for matrices unless otherwise stated. We use $\|\cdot\|_p$ to denote Schatten $p$-norm for matrices for $p\in [1,\infty)$.
\begin{definition}[IC-POVM \cite{medlock2020informationally}]
An IC-POVM $\Lambda:= \{\Lambda_\bfm \}_{m\in\calM}$ is a finite collection of positive semidefinite operators that sum to the identity, 
and span the space of Hermitian operators $\mathrm{Herm}(\mathcal{H})$, i.e, 
\[
\Lambda_\bfm \geq 0,\ 
\sum_{m\in\calM}\Lambda_\bfm = I,\eqand 
\operatorname{span}\{\Lambda_\bfm\}_{m\in\calM} = \mathrm{Herm}(\mathcal{H}).
\]
\end{definition}
\noindent The minimum number of elements required for informational completeness is \(\sfM_{\min} = \sfD^2\).

\begin{lemma}[Matrix Bernstein-type inequality {\cite[Corollary~6.2.1]{tropp2015introduction}}]
\label{lem:matrix_sampling}
Let $\mu_\rmX$ be a fixed $d_1 \times d_2$ matrix. Construct a random matrix
$\rmX \in \mathbb{C}^{d_1 \times d_2}$ such that
$\EE\rmX = \mu_\rmX$ and $
\|\rmX\| \leq L.$
Let
\[
\bar{\rmX}_n := \frac{1}{n}\sum_{k=1}^{n} \rmX_k,
\]
where $\{R_k\}_{k=1}^n$ are independent copies of $R$. Then, for all $t\geq 0$,
\begin{equation}
\Pr\left\{
\|\bar{\rmX}_n - \mu_\rmX\| \geq t
\right\}
\leq
(d_1 + d_2)
\exp\!\left(
-\frac{n t^2/2}{\sigma^2 + 2Lt/3}
\right),
\end{equation}
where $\sigma^2$ is the per-sample second moment, defined as 
$\sigma^2
:= \max\big\{
\big\|\EE(\rmX\rmX^{\dagger})\big\|,
\big\|\EE(\rmX^{\dagger}\rmX)\big\|
\big\}.$
In other words, for any $\delta \in (0,1)$, the following inequality holds:
\begin{equation}
\|\bar{\rmX}_n - \mu_\rmX\|
\leq
\sqrt{
\frac{2 \sigma^2}{n}\log\left(\frac{(d_1 + d_2)}{\delta}\right)
}
+
\frac{2 L}{3n} \log\left(\frac{(d_1 + d_2)}{\delta}\right).
\end{equation}
with probability at least $(1-\delta)$.
\end{lemma}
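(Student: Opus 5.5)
The plan has two parts. The tail inequality is quoted directly from \cite[Corollary~6.2.1]{tropp2015introduction}. The high-probability form is then derived from the underlying matrix Laplace-transform bound, not by inverting the displayed tail. I commit to this route because solving the displayed tail for $t$ is too lossy. With $\epsilon:=\log((d_1+d_2)/\delta)$, setting the tail equal to $\delta$ gives the quadratic $t^2-\tfrac{4L\epsilon}{3n}t-\tfrac{2\sigma^2\epsilon}{n}=0$. Bounding its root by $\sqrt{a^2+b}\le a+\sqrt b$ yields a linear term $\tfrac{4L\epsilon}{3n}$, not the stated $\tfrac{2L\epsilon}{3n}$. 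The fix is to work with the sub-gamma moment generating function, whose inversion is exact.

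\textbf{Step 1 (setup via dilation).} Write $\bar{\rmX}_n-\mu_\rmX=\sum_{k=1}^n \rmZ_k$ with $\rmZ_k:=(\rmX_k-\mu_\rmX)/n$. These are independent, centered, and satisfy $\|\rmZ_k\|\le (\|\rmX_k\|+\|\EE\rmX\|)/n\le 2L/n=:R$.

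Let $\calS(\cdot)$ be the Hermitian dilation, a $(d_1+d_2)\times(d_1+d_2)$ Hermitian matrix with $\lambda_{\max}(\calS(\rmA))=\|\rmA\|$. The variance proxy is
\[
v:=\max\Big\{\Big\|\sum_k\EE\rmZ_k\rmZ_k^\dagger\Big\|,\ \Big\|\sum_k\EE\rmZ_k^\dagger\rmZ_k\Big\|\Big\}.
\]
Since $\EE(\rmX-\mu_\rmX)(\rmX-\mu_\rmX)^\dagger=\EE\rmX\rmX^\dagger-\mu_\rmX\mu_\rmX^\dagger\preceq\EE\rmX\rmX^\dagger$, and likewise for the other product, we get $v\le\sigma^2/n$.

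\textbf{Step 2 (Laplace transform bound).} Use the Bernstein mgf estimate for each dilated summand, namely $\log\EE e^{\theta\calS(\rmZ_k)}\preceq \frac{\theta^2/2}{1-R\theta/3}\,\EE\calS(\rmZ_k)^2$ for $0<\theta<3/R$. Combine it with Lieb's theorem (the subadditivity of matrix cumulant generating functions) and $\tr\le(d_1+d_2)\lambda_{\max}$. This gives
\[
\log\EE\, \tr\, e^{\theta\calS(\bar{\rmX}_n-\mu_\rmX)}\le\log(d_1+d_2)+\frac{v\theta^2}{2(1-c\theta)},\qquad c:=\frac{R}{3}=\frac{2L}{3n}.
\]
This is exactly the intermediate bound in Tropp's proof. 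Optimizing the resulting Chernoff bound in the crude way reproduces the quoted tail, $(d_1+d_2)\exp\!\big(-\tfrac{nt^2/2}{\sigma^2+2Lt/3}\big)$. So the first display of the lemma is just a citation.

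\textbf{Step 3 (sharp inversion).} The matrix Chernoff bound gives $\Pr\{\|\bar{\rmX}_n-\mu_\rmX\|\ge t\}\le (d_1+d_2)\exp(-\psi^*(t))$. Here $\psi^*$ is the Legendre transform of the sub-gamma function $\psi(\theta)=\frac{v\theta^2}{2(1-c\theta)}$.

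I will use the standard sub-gamma identity (Boucheron--Lugosi--Massart, Lemma~2.4): $(\psi^*)^{-1}(\epsilon)=\sqrt{2v\epsilon}+c\epsilon$. To check it, take $\theta=\frac{1}{c}\big(1-(1+2c t/v)^{-1/2}\big)$ and compute $\psi^*(\sqrt{2v\epsilon}+c\epsilon)\ge\epsilon$.

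Choose $\epsilon=\log((d_1+d_2)/\delta)$. Then with probability at least $1-\delta$,
\[
\|\bar{\rmX}_n-\mu_\rmX\|\le\sqrt{2v\epsilon}+c\epsilon\le\sqrt{\tfrac{2\sigma^2}{n}\epsilon}+\tfrac{2L}{3n}\epsilon,
\]
which is the stated bound.

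\textbf{Main obstacle.} The delicate step is Step~3: the computation must go through $\psi^*$ itself, not the looser quadratic tail, so that the constant $2/3$ survives. I expect this step, together with the verification of the Legendre-transform inequality, to be the only place that needs care.
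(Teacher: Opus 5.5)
Your proposal is correct, and on the second display it is more careful than the paper.

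The paper gives no argument of its own. It cites Tropp's Corollary~6.2.1 for the tail and then presents the high-probability form as a rewording (``In other words''). As you observe, a literal inversion does not give the stated constant. Take $\epsilon=\log((d_1+d_2)/\delta)$, $a=\sqrt{2\sigma^2\epsilon/n}$ and $b=\frac{2L\epsilon}{3n}$. The requirement $\frac{nt^2/2}{\sigma^2+2Lt/3}\ge\epsilon$ at $t=a+b$ reduces to $(a+b)^2\ge a^2+2b(a+b)$, i.e.\ $b^2\ge 2b^2$, which fails whenever $L>0$. So the displayed tail alone only supports the linear term $\frac{4L}{3n}\epsilon$.

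Your route goes back to the sub-gamma cumulant bound $\psi(\theta)=\frac{v\theta^2}{2(1-c\theta)}$, with $v\le\sigma^2/n$ and $c=\frac{2L}{3n}$, that sits inside Tropp's proof, and inverts its Legendre transform exactly. This is what actually justifies $\frac{2L}{3n}$. Your $\theta$ is the exact maximizer, and the maximum equals $\frac{v}{c^2}\bigl(1+u-\sqrt{1+2u}\bigr)$ with $u=ct/v$. At $t=\sqrt{2v\epsilon}+c\epsilon$ one has $1+2u=\bigl(1+c\sqrt{2\epsilon/v}\bigr)^2$, which gives $\psi^*(t)=\epsilon$.

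The paper's citation-only treatment buys brevity, and the gap is harmless for the paper. The high-probability form is used only in the gradient bound in the proof of Theorem~\ref{thm:emp_strongconvexity}, which is stated up to constants. There the summands are centered, so $\mu_\rmX=0$ and each summand $\rmX_k/n$ has norm at most $L/n$. Naively inverting Tropp's general matrix Bernstein tail (his Theorem~6.1.1) with that bound already gives $\frac{2L}{3n}$. Your argument buys a proof of the lemma exactly as stated, including the case $\mu_\rmX\neq 0$.
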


\begin{lemma}
\label{lem:khintchine}
Consider a finite sequence $\{\Lambda_k\}$ of fixed $\sfD\times \sfD$ Hermitian matrices, and let $\{\sigma_i\}$ be i.i.d.\ Rademacher random variables.
Then the following inequality holds:
\[
\EE_\sigma\Big\|\sum_{k} \sigma_k \Lambda_k\Big\|
\;\le\;
\sqrt{2\log(2\sfD)}\;\Big\|\sum_k \Lambda_k^2\Big\|^{1/2}.\]
\end{lemma}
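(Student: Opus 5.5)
The plan is to use the matrix Laplace-transform (trace-exponential) method, as in Tropp's treatment of matrix Rademacher series. Set $\rmY := \sum_k \sigma_k \Lambda_k$, which is Hermitian, and $v := \|\sum_k \Lambda_k^2\|$. Since $\|\rmY\| = \max\{\lambda_{\max}(\rmY), \lambda_{\max}(-\rmY)\}$, for any $\theta>0$ we have $e^{\theta\|\rmY\|} \le \tr\, e^{\theta \rmY} + \tr\, e^{-\theta\rmY}$. This holds because the trace of a positive-definite matrix exponential dominates its largest eigenvalue. Applying Jensen's inequality to the concave function $\log$ then gives
\[
\EE_\sigma\|\rmY\| \le \frac{1}{\theta}\log \EE_\sigma\big[\tr\, e^{\theta\rmY} + \tr\, e^{-\theta\rmY}\big] = \frac{1}{\theta}\log\big(2\,\EE_\sigma \tr\, e^{\theta\rmY}\big).
\]
The equality uses the symmetry of the Rademacher variables: $-\rmY$ has the same distribution as $\rmY$. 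This symmetry step is the source of the factor $2$ inside $\log(2\sfD)$.

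The key step bounds $\EE\,\tr\, e^{\theta\rmY}$ for a sum of independent random Hermitian matrices. I would invoke the subadditivity of matrix cumulant generating functions, which follows from Lieb's concavity theorem that $A\mapsto \tr\exp(H+\log A)$ is concave on positive-definite matrices. Applied iteratively, conditioning on one summand at a time, it yields
\[
\EE\,\tr\, e^{\theta\rmY} \le \tr\exp\Big(\sum_k \log \EE\, e^{\theta\sigma_k\Lambda_k}\Big).
\]
I expect this Lieb-based step to be the main obstacle. It is the only genuinely noncommutative ingredient, and I would cite it from Tropp's monograph \cite{tropp2015introduction} rather than reprove it.

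Next I would compute each factor. Since $\EE\, e^{\theta\sigma_k\Lambda_k} = \cosh(\theta\Lambda_k)$, and the scalar inequality $\cosh x\le e^{x^2/2}$ transfers to $\cosh(\theta\Lambda_k)\preceq e^{\theta^2\Lambda_k^2/2}$ by the spectral theorem, we can take logarithms. The logarithm is operator monotone, so $\log\cosh(\theta\Lambda_k)\preceq \theta^2\Lambda_k^2/2$. Because $\tr\exp$ is monotone with respect to the semidefinite order, we get
\[
\EE\,\tr\, e^{\theta\rmY}\le \tr\exp\Big(\tfrac{\theta^2}{2}\sum_k\Lambda_k^2\Big)\le \sfD\, e^{\theta^2 v/2}.
\]
Substituting back gives $\EE\|\rmY\|\le \frac{\log(2\sfD)}{\theta} + \frac{\theta v}{2}$ for every $\theta>0$. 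Choosing $\theta = \sqrt{2\log(2\sfD)/v}$ yields $\EE_\sigma\|\rmY\|\le \sqrt{2 v\log(2\sfD)}$, which is exactly $\sqrt{2\log(2\sfD)}\,\|\sum_k\Lambda_k^2\|^{1/2}$. If $v=0$, then every $\Lambda_k=0$ and the bound is trivial.
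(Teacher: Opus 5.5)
Your proof is correct and follows essentially the same route as the paper: bound $\|\rmY\|$ by $\theta^{-1}\log(\tr e^{\theta\rmY}+\tr e^{-\theta\rmY})$, apply Jensen, use subadditivity of the matrix cumulant generating function, then $\log\EE e^{\theta\sigma_k\Lambda_k}\preceq \theta^2\Lambda_k^2/2$, monotonicity of $\tr\exp$, and optimize over $\theta$. The only cosmetic differences are that you merge the two trace terms using the fact that $\rmY$ and $-\rmY$ have the same distribution (the paper bounds each term separately by the same quantity), and that you derive the $\cosh$ bound directly rather than citing Tropp's Lemma~4.2.
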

\begin{proof}
    The proof is provided in Appendix \ref{app:proof:lem:khintchine}.
\end{proof}

\begin{lemma}\label{lem:grad-hess-exp}
Consider a time-independent Hamiltonian
$\rmH(\bfw) \;=\; \sum_{k=1}^{c} w_k\, \rmH_k$.
Let $U_t(\bfw) = e^{-i t \rmH(\bfw)}$ and assume the system starts in the state $\rho_0$. For an observable $\Lambda_\bfm$ satisfying $\|\Lambda_\bfm\|\le 1$, define $\phi(\bfm|t,\bfw):=\tr(\Lambda_\bfm\,\rho_t(\bfw)).$
Then, for every $\bfm\in\calM^n$ and $t\in(0,t_{\max}]$, the following statements hold.
\begin{itemize}
    \item[$(i)$]The likelihood function $\phi(\bfm|t,\bfw)$ is twice differentiable in $\calW_B$. The first-order partial derivative with respect to the parameter $w_k$ is given by 
\begin{equation}\label{eq:first-deriv-exp}
\frac{\partial}{\partial w_k}\,\tr(\Lambda_\bfm\,\rho_t(\bfw))
\;=\;
-\imag \int_{0}^{t} \tr\Big( \Lambda_\bfm(t)\,\big[\rmH_k(s),\,\rho_0\big] \Big)\, \ds .
\end{equation}
Moreover, the second-order partial derivative with respect to parameters
\(w_j\) and \(w_k\) is given by
\begin{equation}\label{eq:second-deriv-exp}
\begin{aligned}
\frac{\partial^2}{\partial w_j\,\partial w_k}\,\tr(\Lambda_\bfm\,\rho_t(\bfw))
&=
-\int_{0}^{t}\int_{0}^{s_1}
\tr\Big(
\Lambda_\bfm(t)\,
\big[\rmH_j(s_1),\,[\rmH_k(s_2),\,\rho_0]\big]
\Big)\,\ds_2\,\ds_1 \\
&\quad
-\int_{0}^{t}\int_{0}^{s_1}
\tr\Big(
\Lambda_\bfm(t)\,
\big[\rmH_k(s_1),\,[\rmH_j(s_2),\,\rho_0]\big]
\Big)\,\ds_2\,\ds_1,
\end{aligned}
\end{equation} 
where 
$\Lambda_\bfm(t) := U_t^\dagger(\bfw)\, \Lambda_\bfm\, U_t(\bfw), 
\eqand
\rmH_k(s) := U_s^\dagger(\bfw)\, \rmH_k\, U_s(\bfw).$ 
\item[$(ii)$] Bounded gradient and hessian. 
\begin{itemize}
   \item For all $\bfw\in\calW_B$,
\(
\|\nabla_{\bfw}\phi(\bfm|t,\bfw)\|\le L_\phi\, \eqand \,\|\nabla_{\bfw}^2\phi(\bfm|t,\bfw)\|\le L_\phi^2.
\)
\end{itemize}
\item[$(iii)$] Lipschitz continuity. For all $\bfw, \bfw'\in\calW_B$,
\begin{itemize}
\item $|\phi{(\bfm|t,\bfw)} - \phi{(\bfm|t,\bfw')}|\leq L_\phi\|\bfw-\bfw'\|$
\item $\|\nabla_{\bfw}\phi{(\bfm|t,\bfw)} - \nabla_{\bfw}\phi{(\bfm|t,\bfw')}\|\leq L_\phi^2\|\bfw-\bfw'\|$
    \item   $\|\nabla_{\bfw}^2\phi(\bfm|t,\bfw) - \nabla_{\bfw}^2\phi(\bfm|t,\bfw')\| \leq 2L_\phi^3\|\bfw-\bfw'\|,$
\end{itemize}
\end{itemize}
where $L_\phi := 2t_{\max}\sqrt{c}\|\rmH\|_{\infty,\max}.$
\end{lemma}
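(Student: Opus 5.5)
The plan is to obtain both derivative formulas from the Duhamel (Dyson) expansion of the parameterized propagator. The bounds in $(ii)$ and $(iii)$ then follow by estimating nested commutators with the Hölder inequality $|\tr(AB)|\le\|A\|\,\|B\|_1$. Throughout I take $\|\rmH\|_{\infty,\max}:=\max_k\|\rmH_k\|$.

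\emph{Step 1 (differentiability and formulas).} Since $\bfw\mapsto \rmH(\bfw)$ is linear and the matrix exponential is entire, $\bfw\mapsto U_t(\bfw)$ is real-analytic. Hence $\phi(\bfm|t,\bfw)=\tr(\Lambda_\bfm U_t\rho_0U_t^\dagger)$ is $C^\infty$ on $\calW_B$, which is more than twice differentiable. For the derivative I use the Duhamel identity
\[
\partial_{w_k}U_t=-\imag\int_0^t U_{t-s}\rmH_kU_s\,\ds=-\imag\, U_t\int_0^t \rmH_k(s)\,\ds .
\]
Differentiating $\rho_t=U_t\rho_0U_t^\dagger$ by the product rule gives
\[
\partial_{w_k}\rho_t=-\imag\, U_t\Big(\int_0^t[\rmH_k(s),\rho_0]\,\ds\Big)U_t^\dagger .
\]
Inserting this into the trace and using cyclicity to move $U_t,U_t^\dagger$ onto $\Lambda_\bfm$ yields \eqref{eq:first-deriv-exp}.

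For the second derivative I differentiate the same expression in $w_j$. The $w_j$-dependence sits in the outer conjugation $U_t(\cdot)U_t^\dagger$ and in each $\rmH_k(s)=U_s^\dagger\rmH_kU_s$. Applying Duhamel again, $\partial_{w_j}\rmH_k(s)=\imag\int_0^s[\rmH_j(s'),\rmH_k(s)]\,\ds'$, and the outer conjugation contributes a commutator with $\int_0^t\rmH_j(s_1)\,\ds_1$. Collecting terms and splitting the square $[0,t]^2$ into the two simplices $s_2\le s_1$ and $s_1\le s_2$ (using Jacobi/antisymmetry) gives the symmetric time-ordered form \eqref{eq:second-deriv-exp}. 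This bookkeeping is the main obstacle: the signs and the exact arrangement of time orderings must be checked carefully. A cleaner cross-check is to expand $\tr(\Lambda_\bfm\rho_t(\bfw+\epsilon\mathbf e_j+\eta\mathbf e_k))$ with the interaction-picture Dyson series to second order and read off the $\epsilon\eta$ coefficient.

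\emph{Step 2 (bounds in $(ii)$).} Unitary conjugation preserves norms, $\|\rho_0\|_1=1$ and $\|\Lambda_\bfm\|\le1$. Therefore
\[
|\partial_k\phi|\le\int_0^t\|[\rmH_k(s),\rho_0]\|_1\,\ds\le 2t_{\max}\|\rmH\|_{\infty,\max},
\]
and taking the Euclidean norm over $c$ coordinates gives $\|\nabla\phi\|\le L_\phi$. For the Hessian, each double commutator has trace norm at most $4\|\rmH\|_{\infty,\max}^2$, and each simplex has volume $t^2/2$. So every entry is bounded by $4t_{\max}^2\|\rmH\|_{\infty,\max}^2$. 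Bounding the spectral norm by the Frobenius norm over $c^2$ entries then gives $\|\nabla^2\phi\|\le 4c\,t_{\max}^2\|\rmH\|_{\infty,\max}^2=L_\phi^2$.

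\emph{Step 3 (Lipschitz bounds in $(iii)$).} Since $\calW_B$ is convex, the mean value theorem along the segment from $\bfw'$ to $\bfw$ turns the gradient bound into the first Lipschitz estimate and the Hessian bound into the second. For the third estimate I need a bound on the third derivative tensor. I would differentiate \eqref{eq:second-deriv-exp} once more, or equivalently take the third-order Dyson term. This gives at most $3!$ time-ordered triple commutators over simplices of volume $t^3/6$, each with trace norm at most $8\|\rmH\|_{\infty,\max}^3$, so every entry is at most $8t_{\max}^3\|\rmH\|_{\infty,\max}^3$. Viewing the tensor as a linear map from $\RR^c$ to $c\times c$ matrices, its operator norm is bounded by its Frobenius norm, $c^{3/2}\cdot 8t_{\max}^3\|\rmH\|_{\infty,\max}^3=L_\phi^3$. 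This is within the claimed $2L_\phi^3$, and the factor $2$ leaves slack for the two symmetric orderings if the combinatorics come out less favorably.
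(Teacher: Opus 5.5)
Your argument is correct. For $(i)$, $(ii)$ and the first two bullets of $(iii)$ it follows the paper's proof:
\begin{itemize}
\item the same Duhamel formula $\partial_{w_k}U_t=-\imag\,U_t\int_0^t\rmH_k(s)\,\ds$;
\item the same splitting of $[0,t]^2$ into two simplices plus the Jacobi identity (the paper differentiates $\Lambda_\bfm(t)$ and $\rmH_k(s)$ and moves the outer commutator onto $\rho_0$ via $\tr([A,B][C,D])=-\tr(B[A,[C,D]])$, which is your conjugation computation in Heisenberg form);
\item the same entrywise-then-Frobenius bounds and the same mean-value step.
\end{itemize}

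You depart from the paper only in the Hessian-Lipschitz bullet. The paper never forms third derivatives. It subtracts the second-derivative integrands at $\bfw$ and $\bfw'$ and telescopes. It then uses $\|U_t(\bfw)-U_t(\bfw')\|\le t\|\rmH\|_{\infty,\max}\|\bfw-\bfw'\|_1$ and $\|\rmH_k(s)-\rmH_k'(s)\|\le 2s\|\rmH\|_{\infty,\max}^2\|\bfw-\bfw'\|_1$, where the primed operator is conjugated by $U_s(\bfw')$. This gives $16t_{\max}^3\|\rmH\|_{\infty,\max}^3\|\bfw-\bfw'\|_1$ per entry and then $2L_\phi^3$. (The intermediate factor $c^2$ printed there should read $c$ for the final constant to come out.)

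You instead bound the third-derivative tensor, and your combinatorics are exactly right, so you need no slack. With $V(s)=\sum_k\delta_k\rmH_k(s)$, the third-order interaction-picture term is $(-\imag)^3\int_{0\le s_3\le s_2\le s_1\le t}[V(s_1),[V(s_2),[V(s_3),\rho_0]]]$. Hence $\partial_i\partial_j\partial_k\phi$ is the sum over the six orderings, counted with multiplicity when indices repeat. Each term is at most $8\|\rmH\|_{\infty,\max}^3t^3/6$, which gives the sharper constant $L_\phi^3$.

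Your route costs a justification of the third-order formula, e.g. term-by-term differentiation of the absolutely convergent Dyson series. In exchange, one expansion gives all derivative formulas and bounds. The paper's route needs only the already-proved second-derivative formula and a first-order perturbation bound for unitaries. Its price is the factor $2$, which comes from perturbing the outer conjugation and the two Heisenberg-picture operators separately.
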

\begin{proof}
    The proof is provided in Appendix \ref{app:proof:lem:grad-hess-exp}.
\end{proof}

\subsection{Proof of Theorem \ref{thm:emp_strongconvexity}}

\textit{Proof Outline}:  
The proof establishes a finite-sample error bound between $\hat{\bfw}$ and $\bfw^*$ by first showing that the empirical loss inherits the strong convexity of the expected loss with high probability. The proof begins by observing that the complete set of measurement outcomes $\{\bfm_{(i,k)}\}_{i=1,k=1}^{\sfN_t,\sfN_c}$ is jointly independent but not identically distributed. However, for each time index $i$, outcomes are independent and identically distributed. This structure motivates decomposing the difference between the empirical and expected Hessian into two terms, first measuring deviation within the conditional samples $\{\bfm_{(i,k)}\}_{k=1}^{\sfN_c}$ for each time index $i$, and the second term measuring deviation across the time. By applying the matrix Bernstein inequality (Lemma \ref{lem:matrix_sampling}) to both terms in this decomposition, we show that the empirical Hessian concentrates near its expectation pointwise for each $\bfw \in \mathcal{W}_B$ with high probability. We then employ a covering argument \cite[Chapter 5]{wainwright2019high} to extend this concentration uniformly over $\calW_B$. By applying Weyl’s inequality \cite[Equation 8.9]{wainwright2019high} and invoking assumption \textsf{A(i)}, we establish the empirical strong convexity of the loss, ensuring a unique empirical minimizer $\hat{\bfw}$ with high probability. However, even $\widehat{L}$ has a unique minimizer, $\hat{\bfw}$ can still be distant from $\bfw^*$. Therefore, under the high-probability event of empirical strong convexity, we bound the distance $\|\hat{\bfw} - \bfw^*\|_2$ by the norm of the gradient of the empirical loss at the true parameter, $\|\nabla \hat{L}(\bfw^*)\|$. Using Lemma \ref{lem:matrix_sampling} again, we bound this gradient norm, which in turn provides a finite-sample bound that vanishes as $\sfN_t, \sfN_c \to \infty$, ensuring that the unique empirical minimizer concentrates around $\mathbf{w}^*$.

Let $\widehat{H}(\bfw):=\nabla^2 \widehat{L}(\bfw)$ denote the Hessian of the empirical loss. We begin the proof by introducing the following definitions: $\widehat{H}_{(i,k)}(\bfw):=-\nabla^2_\bfw\log(\phi(\bfm_{(i,k)}|t_i,\bfw))$ and $ \bar{H}_{t_i}(\bfw) := \EE_{\phi(\cdot|t_i,\bfw^*)}[-\nabla^2_\bfw \log(\phi(\bfm|t_i,\bfw))].$
Next, consider the following inequalities:
\begin{align*}
    \|\widehat{H}(\bfw)-\bar{H}(\bfw)\|
&\leq \Big\|\widehat{H}(\bfw) - \frac{1}{\sfN_t}\sum_{i=1}^{\sfN_t} \bar{H}_{t_i}(\bfw)
\Big\| + \Big\|\frac{1}{\sfN_t}\sum_{i=1}^{\sfN_t} \bar{H}_{t_i}(\bfw) - \bar{H}(\bfw)
\Big\|\\
&\leq \frac{1}{\sfN_t}\sum_{i=1}^{\sfN_t} \Big\|\frac{1}{\sfN_c}\sum_{k=1}^{\sfN_c}\widehat{H}_{(i,k)}(\bfw) - \bar{H}_{t_i}(\bfw)
\Big\| + \Big\|\frac{1}{\sfN_t}\sum_{i=1}^{\sfN_t} \bar{H}_{t_i}(\bfw) - \bar{H}(\bfw)
\Big\|\leq \rmT_1 + \rmT_2, 
\end{align*}
where $\rmT_1:=\max_i\|\frac{1}{\sfN_c}\sum_{k=1}^{\sfN_c} \widehat{H}_{(i,k)}(\bfw) - \bar{H}_{t_i}(\bfw)
\|$ and $\rmT_2:= \|\tfrac{1}{\sfN_t}\sum_{i=1}^{\sfN_t} \bar{H}_{t_i}(\bfw) - \bar{H}(\bfw)
\|.$ Note by using Jensen’s inequality and Lemma~\ref{lem:grad-hess-exp}, we establish a bound on $\|\widehat{H}_{(i,k)}(\bfw)\|$ and $\|\bar{H}_{t_i}(\bfw)\|$ for a given time $t_i$. For every $\bfm\in \calM^n, \bfw \in \calW_B$, and $t\in(0,t_{\max}]$, we have 
\begin{equation}\label{eqn:bound_Fbar_ti}
     \|-\nabla^2_\bfw \log(\phi(\bfm|t,\bfw))\|\;\le\; 2({L_\phi}/{p_{\min}})^2
 \quad \eqand \quad  \|\bar{H}_t(\bfw)\|\;\le\;
2({L_\phi}/{p_{\min}})^2,
\end{equation}
where $L_\phi =  2t_{\max}\sqrt{c}\|\rmH\|_{\infty,\max}.$ The proof is provided in Appendix \ref{app:proof:eqn:bound_Fbar_ti}.
Next, we bound the terms $\rmT_1 \eqand \rmT_2$ using the following proposition. 
\begin{prop}\label{prop:FisherBound}
    For each $\bfw\in\calW_B$, and for all $\varepsilon_1,\varepsilon_2 \geq 0$, the following inequality holds:
    \begin{align}
    \text{(Within-time concentration.)}\quad 
           \Pr\{\rmT_1 \geq \varepsilon_1\} &\leq 2c\;\sfN_t \exp\bigg\{-\frac{\sfN_c \;\varepsilon_1^2/2}{\sigma^2 +  4\varepsilon_1(L_\phi/p_{\min})^2 /3}\bigg\}, \label{eqn:T1_bound}\\
    \text{(Across-time concentration.)}\quad
     \Pr\{\rmT_2\geq \varepsilon_2\} &\leq 2c\exp\bigg\{-\frac{\sfN_t \;\varepsilon_2^2/2}{\sigma^2 +  4\varepsilon_2(L_\phi/p_{\min})^2 /3}\bigg\}
    \label{eqn:T2_bound},
    \end{align}
    where $\sigma^2:= 4(L_\phi/p_{\min})^4$.    
\end{prop}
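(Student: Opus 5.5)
Both bounds follow from the matrix Bernstein inequality of Lemma~\ref{lem:matrix_sampling}, applied to $c\times c$ Hermitian matrices (so $d_1+d_2=2c$). The almost-sure and second-moment bounds come from \eqref{eqn:bound_Fbar_ti}. Throughout we fix $\bfw\in\calW_B$, so every Hessian below is a deterministic function of the sampled data.

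\emph{Within-time term $\rmT_1$.} Fix a time index $i$ and condition on the sampled time $t_i$. Given $t_i$, the matrices $\widehat{H}_{(i,k)}(\bfw)$, $k=1,\dots,\sfN_c$, are i.i.d. copies of $\rmX=-\nabla^2_\bfw\log\phi(\bfm|t_i,\bfw)$ with $\bfm\sim\phi(\cdot|t_i,\bfw^*)$. By definition their conditional mean is $\bar{H}_{t_i}(\bfw)$. We use the centered variable $\rmX-\bar{H}_{t_i}(\bfw)$ rather than $\rmX$ itself; this is what produces the constant $4(L_\phi/p_{\min})^2/3$ in the statement. By \eqref{eqn:bound_Fbar_ti} and the triangle inequality,
\[
\|\rmX-\bar{H}_{t_i}(\bfw)\|\le 4(L_\phi/p_{\min})^2 =: L.
\]
Hence $2L/3=8(L_\phi/p_{\min})^2/3$. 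Matching the paper's constant $4(L_\phi/p_{\min})^2/3$ therefore requires either applying Bernstein with the uncentered bound $2(L_\phi/p_{\min})^2$, or reading the lemma's $2Lt/3$ with $L$ the uncentered bound. I would follow whichever convention gives the stated form; the factor of 2 is immaterial. For the variance, $\rmX$ is Hermitian, so $\|\EE\rmX^2\|\le \EE\|\rmX\|^2\le 4(L_\phi/p_{\min})^4=\sigma^2$. The centered second moment is bounded by the same quantity, since $\EE(\rmX-\mu)^2=\EE\rmX^2-\mu^2\preceq\EE\rmX^2$.

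Lemma~\ref{lem:matrix_sampling} then gives, conditionally on $t_i$,
\[
\Pr\Big\{\Big\|\tfrac{1}{\sfN_c}\textstyle\sum_k\widehat{H}_{(i,k)}(\bfw)-\bar{H}_{t_i}(\bfw)\Big\|\ge\varepsilon_1\Big\}\le 2c\exp\{\cdot\}.
\]
The right-hand side does not depend on $t_i$, so integrating over $t_i\sim\pi$ removes the conditioning. A union bound over $i=1,\dots,\sfN_t$ converts the maximum defining $\rmT_1$ into the factor $\sfN_t$, which gives \eqref{eqn:T1_bound}.

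\emph{Across-time term $\rmT_2$.} The times $t_i$ are i.i.d. from $\pi$, so the matrices $\bar{H}_{t_i}(\bfw)$ are i.i.d. Hermitian, with mean $\EE_{t\sim\pi}\bar{H}_t(\bfw)=\bar{H}(\bfw)=\nabla^2 L(\bfw)$. Exchanging differentiation and expectation is justified by the uniform bounds of Lemma~\ref{lem:grad-hess-exp} and $\phi\ge p_{\min}$, via dominated convergence. The same norm bound $2(L_\phi/p_{\min})^2$ from \eqref{eqn:bound_Fbar_ti} and the same variance proxy $\sigma^2$ apply. One more application of Lemma~\ref{lem:matrix_sampling}, now with $n=\sfN_t$, yields \eqref{eqn:T2_bound}, with no union bound needed.

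The main point requiring care is $\rmT_1$. Across different $i$ the samples are not identically distributed, so matrix Bernstein cannot be applied to all $\sfN_t\sfN_c$ samples at once. Conditioning on $t_i$ is what restores the i.i.d. structure within each time, and the union bound over $i$ is the source of the $\sfN_t$ prefactor (later the $\log(\sfN_t/\delta)$ in $\sfN_c$). The remaining work is bookkeeping to land exactly on the stated constants.
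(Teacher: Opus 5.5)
Your proof is correct and follows the paper's route. Both apply the matrix Bernstein bound (Lemma~\ref{lem:matrix_sampling}) within each time index, using the norm and second-moment bounds from \eqref{eqn:bound_Fbar_ti}, then take a union bound over $i$ for $\rmT_1$ and apply the lemma once more to the i.i.d.\ matrices $\bar{H}_{t_i}(\bfw)$ for $\rmT_2$.

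The one wobble is your centering remark. Lemma~\ref{lem:matrix_sampling} is already stated with the uncentered bound $\|\rmX\|\le L$ and the uncentered second moment, and its $2L/3$ term absorbs the centering. Plugging in $L=2(L_\phi/p_{\min})^2$ therefore gives the stated constant $4(L_\phi/p_{\min})^2/3$ directly, whereas feeding in the centered variable would only give the weaker $8(L_\phi/p_{\min})^2/3$.
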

The proof can be found in Appendix \ref{app:proof:prop:FisherBound}. Finally, combining the bounds on 
$\rmT_1$ \eqref{eqn:T1_bound} and $\rmT_2$ \eqref{eqn:T2_bound}, and letting $\varepsilon_1=\varepsilon_2 = \varepsilon \mu_0/2$, we conclude that for each $\bfw\in\calW_B$, and for all $\varepsilon>0$,
\begin{align}
      \Pr\{ \|\widehat{H}(\bfw)-&\bar{H}(\bfw)\| \geq \varepsilon \mu_0\} \leq   \Pr\{\rmT_1 \geq \varepsilon \mu_0/2\} +  \Pr\{\rmT_2 \geq \varepsilon \mu_0/2\}\nonumber\\
      &\leq 2c\bigg(\sfN_t\exp\bigg\{\frac{-\sfN_c \;\mu_0^2 \;\varepsilon^2/8}{\sigma^2\! +  2\varepsilon\mu_0(L_\phi/p_{\min})^2 /3}\bigg\} + \exp\bigg\{\frac{-\sfN_t\; \mu_0^2 \;\varepsilon^2/8}{\sigma^2 +  2\varepsilon\mu_0(L_\phi/p_{\min})^2 /3}\bigg\}\bigg).
    \label{eqn:finalbound}
\end{align}
We now extend the pointwise bound to hold uniformly over $\calW_B$. To this end, we first establish the Lipschitz continuity of 
$\widehat{H}$ and 
$\bar{H}$. The proof is provided in Appendix \ref{app:proof:prop:FisherLipschitz}. 
\begin{prop} \label{prop:FisherLipschitz} For every $\bfm \in \calM^n$ and $t \in (0, t_{\max}]$, both the expected Hessian and the empirical Hessian are Lipschitz continuous.
\[\|\widehat{H}(\bfw)-\widehat{H}(\bfw')\| \leq 7(L_\phi/p_{\min})^3 \|\bfw-\bfw'\| \quad \eqand \quad \|\bar{H}(\bfw)-\bar{H}(\bfw')\| \leq 7(L_\phi/p_{\min})^3 \|\bfw-\bfw'\|,\]
for each $\bfw,\bfw'\in\calW_B$.
\end{prop}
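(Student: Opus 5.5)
The plan is to prove the Lipschitz bound for a single term
\[
h(\bfw):=-\nabla^2_\bfw \log \phi(\bfm|t,\bfw),
\]
uniformly in $\bfm\in\calM^n$ and $t\in(0,t_{\max}]$. Both Hessians then inherit the same constant, because averaging preserves Lipschitz constants. Indeed, $\widehat{H}$ is an average of such terms over $(i,k)$. Similarly, $\bar{H}(\bfw)=\EE_{t}\EE_{\bfm\sim\phi(\cdot|t,\bfw^*)}[h(\bfw)]$, and by Jensen (convexity of the spectral norm)
\[
\|\bar{H}(\bfw)-\bar{H}(\bfw')\|\le \EE\|h(\bfw)-h(\bfw')\|.
\]
The sampling distribution here is fixed at $\bfw^*$ and does not depend on $\bfw$, so no extra terms appear.

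For a single term, write $\phi=\phi(\bfm|t,\bfw)$, $g=\nabla\phi$ and $M=\nabla^2\phi$. Then
\[
h(\bfw)=-\frac{M}{\phi}+\frac{g g^{\top}}{\phi^2}.
\]
Lemma~\ref{lem:grad-hess-exp} gives $|\phi|\le 1$, $\phi\ge p_{\min}$ (assumption \textsf{A}(ii)), $\|g\|\le L_\phi$ and $\|M\|\le L_\phi^2$, together with the Lipschitz constants $L_\phi$, $L_\phi^2$ and $2L_\phi^3$ for $\phi$, $g$ and $M$. I will split $h(\bfw)-h(\bfw')$ by adding and subtracting intermediate terms.

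For the first piece,
\[
\Big\|\frac{M}{\phi}-\frac{M'}{\phi'}\Big\|\le \frac{\|M-M'\|}{\phi}+\|M'\|\,\frac{|\phi-\phi'|}{\phi\phi'}\le \Big(\frac{2L_\phi^3}{p_{\min}}+\frac{L_\phi^3}{p_{\min}^2}\Big)\|\bfw-\bfw'\|.
\]
For the second piece, I split $gg^\top-g'g'^\top=(g-g')g^\top+g'(g-g')^\top$. Its norm is at most $2L_\phi\cdot L_\phi^2\|\bfw-\bfw'\|$, and after dividing by $\phi^2$ this gives at most $2L_\phi^3/p_{\min}^2$ times $\|\bfw-\bfw'\|$. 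The remaining factor is
\[
\|g'\|^2\Big|\frac{1}{\phi^2}-\frac{1}{\phi'^2}\Big|=\|g'\|^2\frac{|\phi-\phi'|(\phi+\phi')}{\phi^2\phi'^2}\le L_\phi^2\cdot\frac{2L_\phi}{p_{\min}^3}\|\bfw-\bfw'\|.
\]

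Summing the pieces gives
\[
L_\phi^3\Big(\frac{2}{p_{\min}}+\frac{1}{p_{\min}^2}+\frac{2}{p_{\min}^2}+\frac{2}{p_{\min}^3}\Big)\|\bfw-\bfw'\|\le 7\frac{L_\phi^3}{p_{\min}^3}\|\bfw-\bfw'\|.
\]
The last step uses $p_{\min}\le 1$, which raises every power of $1/p_{\min}$ to the third.

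The main point requiring care is that the Lipschitz bounds of Lemma~\ref{lem:grad-hess-exp} are used on all of $\calW_B$. Assumption \textsf{A}(ii) supplies the denominator bound $p_{\min}$ at both endpoints $\bfw$ and $\bfw'$, so no path argument is needed. The rest is bookkeeping of constants.
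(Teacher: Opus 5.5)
Your proposal is correct and matches the paper's proof essentially step for step. It uses the same decomposition $-\nabla^2_\bfw\log\phi = gg^\top/\phi^2 - M/\phi$, the same add-and-subtract splits bounded with the constants of Lemma~\ref{lem:grad-hess-exp}, the same four contributions $2/p_{\min}$, $1/p_{\min}^2$, $2/p_{\min}^2$, $2/p_{\min}^3$ collected into $7(L_\phi/p_{\min})^3$ via $p_{\min}\le 1$, and the same averaging argument for $\widehat{H}$. Your explicit Jensen step for $\bar{H}$, which uses that the sampling law is fixed at $\bfw^*$, simply writes out what the paper covers with ``similar inequalities.''
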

Fix $\eta > 0$, to be specified later. To control the supremum over $\calW_B$, we cover $\calW_B$ with an $\eta$-net. 
That is, there exists a finite set $\calN_\eta \subset \calW_B$ such that for every 
$\bfw \in \calW_B$, there exists $\bfw' \in \calN_\eta$ with 
$\|\bfw - \bfw'\|_2 \le \eta$. The cardinality of such a net can be bounded as
$ N_\eta := |\calN_\eta|\;\le\; (\frac{3B}{\eta})^c,$
where the inequality follows from the standard bound on the covering number of a $c$-dimensional Euclidean ball \cite[Chapter 5]{wainwright2019high}.
 Let $\calN_\eta:= \{\bfw_1, \bfw_2, \cdots, \bfw_{N_\eta}\} $ be an $\eta-$net points of $\calW_B$. 
Now, fix any $\bfw\in\calW_B$, and let $\bfw_k$ be its closest $\eta-$net point. Then, using Proposition \ref{prop:FisherLipschitz}, we obtain
\begin{align}
    \|\widehat{H}(\bfw)-\bar{H}(\bfw)\| &\leq \|\widehat{H}(\bfw)-\widehat{H}(\bfw_k)\| + \|\widehat{H}(\bfw_k)-\bar{H}(\bfw_k)\| + \|\bar{H}(\bfw_k)-\bar{H}(\bfw)\|\nonumber\\
    &\leq\|\widehat{H}({\bfw_k})-\bar{H}({\bfw_k})\|
+ 14(L_\phi/p_{\min})^3 \|\bfw-\bfw_k\|.\nonumber
\end{align}
Thus, $ \sup_{\bfw\in\calW_B} \|\widehat{H}(\bfw)-\bar{H}(\bfw)\| \leq  \max_{k \in \{1,\cdots,N_\eta\}} \|\widehat{H}({\bfw_k})-\bar{H}({\bfw_k})\| + 14\eta(L_\phi/p_{\min})^3.$ Next, using \eqref{eqn:finalbound} and union bound over the finite $\eta$-net, we obtain, for all $\varepsilon>0$
\begin{align}
    \Pr\{\max&\;\!_{k\in \{1,\cdots,N_\eta\}} \;\|\widehat{H}(\bfw_k)-\bar{H}(\bfw_k)\| \geq \varepsilon\mu_0\} \nonumber\\
    &=  \Pr\Big\{\bigcup\;\!\!_{k\in \{1,\cdots,N_\eta\}} \; \{\|\widehat{H}(\bfw_k)-\bar{H}(\bfw_k)\| \geq \varepsilon\mu_0\}\Big\} \nonumber\\
    &\leq 2c\;N_{\eta}\bigg(\sfN_t\exp\bigg\{\frac{-\sfN_c \;\mu_0^2 \;\varepsilon^2/8}{\sigma^2\! +  2\varepsilon\mu_0(L_\phi/p_{\min})^2 /3}\bigg\} + \exp\bigg\{\frac{-\sfN_t\; \mu_0^2 \;\varepsilon^2/8}{\sigma^2 +  \varepsilon\mu_0(L_\phi/p_{\min})^2 /3}\bigg\}\bigg).
\end{align}
Therefore, after setting $\eta = \varepsilon\mu_0 / \big(14(L_\phi/p_{\min})^3\big)$, we get
\begin{align}
    \Pr\Big\{&\sup_{\bfw\in\calW_B}  \|\widehat{H}(\bfw)-\bar{H}(\bfw)\| \geq 2\varepsilon\mu_0\Big\} \nonumber\\
    &\leq  \Pr\{\max\;\!\!_{k\in \{1,\cdots,N_\eta\}} \;\|\widehat{H}(\bfw_k)-\bar{H}(\bfw_k)\| \geq \varepsilon\mu_0\}\nonumber\\
    &\leq 2c\bigg(\frac{42 B\;L_\phi^3}{\varepsilon\mu_0\;p_{\min}^3}\bigg)^c \bigg(\sfN_t\exp\bigg\{\frac{-\sfN_c \;\mu_0^2 \;\varepsilon^2/8}{\sigma^2\! +  2\varepsilon\mu_0(L_\phi/p_{\min})^2 /3} \bigg\} + \exp\bigg\{\frac{-\sfN_t\; \mu_0^2 \;\varepsilon^2/8}{\sigma^2 +  \varepsilon\mu_0(L_\phi/p_{\min})^2 /3}\bigg\}\bigg)\nonumber.
\end{align}
Finally, for all $\varepsilon>0$ and $\delta \in (0,1/2)$, if we choose 
\begin{align*}
    \sfN_t &= {\calO}\!\bigg(\frac{L_\phi^4}{\mu_0^2\; p_{\min}^4\,\varepsilon^2}\,
\bigg(c\log\!\bigg(\frac{BL_\phi^3}{\mu_0\;p_{\min}^3\,\varepsilon}\bigg) + \log\!\left(\frac{c}{\delta}\right)\!\!\bigg)\\
\eqand 
\sfN_c& = {\calO}\!\bigg(\frac{L_\phi^4}{\mu_0^2\; p_{\min}^4\,\varepsilon^2}\,
\bigg(c\log\!\bigg(\frac{BL_\phi^3}{\mu_0\;p_{\min}^3\,\varepsilon}\bigg) + \log\!\left(\frac{c\sfN_t}{\delta}\right)\!\!\bigg),
\end{align*}
then, with probability at least $(1-2\delta)$,
$\sup_{\bfw\in\calW_B}\|\widehat{H}(\bfw)-\bar{H}(\bfw)\|\le 2\varepsilon\mu_0.$ This establishes uniform convergence of the empirical Hessian
over $\calW_B$. Next, we show the empirical strong convexity using Weyl’s inequality \cite[Equation 8.9]{wainwright2019high}. For each
$\bfw\in\calW_B$,
\[
\big|\lambda_{\min}(\widehat{H}(\bfw))-\lambda_{\min}(\bar{H}(\bfw))\big|
\le \|\widehat{H}(\bfw)-\bar{H}(\bfw)\|.
\]
Taking the supremum over $\bfw\in\calW_B$ yields,
\[
\sup_{\bfw\in\calW_B}
\big|\lambda_{\min}(\widehat{H}(\bfw))-\lambda_{\min}(\bar{H}(\bfw))\big|
\le
\sup_{\bfw\in\calW_B}
\|\widehat{H}(\bfw)-\bar{H}(\bfw)\|.
\]
We now use the inequality that for bounded functions
$g_1,g_2:\calX\to\RR$,
$|\inf_{\calX} g_1-\inf_{\calX} g_2|\le \sup_{\calX}|g_1-g_2|$.
Applying this with
$g_1(\bfw)\!=\!\lambda_{\min}(\widehat{H}(\bfw))$ and
$g_2(\bfw)\!=\!\lambda_{\min}(\bar{H}(\bfw))$, we obtain
\[
\Big|
\inf_{\bfw\in\calW_B}\lambda_{\min}(\widehat{H}(\bfw))
-
\inf_{\bfw\in\calW_B}\lambda_{\min}(\bar{H}(\bfw))
\Big|
\le
\sup_{\bfw\in\calW_B}
\|\widehat{H}(\bfw)-\bar{H}(\bfw)\|.
\]
Recalling the strong convexity assumption, $\inf_{\bfw\in\calW_B}\lambda_{\min}(\bar{H}(\bfw))\ge \mu_0.$
Therefore,
$\inf_{\bfw\in\calW_B}\lambda_{\min}\big(\widehat{H}(\bfw)\big)
\ge (1-2\varepsilon)\mu_0$
with probability at least $(1-2\delta)$.
This completes the first part of Theorem \ref{thm:emp_strongconvexity}.

Next, we derive the finite-sample bound on the parameter error. Define the event
\[\mathcal{E}_{\rm sc}
:=\{
\inf_{\bfw\in\calW_B}
\lambda_{\min}\big(\hat{H}(\bfw)\big)
\ge (1-2\varepsilon)\mu_0\}.\] On the event $\mathcal{E}_{\mathrm{esc}}$, the empirical loss $\widehat{L}$ is
$(1-2\varepsilon)\mu_0$-strongly convex over $\calW_B$.
Therefore, from the  equivalent conditions of strong convexity (see Appendix \ref{app:strong_convexity}), for all $\bfw\in\calW_B$,
\begin{equation}
    (1-2\varepsilon)\mu_0\,\|\bfw^*-\bfw\|^2 \leq \big(\nabla_\bfw \widehat{L}(\bfw^*)-\nabla_\bfw \widehat{L}(\bfw)\big)^\intercal(\bfw^*-\bfw)  \leq \big\|\nabla_\bfw \widehat{L}(\bfw^*)-\nabla_\bfw \widehat{L}(\bfw)\big\|\; \|\bfw^*-\bfw\|, \label{eqn:strong_convex1}
\end{equation}
where the second inequality follows from the Cauchy--Schwarz inequality.
Since $\widehat{\bfw} \in \text{interior}(\calW_B)$ with probability 1, the inequality \eqref{eqn:strong_convex1} holds at $\bfw=\widehat{\bfw}$, yielding
\[(1-2\varepsilon)\mu_0\,\|\bfw^*-\widehat{\bfw}\| \leq \big\|\nabla_\bfw \widehat{L}(\bfw^*)-\nabla_\bfw \widehat{L}(\widehat{\bfw})\big\| = \big\|\nabla_\bfw \widehat{L}(\bfw^*)\big\|\]
with probability at least $(1-2\delta)$, where the equality follows from the first-order optimality condition $\nabla_\bfw \widehat{L}(\widehat{\bfw}) = 0$.
Finally, we need to bound the norm of the gradient $\|\nabla_\bfw \widehat{L}(\bfw^*)\|$. To do this, we use Lemma \ref{lem:matrix_sampling} along with the union bound. Consider the following inequalities:
\[\|\nabla_\bfw \widehat{L}(\bfw^*)\|\leq \frac{1}{\sfN_t}\sum_{i=1}^{\sfN_t} \Big\|\frac{1}{\sfN_c}\sum_{k=1}^{\sfN_c}\nabla_\bfw \ell(\phi(\bfm_{(i,k)}|t_i,\bfw^*))\Big\|
\leq \max_{1\leq i \leq \sfN_t} \Big\|\frac{1}{\sfN_c}\sum_{k=1}^{\sfN_c}\nabla_\bfw \ell(\phi(\bfm_{(i,k)}|t_i,\bfw^*))\Big\|.\]
Observe that, for a given time $t_i$, the random vectors $\nabla_\bfw \ell(\phi(\bfm_{(i,k)}|t_i,\bfw^*))$ are independent satisfying $\EE_{\phi(\cdot|t_i,\bfw^*)}[\nabla_\bfw \ell(\phi(\bfm_{(i,k)}|t_i,\bfw^*))] = 0$, and by Lemma \ref{lem:grad-hess-exp}, we have $\|\nabla_\bfw \ell(\phi(\bfm_{(i,k)}|t_i,\bfw^*))\| \leq (L_\phi/p_{\min})$. Therefore, for any $\delta'\in(0,1)$, the following inequality holds
\[\|\nabla_\bfw \widehat{L}(\bfw^*)\| \leq \sqrt{\frac{2\tau^2_{\max}}{\sfN_c} \log\left(\frac{(1+c)\sfN_t}{\delta'}\right)} + \frac{2L_\phi}{3p_{\min} \sfN_c}\log\left(\frac{(1+c)\sfN_t}{\delta'}\right),\]
with probability at least $(1-\delta')$, where $\tau^2_{\max}:= \max_i \tau^2_i$ and \[\tau^2_i:= \|\EE_{\phi(\cdot|t_i,\bfw^*)}[\nabla_\bfw \ell(\phi(\bfm_{(i,k)}|t_i,\bfw^*)) \nabla_\bfw \ell(\phi(\bfm_{(i,k)}|t_i,\bfw^*)) ^\intercal]\| \leq (L_\phi/p_{\min})^2\] for all $1\leq i\leq \sfN_t.$ Let $\delta' = \delta$. This gives us the desired error bound. For any $\delta >0$, the following inequality holds:
 \begin{equation}
    \Pr\bigg\{\calE_{\rm sc} \cap \|\widehat{\bfw} - \bfw^*\| \leq \frac{(L_\phi/p_{\min})}{(1-2\varepsilon)\mu_0}\bigg(\sqrt{\frac{2}{\sfN_c} \log\bigg(\frac{(1+c)\sfN_t}{\delta}\bigg)} + \frac{1}{\sfN_c}\log\bigg(\frac{(1+c)\sfN_t}{\delta}\bigg)\bigg)\bigg\} \geq (1-3\delta),
 \end{equation}
This completes the proof of Theorem \ref{thm:emp_strongconvexity}.
 
\subsection{Proof of Theorem \ref{thm:unif}}
\textit{Proof Outline:} Similar to proof of Theorem \ref{thm:emp_strongconvexity}, we first decompose $\sup_{\bfw \in\calW_B}|\widehat{L}(\bfw)-L(\bfw)|$ into within time and across time components. To control each part, we use the notion of empirical \emph{Rademacher complexity} \cite{shalev2014understanding,wainwright2019high}. Combining these components yields a uniform convergence bound that vanishes in probability as $\sfN_t,\sfN_c\to\infty$.

To formalize the proof outline above, we define $L_{t}(\bfw) := \EE_{\bfm \sim \phi(\cdot | t, \bfw^\ast)}[\ell(\phi(\bfm | t, \bfw))]$. Note by using Jensen’s inequality and Lemma~\ref{lem:grad-hess-exp}, the function $L_t(\bfw)$ is Lipschitz continuous on $\calW_B$ with constant $L_\phi/p_{\min}$: for any $t\in(0,t_{\max}]$ and any $\bfw,\bfw'\in\calW_B$,
\begin{equation}\label{eqn:lipchitz_Lt}
    | L_t(\bfw)-L_t(\bfw') | \le \frac{L_\phi}{p_{\min}} \, \| \bfw-\bfw'\| .
\end{equation}
Consider the following inequalities:
\begin{align*}
\big| \widehat{L}(\bfw)-L(\bfw)\big|
&=
\Big|
\frac{1}{\sfN_t\sfN_c}\sum_{i=1}^{\sfN_t}\sum_{k=1}^{\sfN_c}
\ell(\phi(\bfm_{(i,k)}|t_i,\bfw)) 
-L(\bfw)
\Big|\\
&\le
{\Big|\frac{1}{\sfN_t\sfN_c}\sum_{i=1}^{\sfN_t}\sum_{k=1}^{\sfN_c}
\ell(\phi(\bfm_{(i,k)}|t_i,\bfw)) 
-\frac{1}{\sfN_t}\sum_{i=1}^{\sfN_t} L_{t_i}(\bfw)\Big|} + {\rmT_2(\bfw)}\le \frac{1}{\sfN_t}\sum_{i=1}^{\sfN_t} \rmT_1^{(i)}(\bfw) + {\rmT_2(\bfw)}
\end{align*}
where \[{\rmT_2(\bfw)} := \Big|\frac{1}{\sfN_t}\sum_{i=1}^{\sfN_t} L_{t_i}(\bfw)-L(\bfw)\Big| \eqand \rmT_1^{(i)}(\bfw):= {\Big|\frac{1}{\sfN_c}\sum_{k=1}^{\sfN_c}
\ell(\phi(\bfm_{(i,k)}|t_i,\bfw)) 
-L_{t_i}(\bfw)\Big|}.\]
Hence, \[\sup_{\bfw\in\calW_B}\big| \widehat{L}(\bfw)-L(\bfw)\big|
\le
\frac{1}{\sfN_t}\sum_{i=1}^{\sfN_t}\sup_{\bfw\in\calW_B} \rmT_1^{(i)}(\bfw)
+
\sup_{\bfw\in\calW_B} \rmT_2(\bfw)
\le \max_{1\leq i \leq \sfN_t} \sup_{\bfw\in\calW_B}  \rmT_1^{(i)}(\bfw) + \sup_{\bfw\in\calW_B} {\rmT_2(\bfw)}.\]
Next, define $\calS_i:=\{\bfm_{(i,1)},\cdots,\bfm_{(i,\sfN_c)}\}$ be the set of measurement outcomes at a time $t_i$ and $\calT:= \{t_1,\cdots,t_{\sfN_t}\}$ be the set of times. We now bound the above terms using McDiarmid's inequality \cite[Corollary 2.21]{wainwright2019high} \cite[Lemma 26.4]{shalev2014understanding}.  In particular, for any $\delta_1,\delta_2 > 0$, the following bounds hold:
\begin{align}
    \max_{1\le i\le \sfN_t}\;
\sup_{\bfw\in\calW_B}
\rmT_1^{(i)}(\bfw)
&\le
\max_i 2\;\hat{\mathfrak R}_{S_i}(\calL_{\bfm}^{(i)})
+3(-\log p_{\min})\sqrt{\frac{\log({2\sfN_t/\delta_1})}{2\sfN_c}}\label{eqn:unif_1}\\
\eqand \sup_{\bfw\in\calW_B}\rmT_2(\bfw)
&\le
2\,\hat{\mathfrak R}_\calT(\calL_{t})
\;+\;
3(-\log p_{\min})\sqrt{\frac{2\log(2/\delta_2)}{\sfN_t}},\label{eqn:unif_2}
\end{align}
with probabilities of at least $(1\!-\!2\delta_1)$ and $(1\!-\!2\delta_2)$, respectively.
The derivation of equations \eqref{eqn:unif_1} and \eqref{eqn:unif_2} are provided in Appendices \ref{app:proof:eqn:unif_1} and \ref{app:proof:eqn:unif_2}, respectively.
Here, $\calL_{\bfm}^{(i)}:=\{\bfm\mapsto \ell(\phi(\bfm|t_i,\bfw)):\;\bfw\in\calW_B\}$ is the loss class defined over outcomes $\bfm$ at time $t_i$, and $\calL_{t}:=\{t\mapsto \EE_{\phi(\cdot|t,\bfw^*)}[\ell(\phi(\bfm|t,\bfw))]: \;\bfw\in\calW_B\}$ is the loss class defined over time $t$. The empirical Rademacher complexities are defined as:
\begin{align*}
    \hat{\mathfrak R}_{S_i}(\calL_{\bfm}^{(i)})
&:=
\EE_{\boldsymbol\sigma}\left[
\sup_{\bfw\in\calW_B}\frac{1}{\sfN_c}\sum_{k=1}^{\sfN_c}
\sigma_{(i,k)}\,
\ell(\phi(\bfm_{(i,k)}|t_i,\bfw)) 
\right]\;\eqand \;
\hat{\mathfrak R}_\calT(\calL_{t}):=
\EE_{\boldsymbol\sigma}\left[
\sup_{\bfw\in\calW_B}\frac{1}{\sfN_t}\sum_{i=1}^{\sfN_t}\sigma_i\,L_{t_i}(\bfw)
\right],
\end{align*}
where $\{\sigma_{(i,k)}\}$ and $\{\sigma_i\}$ are independent Rademacher random variables (i.e., uniformly distributed on $\{\pm 1\}$).  
It remains to derive bounds on the Rademacher complexities of the loss classes.
We bound $\hat{\mathfrak R}_{S_i}(\calL_{\bfm}^{(i)})$ using the contraction lemma \cite[Lemma 26.9]{shalev2014understanding}  together with Lemma~\ref{lem:grad-hess-exp}, as stated in Proposition~\ref{prop:rademacher_1}. Additionally, we bound $\hat{\mathfrak R}_{\calT}(\calL_{t})$ using Dudley’s theorem \cite{dudley1967sizes}\cite[Example 5.24]{wainwright2019high} and Eqn.~\ref{eqn:lipchitz_Lt}, as stated in Proposition~\ref{prop:rademacher_2}.

\begin{prop}
\label{prop:rademacher_1}
Given a set of sampled times $\calT$, the empirical Rademacher complexity of the loss class $\calL_{\bfm}^{(i)}$ satisfies the following bound:
\[\max_{i}\hat{\mathfrak R}_{\calS_i}(\calL_{\bfm}^{(i)})\le
\frac{\sqrt{2\log(2\sfD)}}{p_{\min}\sqrt{\sfN_c}},\]
for each $t_i \in \calT$, as well as for any set of measurement outcomes $\calS_i$.
\end{prop}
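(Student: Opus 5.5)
The plan is to peel off the logarithm with a contraction argument, relax the supremum over $\bfw$ to a supremum over all density matrices, and then apply the matrix Khintchine bound of Lemma~\ref{lem:khintchine}. Fix $i$ and the samples $\calS_i=\{\bfm_{(i,1)},\dots,\bfm_{(i,\sfN_c)}\}$.

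\textbf{Step 1: contraction.} Consider the base class $\calF^{(i)}:=\{\bfm\mapsto \phi(\bfm|t_i,\bfw):\bfw\in\calW_B\}$. By assumption $\textsf{A}(ii)$, every value $\phi(\bfm|t_i,\bfw)$ lies in $[p_{\min},1]$. On this interval $\ell(x)=-\log x$ is $1/p_{\min}$-Lipschitz, since $|\ell'(x)|=1/x\le 1/p_{\min}$. The contraction lemma \cite[Lemma 26.9]{shalev2014understanding} then gives
\[
\hat{\mathfrak R}_{\calS_i}(\calL^{(i)}_\bfm)\le \frac{1}{p_{\min}}\,\hat{\mathfrak R}_{\calS_i}(\calF^{(i)}).
\]
If needed, $\ell$ can first be replaced by a globally $1/p_{\min}$-Lipschitz extension agreeing with it on $[p_{\min},1]$, so the lemma applies verbatim.

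\textbf{Step 2: linearization in the state.} By linearity of the trace,
\[
\frac{1}{\sfN_c}\sum_{k}\sigma_{(i,k)}\phi(\bfm_{(i,k)}|t_i,\bfw)=\frac{1}{\sfN_c}\tr\Big(\Big(\sum_k\sigma_{(i,k)}\Lambda_{\bfm_{(i,k)}}\Big)\rho_{t_i}(\bfw)\Big).
\]
Each $\rho_{t_i}(\bfw)$ is a density matrix, so $\|\rho_{t_i}(\bfw)\|_1=1$. Hölder's inequality for Schatten norms therefore bounds the supremum over $\bfw\in\calW_B$ by $\frac{1}{\sfN_c}\big\|\sum_k\sigma_{(i,k)}\Lambda_{\bfm_{(i,k)}}\big\|$. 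Taking $\EE_{\boldsymbol\sigma}$, we obtain
\[
\hat{\mathfrak R}_{\calS_i}(\calF^{(i)})\le \frac{1}{\sfN_c}\,\EE_{\boldsymbol\sigma}\Big\|\sum_{k=1}^{\sfN_c}\sigma_{(i,k)}\Lambda_{\bfm_{(i,k)}}\Big\|.
\]
This step discards all structure of the Hamiltonian. That is exactly why the bound depends only on $\sfD$, not on $c$ or $B$.

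\textbf{Step 3: Khintchine bound.} The matrices $\Lambda_{\bfm_{(i,k)}}$ are fixed, Hermitian and of size $\sfD\times\sfD$. Applying Lemma~\ref{lem:khintchine} gives
\[
\EE_{\boldsymbol\sigma}\Big\|\sum_k\sigma_{(i,k)}\Lambda_{\bfm_{(i,k)}}\Big\|\le\sqrt{2\log(2\sfD)}\,\Big\|\sum_k\Lambda_{\bfm_{(i,k)}}^2\Big\|^{1/2}.
\]
Each $\Lambda_\bfm$ is a tensor product of POVM elements with $0\le\Lambda_m\le \rmI$. Hence $0\le\Lambda_\bfm\le\rmI$, so $\Lambda_\bfm^2\le\rmI$ and $\|\sum_k\Lambda^2_{\bfm_{(i,k)}}\|\le\sfN_c$. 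Combining with Steps 1 and 2 yields
\[
\hat{\mathfrak R}_{\calS_i}(\calL^{(i)}_\bfm)\le \frac{\sqrt{2\log(2\sfD)}}{p_{\min}\sqrt{\sfN_c}}.
\]
This bound is uniform in $i$, in $t_i$ and in $\calS_i$, so it also holds for the maximum over $i$.

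\textbf{Main obstacle.} The only delicate point is Step 1: making sure the contraction lemma is applied with a function that is Lipschitz on the actual range $[p_{\min},1]$. This is exactly where assumption $\textsf{A}(ii)$ enters, since it gives a uniform lower bound over all $\bfm$, $t$ and $\bfw$. The remaining steps are direct.
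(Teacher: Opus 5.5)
Your proposal is correct and follows essentially the same route as the paper. The paper also applies the contraction lemma using that $-\log$ is $1/p_{\min}$-Lipschitz on $[p_{\min},1]$, then linearizes inside the trace and uses H\"older's bound $|\tr(\rmA\rho)|\le\|\rmA\|\,\|\rho\|_1$ with $\|\rho_{t_i}(\bfw)\|_1=1$, then Lemma~\ref{lem:khintchine}, and finally $\|\sum_k\Lambda_{\bfm_{(i,k)}}^2\|\le\sfN_c$ from $0\le\Lambda_\bfm\le\rmI$; your remark about passing to a globally Lipschitz extension of $\ell$ makes explicit a point the paper leaves implicit.
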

\begin{proof}
    The proof is provided in Appendix \ref{app:proof:prop:rademacher_1}.
\end{proof}
\begin{prop}
\label{prop:rademacher_2}
The empirical Rademacher complexity of the loss class $\calL_{t}$ satisfies the following bound:
\[\hat{\mathfrak R}_\calT(\calL_{t})\le
\frac{36\; L_\phi B \sqrt{\pi c}}{p_{\min}\sqrt{\sfN_c}},\]
for any set of sampled times $\calT$.
\end{prop}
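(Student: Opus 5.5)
The plan is to run Dudley's entropy-integral bound on the Rademacher process indexed by $\calW_B$, using the Lipschitz property \eqref{eqn:lipchitz_Lt}. As in the proof outline, this naturally produces the stated constant, but with $\sqrt{\sfN_t}$ in the denominator. The proposal therefore has to check, as a separate last step, whether that can be converted into the $\sqrt{\sfN_c}$ form the statement asserts; I expect that step to be the real obstacle.

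\textbf{Step 1: set up a sub-Gaussian process.} Condition on $\calT$ and define $X_\bfw := \frac{1}{\sqrt{\sfN_t}}\sum_{i=1}^{\sfN_t}\sigma_i L_{t_i}(\bfw)$. Then $\hat{\mathfrak R}_\calT(\calL_t)=\sfN_t^{-1/2}\,\EE_{\boldsymbol\sigma}\sup_{\bfw}X_\bfw$. Since $\EE_{\boldsymbol\sigma}X_{\mathbf 0}=0$, we may replace $X_\bfw$ by $X_\bfw-X_{\mathbf 0}$ without changing the expectation of the supremum. For fixed $\bfw,\bfw'$, $X_\bfw-X_{\bfw'}$ is a Rademacher sum with coefficients bounded via \eqref{eqn:lipchitz_Lt}, namely $|L_{t_i}(\bfw)-L_{t_i}(\bfw')|\le (L_\phi/p_{\min})\|\bfw-\bfw'\|$. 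Hoeffding's lemma then shows the increments are sub-Gaussian with respect to the metric $\rho(\bfw,\bfw'):=(L_\phi/p_{\min})\|\bfw-\bfw'\|$.

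\textbf{Step 2: apply Dudley's bound.} Dudley's theorem \cite[Example~5.24]{wainwright2019high} gives
\[
\EE\sup_{\bfw}X_\bfw\le 12\int_0^{D}\sqrt{\log N(\epsilon;\calW_B,\rho)}\,\rmd\epsilon .
\]
Here the diameter is $D=2BL_\phi/p_{\min}$. The covering number satisfies $N(\epsilon)\le (1+2BL_\phi/(p_{\min}\epsilon))^c$, by the volumetric bound for a Euclidean ball, rescaled by the Lipschitz constant. Substituting $u=\epsilon p_{\min}/(BL_\phi)$ gives
\[
\EE\sup_{\bfw}X_\bfw\le 12\sqrt{c}\,\frac{BL_\phi}{p_{\min}}\int_0^2\sqrt{\log(1+2/u)}\,\rmd u .
\]
Using $\log(1+2/u)\le 2/u$ on a crude range, or a Gaussian-integral comparison, the remaining integral is bounded by $3\sqrt{\pi}$. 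This yields $36\,L_\phi B\sqrt{\pi c}/p_{\min}$, which is exactly the constant in the statement. Dividing by $\sqrt{\sfN_t}$ gives $\hat{\mathfrak R}_\calT(\calL_t)\le 36L_\phi B\sqrt{\pi c}/(p_{\min}\sqrt{\sfN_t})$, uniformly over $\calT$.

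\textbf{Step 3 (main obstacle): obtain $\sqrt{\sfN_c}$ in the denominator.} The class $\calL_t$ depends only on the $\sfN_t$ time samples, so chaining inherently delivers $\sfN_t^{-1/2}$. Passing to the stated $\sfN_c^{-1/2}$ requires an extra comparison.

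\begin{itemize}
\item My first attempt is to restrict to the regime $\sfN_c\le\sfN_t$, where the stated bound follows immediately from Step 2 since $\sfN_t^{-1/2}\le\sfN_c^{-1/2}$.
\item Outside that regime, I would try the trivial bound $\hat{\mathfrak R}_\calT(\calL_t)\le \sup_{\bfw,t}|L_t(\bfw)-L_t(\mathbf 0)|\le 2BL_\phi/p_{\min}$ as a fallback. However, this does not decay in $\sfN_c$.
\end{itemize}

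So I expect the statement as worded to hold only under $\sfN_c\le\sfN_t$. Without that condition, the bound that is actually provable, and the one consistent with the first term of Theorem~\ref{thm:unif}, has $\sqrt{\sfN_t}$ in the denominator. I would flag this discrepancy explicitly rather than force the $\sfN_c$ form.
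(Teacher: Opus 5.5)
Your Steps~1--2 are correct and essentially reproduce the paper's proof: Dudley's entropy integral, with the Lipschitz bound \eqref{eqn:lipchitz_Lt} transferring Euclidean-ball covering numbers to $\calL_t$, and the paper's own computation also ends with $\sqrt{\sfN_t}$ (matching the first term of Theorem~\ref{thm:unif}), so the $\sfN_c$ in the statement is a typo. The differences only move numerical constants: the paper uses the empirical pseudo-metric $d_\calT$, the covering bound $(3BL_\phi/(p_{\min}\varepsilon))^c$, Dudley's constant $24$ and a $\Gamma(3/2)$ substitution, while you use the volumetric bound and $\log(1+x)\le x$, and your $36\sqrt{\pi}$ relies on the constant-$12$ form of Dudley's inequality. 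Your Step~3 flag is right and can be sharpened: $\hat{\mathfrak R}_\calT(\calL_t)$ depends only on $\calT$ and is strictly positive as soon as some $L_{t_i}$ is nonconstant on $\calW_B$ (for $\sfN_t=1$ it equals $\tfrac12(\sup_{\bfw}L_{t_1}-\inf_{\bfw}L_{t_1})$), so the $\sfN_c^{-1/2}$ form is actually false for large $\sfN_c$, not merely unprovable, and $\sfN_c\le\sfN_t$ is only a sufficient condition for it.
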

\begin{proof}
    The proof is provided in Appendix \ref{app:proof:prop:rademacher_2}.
\end{proof}

Finally, by applying Propositions \ref{prop:rademacher_1} and \ref{prop:rademacher_2}, along with equations \eqref{eqn:unif_1} and \eqref{eqn:unif_2}, and setting $\delta_1 = \delta_2 = \delta$, we derive the following: for any $\delta \in (0,1/4)$, with probability  at least $(1-4\delta)$, we conclude that
\begin{equation*}\label{eqn:uc}
\sup_{\bfw\in\calW_B}\big|\hat{L}(\bfw)-L(\bfw)\big|
\le\;\frac{36 L_\phi B \sqrt{\pi c}}{p_{\min}\sqrt{\sfN_t}}
+\frac{2\sqrt{2\log(2\sfD)}}{p_{\min}\sqrt{\sfN_c}}
+3(-\log p_{\min})\bigg(\sqrt{\frac{2\log(2/\delta)}{\sfN_t}}+\sqrt{\frac{\log(2\sfN_t/\delta)}{2\sfN_c}}\bigg).
\end{equation*}
\subsection{Construction of the parameterized Hamiltonian for GRNs (Eqn.~\ref{eqn:grn_hamiltonian})}
We first write the Hamiltonian as a sum of pairwise interaction terms,
$\rmH(\bfw) = \sum_{(i,j)\in\calE} w_{ij}\,\rmH_{ij},$ 
where \( w_{ij} \) quantifies the strength and direction of regulation from $\genei$ to $\genej$ and the 
operator \( \rmH_{ij} \) represents the pairwise interaction term that captures the \textit{quantum-like} directed regulatory influence of $\genei$ on $\genej$.
The structure of the interaction terms 
$\rmH_{ij}$ is as follows. Each such term must (i) act on the gene 
$\genej$ only when gene 
$\genei$ is expressed, and (ii) given $\genei$ is expressed, it must change the state of $\genej$ between $|0\rangle_j\eqand |1\rangle_j$.
The first conditionality condition is naturally enforced by the operator \(\lvert 1 \>\!\< 1 \rvert_i\). {Note that operators such as \(\lvert 1 \>\!\< 0 \rvert_i\) and \(\lvert 0 \>\!\< 1 \rvert_i\) are not suitable in this context, as they would change the state of $\genei$ instead of simply conditioning the interaction on its presence.} Therefore, $\rmH_{ij}$ takes the form $\rmH_{ij} = |1\>\<1|_i \tensor O_j$, where $O_j$ is an operator acting on $\genej$.

The second condition requires that the operator $O_j$ contains only off-diagonal elements, i.e., 
 \(\lvert 1 \>\!\< 0 \rvert_j\) (transition from unexpressed to expressed) and  
 \(\lvert 0 \>\!\< 1 \rvert_j\) (transition from expressed to unexpressed). To determine the relative signs of these transitions, recall that the system evolves according to the Schrodinger equation $\tfrac{\rmd}{\dt}|\psi(t)\> = -\imag \rmH |\psi(t)\>$. The generator $-\imag w_{ij} \rmH_{ij}$ governs the direction of regulatory influence of $\genei$ on $\genej$. To encode a meaningful distinction between activation \(|0\>_j\rightarrow|1\>_j\) and repression \(|1\>_j\rightarrow|0\>_j\), we impose a condition on the action of this generator. When $\genei$ is expressed, we require
\[
-\imag w_{ij}\rmH_{ij}\,\ket{1}_i\ket{0}_j = +\,w_{ij}\ket{1}_i\ket{1}_j,\qquad
-\imag w_{ij}\rmH_{ij}\,\ket{1}_i\ket{1}_j = -\, w_{ij}\ket{1}_i\ket{0}_j.
\]
This asymmetry condition ensures that the sign associated with the transitions between the expression states of \(\genej\) is consistent with our convention for $w_{ij}$. In particular, when \(w_{ij}>0\), corresponding to activation, the transition \(|0\rangle_j \to |1\rangle_j\) appears with a positive sign, while the reverse transition \(|1\rangle_j \to |0\rangle_j\) appears with a negative sign. Conversely, when \(w_{ij}<0\), corresponding to repression, the sign assignments are reversed: the transition \(|1\rangle_j \to |0\rangle_j\) receives a positive sign and \(|0\rangle_j \to |1\rangle_j\) receives a negative sign. 
These requirements uniquely lead to the selection of the Pauli operator $\rmY_j$. Moreover, this construction preserves Hermiticity. {Note that $\rmX_j$ also contains only off-diagonal elements. However, it assigns identical signs for both $|0\>_j \rightarrow |1\>_j\eqand|1\>_j \rightarrow |0\>_j$. As a result, it does not provide a distinction between activation and repression at the level of the generator.} 
Taking into account all the requirements, the GRN Hamiltonian is therefore expressed as
\[
\rmH(\bfw)
= \sum_{\substack{(i,j) \in \calE}} 
w_{ij}\,\big( 
\ket{1}\!\bra{1}_i \otimes 
\big( i\ket{1}\!\bra{0}_j - i\ket{0}\!\bra{1}_j \big)
\big) = \sum_{\substack{(i,j) \in \calE}} w_{ij}\,\tfrac{1}{2}(\rmI-\rmZ_i)\tensor \rmY_j.
\] 

\subsection{Construction of the IC-POVM for GRNs (Eqn.~\ref{eqn:ic-povm_qgrn})}
We employ an \textit{informationally complete} measurement in QHGM, assuming that gene expression data are sufficient to reconstruct the underlying regulatory state of the system. In other words, by observing the distribution of expression levels across all genes, we can infer the latent state that encodes the regulatory logic of the network. This aligns with a common assumption in systems biology that transcriptomic data across sufficiently enough pseudotime can be used to infer gene–gene interactions \cite{marbach2012wisdom, aibar2017scenic, papili2018sincerities, sanchez2018bayesian}.

The main idea behind the construction of the IC-POVM is that its outcomes should provide a discrete representation of gene expression in a cell. Since we represent a gene as a qubit, the IC-POVM requires at least four linearly independent elements \cite{medlock2020informationally}. 
To construct these four IC-POVM elements, consider the parameterized Bloch-sphere representation of a single-qubit Hermitian operator in terms of Pauli operators \cite{nielsen2010quantum}
\[
\Lambda_m
=
\frac{1}{4}
\left(
I + \vec r_m \cdot \vec\sigma
\right),
\qquad
m\in \{0,1,2,3\},
\]
where \(\vec r_m \in \mathbb{R}^3\) are Bloch vectors and
\(\vec\sigma = (\rmX,\rmY,\rmZ)\). The POVM condition requires \(\sum_{m=0}^3 \Lambda_m = \rmI\),
which implies 
\(\sum_{m=0}^3 \vec r_m = 0\) and 
\(\|\vec r_m\|_2 = 1\) for all \(m\). Furthermore, 
the informational completeness requires the four operators
\(\{\Lambda_m\}_{m=0}^3\) to be linearly independent, i.e., the following $4\times 4$ matrix
\[
R = \begin{bmatrix}
1 & 1 & 1 & 1 \\
\vec r_0 & \vec r_1 & \vec r_2 & \vec r_3
\end{bmatrix}
\]
having full rank. The constraints imposed by normalization, positivity, and informational completeness define an underdetermined system for the Bloch vectors \(\{\vec r_m\}_{m=0}^3\). As a result, many IC-POVM constructions are possible. However, we construct a set of four elements whose measurement outcomes $m$ correspond to distinct levels of gene expression.

\noindent To this end, we introduce \emph{expression score} $\tau_m \in [0,1],$ defined as \[\tau_m := \frac{\tr(|1\>\<1|\; \Lambda_m)}{\tr(\Lambda_m)} = \frac{(1-\cos\alpha_m)}{2},\] where $\alpha_m$ measures the alignment of $\vec r_m$ with the reference state \(\ket{0}\) and $\cos\alpha_m$ corresponds to the $\rmZ$-axis component of the Bloch vector $\vec r_m$. For instance, \(\vec r_m=(0,0,\cos 0)=(0,0,1)\) yields \(\tau_m=0\), corresponding to the state \(\ket{0}\) and \(\vec r_m=(0,0,\cos \pi)=(0,0,-1)\) yields \(\tau_m=1\), corresponding to the state \(\ket{1}\).
 {Note that $\tau_m$ is not a Born probability
of any measurement on the unknown state; rather, it provides a fixed, geometry-based encoding of measurement outcomes that is consistent with our 
$\rmZ$-axis interpretation of a gene being \textit{unexpressed} or \textit{expressed} within a cell.} Guided by this interpretation, we choose angles $\{\alpha_m\}_{m=0}^3$ symmetrically as \({\pi}/{6}\), \({2\pi}/{6}\), \({4\pi}/{6}\), and \({5\pi}/{6}\), respectively. 
This choice yields an ordered set of expression scores: $\tau_0 = 0.0670, \tau_1 = 0.25, \tau_2 = 0.75, \tau_3 = 0.9330$, representing four distinct levels of gene expression ranging from lower to higher expression, respectively. The associated \(\rmZ\)-components of Bloch vectors $\{\vec r_m\}_{m=0}^3$ are \({\sqrt{3}}/{2}\), \({1}/{2}\), \(-{1}/{2}\), and \(-{\sqrt{3}}/{2}\), respectively, which remains consistent with the zero-sum constraint \(\sum_{m=0}^3 \vec r_m = 0\). The \(\rmX\) and \(\rmY\) components of Bloch vectors are then selected so that each \(\vec r_m\) is a unit vector and the matrix \(R\) is of full rank. One explicit solution satisfying these constraints is given as 
$\vec r_0 = (1/2,0,\sqrt{3}/2),\; \vec r_1 = (-1/2,-1/\sqrt{2},1/2),\; \vec r_2 = (-1/2,1/\sqrt{2},1/2),\eqand \vec r_3 = (1/2,0,-\sqrt{3}/2)$, which gives the IC-POVM given in
\eqref{eqn:ic-povm_qgrn}.

\begin{remark}
    A canonical example of an informationally complete measurement for a qubit is the symmetric informationally complete POVM (SIC-POVM) \cite{renes2004symmetric}, whose four elements correspond to Bloch vectors pointing to the vertices of a regular tetrahedron.  
While mathematically elegant, this construction does not readily admit a biologically interpretable ordering of measurement outcomes. The Bloch vectors corresponding to the SIC-POVM are given as \[
\begin{aligned}
\vec r_0 = (0,\,0,\,1),
\vec r_1= \left(\frac{2\sqrt{3}}{3},\,0,\,-\frac{1}{{3}}\right),
\vec r_2 = \left(-\frac{\sqrt{2}}{3},\,\frac{\sqrt{6}}{3},\,-\frac{1}{{3}}\right),\eqand
\vec r_3 = \left(-\frac{\sqrt{2}}{3},\,-\frac{\sqrt{6}}{3},\,-\frac{1}{{3}}\right).
\end{aligned}
\]
The resulting expression scores are as follows: $\tau_0=0, \tau_1 = \tau_2 = \tau_3 = 0.667$. Thus, three of the four SIC-POVM elements give identical $\tau_m$. This degeneracy prevents a meaningful ordering of measurement outcomes, making SIC-POVM unsuitable for the discrete representation of gene expression in a cell.
\end{remark}
{\begin{remark}[Infeasible symmetric angles]
  A natural symmetric choice of angles is
$\{0,\; {\pi}/{3},\; {2\pi}/{3},\; \pi \}.$
However, imposing these angles together with the normalization and zero-sum constraints leads to Bloch vectors of the form
$\vec r_0 = (0,\,0,\,1),\;
\vec r_1 = (a,\, \pm\sqrt{1-a^2},\, 0),\;
\vec r_2 = (-a,\, \mp\sqrt{1-a^2},\, 0),\eqand
\vec r_3 = (0,\,0,\,-1),$
for some \(a \in [0,1]\).
As a result, the matrix \(R\) constructed from these vectors is rank-deficient. This means the corresponding POVM elements are not linearly independent, making it impossible to construct an IC-POVM from this selection.
\end{remark}}
{\begin{remark}[Uniform discretization]
An alternative feasible construction can be obtained by choosing evenly spaced angles, i.e.,
\(\alpha_m = (m+1)\pi/5\), \(m=0,\ldots,3\). However, this construction yields nearly uniform bins for discretizing normalized scRNA-seq data. In contrast, our choice
\(\{\pi/6,2\pi/6,4\pi/6,5\pi/6\}\) deliberately produces non-uniform bin widths (see Fig.~\ref{fig:algorithm}B). This asymmetry is beneficial in the current gene expression context. Since near-unexpressed and strongly expressed regimes, i.e., $m=0 \eqand m=3$, are more decisive for characterizing gene regulation. Therefore, these extreme categories must be assigned finer resolution through smaller bin widths, whereas the intermediate expression regimes, i.e., $m=1 \eqand m=2$, can accommodate coarser grouping.
\end{remark}}
\subsection{Variational Quantum Algorithm for GRN Inference}
We now describe the VQ-Net algorithm for learning the QHGM parameters using discretized scRNA-seq data collected along pseudotime. The raw scRNA-seq data (genes \(\times\) cells) is first preprocessed using tools such as Scanpy \cite{wolf2018scanpy} or the Seurat (R package) toolkit \cite{hao2024dictionary}. Next, the expression profile of each cell is normalized to the range $[0, 1]$ using methods such as Min-Max normalization. 

\vspace{5pt}
\noindent\textit{1. Pseudotime analysis}: Following preprocessing and normalization, a pseudotime value is assigned to
each cell to capture its position along an inferred developmental trajectory. This step can
be performed using established pseudotime inference methods such as Monocle
\cite{trapnell2014monocle} or graph-based approaches, including VIA \cite{stassen2021generalized},
which output a scalar pseudotime value for each cell. The resulting pseudotime $t$ represents
the progression of cell-state transitions inferred from the transcriptomic data. In practice, however, these inferred pseudotime values are affected by the noise in scRNA-seq data
\cite{brennecke2013accounting,grun2014validation}. This leads to small differences
between consecutive pseudotime values assigned to cells. Such fine-scale variability can
introduce numerical instability and increase statistical noise during learning. To obtain stable pseudotime values and align with our QHL framework (which requires repeated measurements at discrete time points), the algorithm categorizes
pseudotime by grouping cells into bins. Each bin is
then assigned a representative pseudotime given by the median pseudotime of the cells it
contains.

\vspace{5pt}
\noindent \textit{2. Discretization of scRNA-seq data:} 
The algorithm uses the presence score \(\tau_m\) to determine the bins for discretizing the continuous and normalized scRNA-seq data into four discrete levels, corresponding to the measurement outcomes \(m \in \{0, 1, 2, 3\}\). These levels represent lower to higher gene expression, respectively.
The boundaries of the discretization bins are defined by the midpoints between consecutive expression scores.
Let \( b_i = ({\tau_{i-1} + \tau_i})/{2} \) for \( i = 1, 2, 3 \). Additionally, define \( b_0 = 0 \) and \( b_4 = 1 \).
Then, a normalized expression value \(g \in [0,1]\) is assigned to a discrete level \(m\) as follows
\[
m = i \quad \text{if} \quad b_{i} \le x < b_{i+1}, \qquad i\in\{0,1,2,3\}.
\]
The discretization process is summarized in Fig.~\ref{fig:algorithm}B.  Thus, we get the discretized scRNA-seq data $\calG\in \{0,1,2,3\}^{\sfN_t\times \sfN_c\times n}$. Here, $\sfN_t$ denotes the number of pseudotime bins,
$\sfN_c$ denotes the number of cells per bin,
and $n$ is the number of genes in the network.

\noindent \textit{3. Initial state preparation:}
The system is initialized in a product state as given in~\eqref{eqn:initial_state},
using available biological priors to estimate the amplitude angles $\theta_i$ and phase angles $\phi_i$ of individual qubits. 
For example, gene-specific  
$\theta_i$ can be estimated from the single-cell expression profiles corresponding to the pseudotime $t=0$ \cite{saelens2019comparison}, by mapping normalized expression levels to empirical activation frequencies \cite{RomanVicharra2023}. The phases $\phi_i$ can be computed using kinetic information derived from transcriptional velocity or pseudotemporal ordering inferred from RNA velocity analyses \cite{la2018rna}\cite{lange2022cellrank}. 
In the absence of such prior information, the system can be initialized in a uniform superposition with $\theta_i=\pi/2$ and $\phi_i=0$ for all $i$, providing an unbiased starting configuration. 
More generally, when priors are unavailable or uncertain, the amplitude and phase parameters, $\boldsymbol{\theta}$ and $\boldsymbol{\phi}$, can be treated as trainable variables and jointly learned along with the weights $\mathbf{w}$.

\vspace{5pt}
\noindent \textit{4. Mini-Batch Optimization:}
The algorithm processes the data $\calG$ in mini-batches of size $\texttt{B}$ to enhance computational efficiency and introduce stochasticity, which helps avoid non-global stationary points.
At the beginning of training, during each training epoch, batches are drawn sequentially to ensure complete coverage of the dataset across all \(\sfN_t\) pseudotime points. Once the entire dataset has been processed, subsequent batches are sampled uniformly at random to improve generalization.
For each pseudotime $t$ and each cell index $k$ within a batch, the corresponding measurement outcome $\mathbf{m}_{t,k}$ is used to evaluate the model-predicted probability:
\[
\phi(\bfm_{(t,k)},(\boldsymbol{\theta},\boldsymbol{\phi},\bfw)) = \<\psi_t(\boldsymbol{\theta},\boldsymbol{\phi},\bfw)|\Lambda_{\bfm_{(t,k)}}|\psi_{t}(\boldsymbol{\theta},\boldsymbol{\phi},\bfw)\>
\]
where $\Lambda_{\mathbf{m}_{(t,k)}}$ denotes the IC-POVM element corresponding to outcome $\mathbf{m}_{(t,k)}$. 
The empirical loss over a batch is computed as:
\[
\widehat{\mathcal{L}}_{\texttt{B}}(\boldsymbol{\theta},\boldsymbol{\phi},\mathbf{w})
= -\frac{1}{ \texttt{B}\sfN_t}\sum_{t=1}^{\sfN_t}\sum_{k=1}^{\texttt{B}}
\log \phi(\mathbf{m}_{(t,k)
},(\boldsymbol{\theta},\boldsymbol{\phi},\mathbf{w})).
\]
The parameters $(\hat{\boldsymbol{\theta}},\hat{\boldsymbol{\phi}},\hat{\bfw})$ are obtained by minimizing
$\widehat{\mathcal{L}}_{\texttt{B}}$ using a classical optimizer, subject to constraints $w_{ij} \in [-w_{\max},\, w_{\max}]$ for all $(i,j)$.
Furthermore, to convert this constrained optimization into an unconstrained form, the algorithm introduces a latent variable $\widetilde{w}_{ij}\in\mathbb{R}$ and reparameterizes the weights as 
$w_{ij} = w_{\max}\tanh(\widetilde{w}_{ij})$.
This transformation guarantees that $w_{ij}$ always remains within the valid range during optimization while using optimizers over unconstrained parameters $\widetilde{w}_{ij}$. 
\subsection{Numerical Experiments and Implementation Details }
\noindent\textit{Synthetic Data Generation.}
We first construct a ground-truth GRN consisting
of $n=12$ genes, corresponding to a $12$-qubit system. The weights
$\bfw$ are sampled independently and uniformly from the interval $[-1,1]$ to ensure numerical stability. The initial-state
parameters are sampled randomly, with $\theta_i \sim \mathrm{Unif}[0,\pi]$ and
$\phi_i \sim \mathrm{Unif}[0,2\pi]$ for each gene.
Given the ground-truth parameters, gene-expression data are generated by
simulating QHGM dynamics starting from an initial product state. We sample $\sfN_t=65$ pseudotime points
uniformly from the interval $[0,1]$ and, at each time point, we generate $\sfN_c=6000$
measurement samples using the fixed IC-POVM measurement \eqref{eqn:ic-povm_qgrn}. This results in a
discretized gene-expression dataset
$\calG \in \{0,1,2,3\}^{65 \times 6000 \times 12}$.
The latent variable $\Tilde{w}_ij$ are
initialized uniformly in $[-0.5,0.5]$,

\vspace{5pt}
\noindent\textit{Training Details.}
We train the QHGM using PennyLane \cite{bergholm2018pennylane} and JAX \cite{jax2018github}.
The latent interaction weights $\tilde w_{ij}$ are initialized uniformly in $[-0.5,\,0.5]$, while
the initial-state parameters are initialized as
$\theta_i \in [\pi/4,\,3\pi/4]$ and $\phi_i \in [\pi/2,\,3\pi/2]$.
The parameters $\boldsymbol{\theta}$ and $\boldsymbol{\phi}$ are optimized jointly with
$\bfw$ and are not explicitly constrained during training.
The initialization ranges for $\theta_i$ and $\phi_i$ are chosen to ensure numerical
stability of the state parameterization on the Bloch sphere. Importantly, the learning objective depends only on the induced separable initial state, not on the specific values of
$\theta_i$ and $\phi_i$; different parameter values that generate the same state (up to a
global phase) are therefore equivalent for the optimization.
Optimization is performed using the Adam optimizer from Optax with an empirically chosen
adaptive learning rate $0.85/\sqrt{\text{epoch}/4 + 1}$. All models are trained for
$2500$ epochs with batch size $\texttt{B}=20$ and executed on an NVIDIA A40 GPU.

\vspace{5pt}
\noindent \textit{Glioblastoma dataset.} We analyzed the {core GBMap dataset}, a harmonized scRNA-seq atlas of IDH-wild-type glioblastoma patients from ``Charting the Single-Cell and Spatial Landscape of IDH-Wildtype Glioblastoma with GBmap'' \cite{ruiz2025charting}. The core GBMap dataset comprises approximately 330{,}000 cells from 109 patients and spans 11 anatomical regions. The GBMap consortium performed comprehensive preprocessing, including quality control and batch correction using an scVI-based integration pipeline \cite{gayoso2022python}. For this study, we focus exclusively on cells annotated as either ``Differentiated-like'' or ``Stem-like'' in the {annotation level 2} classification.  After subsetting, the dataset contains approximately 127,000 cells, distributed across {annotation level 3} as follows: AC-like (50,847), MES-like (33,167), NPC-like (22,117), and OPC-like (21,390)  (see Fig. \ref{fig:finalgrn}A). Recent studies indicate that OPC-like cells in glioblastoma possess high proliferative capacity and tumorigenic potential, and hence, are chosen as the root node cell. They are enriched in both adult and pediatric tumors and exhibit cellular plasticity that enables transitions to other glioblastoma cell states \cite{neftel2019integrative} \cite{zamler2023primitive}. Differential expression (DE) analysis \cite{wolf2018scanpy} is performed to identify transcriptional (gene expression) differences between the OPC-like and MES-like, and the resulting ranked gene lists were subsequently mapped onto the MSigDB Hallmark gene set collection \cite{liberzon2015molecular}, enabling functional interpretation of the DE signatures. This step identifies which well-studied biological pathways and processes are enriched, making it easier to understand the key genes and functions involved. The final set of $14$ genes are the following: \textsf{BCAN}, \textsf{STMN1}, \textsf{HES6}, \textsf{ETV1}, \textsf{CADM2}, \textsf{MMP16}, \textsf{CKB}, \textsf{LIMA1}, \textsf{VCAN}, \textsf{JPT1}, \textsf{ASCL1}, \textsf{CDK4}, \textsf{TUBB2B}, and \textsf{NCAM1}. 

\vspace{5pt}
\noindent\textit{Training Details.}  
Since the released dataset does not include the learned scVI.SCANVI embeddings, we will recompute them to ensure reproducibility in downstream analyses, particularly for the calculation of PCA embeddings and pseudotime. We will follow the parameters from the original study while making necessary adjustments to accommodate updates in the newer version of scvi-tools. We use the py-VIA package \cite{stassen2021generalized} on the PCA embedding (k = 50, Jacobian-weighted edges as True) to find pseudotime values for each cell. An OPC-like cell {(barcode: 118\_1\_29)} is selected as the root, with AC-like and MES-like states defined as terminal groups. Disconnected components were excluded, and a fixed random seed (42) is used to ensure reproducibility. This assigns each cell a pseudotime value along a differentiation trajectory, providing the approximate temporal index $t$ in the subsequent quantum simulation. The scRNA-seq data is preprocessed using the Scanpy toolkit \cite{wolf2018scanpy}, and the expression values are subsequently transformed using Min-Max normalization per cell with scikit-learn \cite{pedregosa2011scikit}.
To capture approximate temporal structure, the pseudotime vector is partitioned into  $\sf N_t=50 $ equally populated bins. For each pseudotime bin and each cell within the bins, the $n$-length vector of discretized gene expression, denoted as $\bfm_{(t,k)}$, is encoded into a single integer using a base-4 representation. This is expressed by the formula:
\[
M_{(t,k)} = \sum_{i=1}^{n} m_{(t,k)}^{(i)}\,4^{(i-1)}, \quad \text{where } m_{(t,k)}^{(i)} \in \{0,1,2,3\}.
\] 
This encoding process converts a three-dimensional dataset $\mathcal{G}$ (pseudotime bins  \(\times\) cells \(\times\) genes) into a two-dimensional dataset (pseudotime bins  \(\times\) cells). We run 10 different simulations consisting of two complementary initial state preparation strategies. In the first approach, all parameters, including rotation angles $\theta_i$ and $\phi_i$ for each gene \eqref{eqn:initial_state}, as well as the weights $w_{ij}$, are treated as learnable parameters. In the second approach, the initial state preparation parameters are fixed at $\theta_i = \pi/2$ and $\phi_i = 0$, while only the weights are learnable. For each approach, we perform five random initializations of the parameters to ensure robustness and run for 3000 epochs. The model parameters $(\boldsymbol{\theta},\boldsymbol{\phi},\mathbf{w})$ are updated using the ADAM optimizer, with an adaptive learning rate 
$0.085 / \sqrt{(\mathrm{epoch}/4) + 1}$, 
providing smoother convergence during later training stages. 
Across these independent runs, we compute the median of each weight, providing a stable, representative estimate of network interactions while accounting for parameter initialization variability. This strategy allows us to capture a more comprehensive and robust understanding of the underlying regulatory network (see Fig. \ref{fig:finalgrn}B).

\vspace{10pt}
\noindent\textbf{Acknowledgments}

\noindent This research was supported in part through computational resources and services provided by Advanced Research Computing at the University of Michigan, Ann Arbor.

\vspace{10pt}
\noindent\textbf{Data Availability}

\noindent The scRNA-seq data analyzed in this study are publicly available in the core GBMap section through the CZ CELLxGENE Discover portal.\\
Link: \url{https://cellxgene.cziscience.com/collections/999f2a15-3d7e-440b-96ae-2c806799c08c}.

\vspace{10pt}
\noindent\textbf{Code Availability}

\noindent All the experimental results and source codes are available at \url{https://github.com/mdaamirQ/QHGM}.

\bibliographystyle{ieeetr}  
\bibliography{reference}  
\newpage
\appendix
\subsection{Benchmarking Classical Inference Methods on QHGM-Generated Data}
In this subsection, we benchmark state-of-the-art classical inference methods, including ARACNE, GeneNet, GENIE3, and SINCERITIES, on the QHGM-generated data. Recall, QHGM generates discretized gene expression data. However, the classical methods operate. Therefore, we convert the discrete four-level gene expression data to continuous values between $0$ and $1$ via Beta distributions with expression scores $\{\tau_m\}_{m=0}^3$ as the mode of the distributions (see Appendix \ref{experimental classical}). We empirically evaluate the performance of these classical inference methods  
using the F1 score and accuracy. 
In Figs.\ref{fig:classical_performance}A and B, we evaluate F1 score and accuracy for network edge recovery and weights sign recovery, respectively, against different total sample sizes. Network edge recovery is the ability to correctly identify the presence of interactions, regardless of the sign of the weights.  Weights sign recovery is the correct identification of both the presence of an edge and the sign of the weights (up-regulation or down-regulation). 
We keep the ground-truth network's sparsity at roughly 15\% for both network edge recovery and weights sign recovery. Error bars indicate variability across different values of $\sfN_t$. For a fixed total sample size ($\sfN_t\cdot\sfN_c$), samples are randomly subsampled for each $\sfN_t$, and the resulting error bars for ARACNE, GeneNet, and GENIE3 reflect the corresponding performance variability.

The average performance of VQ-Net for both network edge recovery and weight sign recovery is above $0.95$, and the error bars decrease as the sample size increases. In contrast, for network edge recovery, GENIE3 and GeneNet exhibit slightly better performance than ARACNE and SINCERITIES; however, the average performance remains below that of VQ-NET by more than $20\%$.  Both GeneNet and SINCERITIES do not achieve an F$_1$ score and accuracy of $0.5$ for signed edge recovery.
Across all classical methods, we observe that increasing sample size does not yield a substantial improvement in accuracy or F1 score for both edge and sign recovery.

In Fig.~\ref{fig:classical_performance}C, we assess the performance of classical methods in recovering network edges at various levels of network sparsity, defined as the fraction of edges with non-zero weights in the ground-truth network.
Here, VQ-Net again exhibits consistent performance across different levels of sparsity. In contrast, classical methods show a considerable sensitivity to sparsity. For example, GENIE3 performs well at lower sparsity levels, but its performance degrades sharply as the sparsity increases. SINCERITIES' performance remains the same across different sparsity levels. Whereas GeneNet and ARCANE performance increase as sparsity increases. However, their F1 score and accuracy are significantly lower than VQ-Net's at lower sparsity levels. 

\begin{figure}[H]
    \centering
\includegraphics[width=\linewidth]{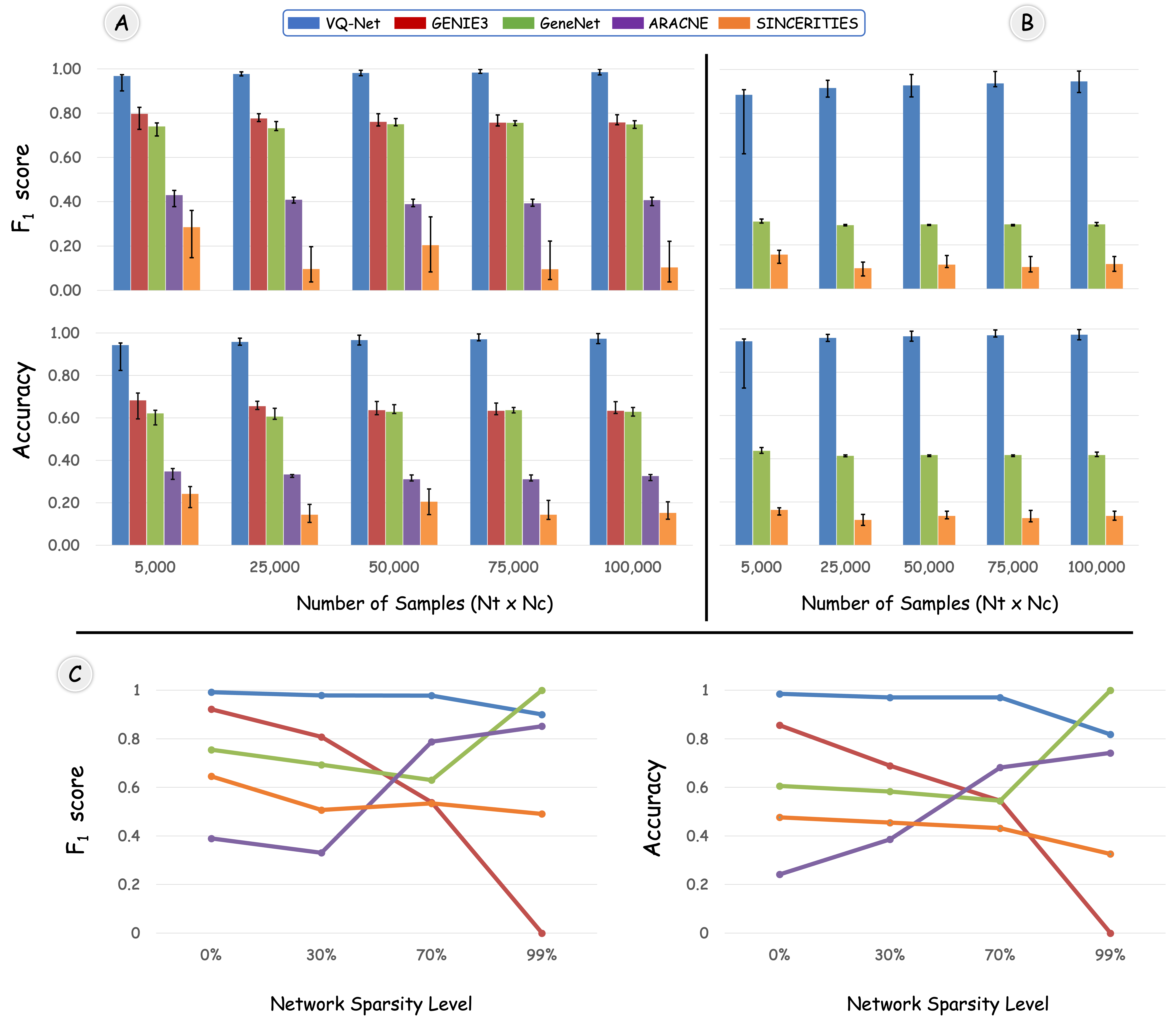}
    \caption{Performance of state-of-the-art classical inference methods on QHGM-generated data. \textbf{(A)} Network Edge recovery
    \textbf{(B)} Weights Sign recovery
    \textbf{(C)} Sparsity-Edge recovery Tradeoff
    }
    \label{fig:classical_performance}
\end{figure}

\subsection{Converting Discretized QHGM-Generated Data to Continuous Gene Expression Levels}
\label{experimental classical}
To transform discrete labels from QHGM into continuous gene expression values, we employ a localized dequantization strategy based on Beta distributions. This approach ensures that each continuous value remains strictly bounded within the IC-POVM bin, while allowing for a controllable degree of stochasticity. We set the expression scores
$\tau_0=0.067$, $\tau_1=0.25$, $\tau_2=0.75$, and $\tau_3=0.933$, as the mode of the four Beta distributions (see Fig.~\ref{fig:beta}). To generate samples within a bin, we map the interval $[b_{i-1}, b_i]$ to the standard support of the Beta distribution via a linear transformation $z = (x - b_{i-1}) / (b_i - b_{i-1})$, such that the resulting continuous value is recovered via $x = b_{i-1} + z(b_i - b_{i-1})$, where $z \sim \text{Beta}(\alpha_i, \beta_i)$.

The shape parameters $\alpha_i$ and $\beta_i$ are determined by aligning the
mode of the Beta distribution with a representative target value
$\tau_i \in [b_{i-1}, b_i]$, such that the probability mass within each bin
congregates around $\tau_i$. Defining the normalized location
$\gamma_i = {(\tau_i - b_{i-1})}/{(b_i - b_{i-1})},$
and introducing a concentration parameter $c_i = \alpha_i + \beta_i$ to control the spread, we use the mode-based relation
$\gamma_i = {(\alpha_i - 1)}{(\alpha_i + \beta_i - 2)}$
to obtain
$\alpha_i = 1 + \gamma_i (c_i - 2), \quad
\beta_i = 1 + (1 - \gamma_i)(c_i - 2).$
The concentration parameter $c_i$ is selected such that \(\int_{0.025}^{0.975} \text{Beta}(z; \alpha_i, \beta_i)\, dz = 0.99, \) ensuring that $99\%$ of the probability mass lies within the central $95\%$ of the normalized interval.

\vspace{5pt}
\noindent\textit{Training Details.} ARACNE estimates pairwise gene dependencies by computing Spearman rank–based mutual information, followed by network reconstruction, yielding a weighted adjacency matrix of inferred interactions. In parallel, partial correlations are estimated using a shrinkage Gaussian graphical model implemented in GeneNet, and edge significance is evaluated. A symmetric adjacency matrix is constructed from the resulting partial correlation coefficients, with diagonal elements set to 0. Additionally, regulatory interactions are inferred using GENIE3, a tree-based ensemble method based on random forests, with all genes considered as candidate regulators, 500 trees per target gene, and the number of variables tried at each split set to the square root of the total number of genes. 
In addition, time-series–based regulatory interactions are inferred using SINCERITIES, which estimates signed gene–gene influences from expression dynamics. It runs in R using distance = 1 (Kolmogorov–Smirnov distributional distance between time points), method = 1 (ridge regression for regularization), noDIAG = 1 (self-regulatory edges not allowed), and SIGN = 1 (estimation of activation vs repression to obtain a signed adjacency matrix). The resulting adjacency matrix is subsequently normalized by dividing all entries by the maximum edge weight to yield a unit-scaled matrix used for downstream analyses.

For both GeneNet and SINCERITIES, given a predicted weighted adjacency matrix $\hat{\mathbf{A}}$ and the ground-truth signed adjacency matrix $\mathbf{A}$, we first discretize the predicted interactions by mapping positive weights to $+1$ (upregulation), negative weights to $-1$ (downregulation), and zero to $0$ (no interaction). Self-interactions are excluded by removing diagonal entries. Performance is then evaluated by comparing the flattened off-diagonal entries of $\hat{\mathbf{A}}$ and $\mathbf{A}$ using macro-averaged precision, recall, and F$_1$ score over the three classes $\{-1, 0, +1\}$, along with overall accuracy. This metric jointly assesses correct edge detection and regulatory sign assignment. For network edge recovery, we ignore the interaction sign and consider only the presence or absence of edges. Both the predicted and ground-truth adjacency matrices are binarized, with nonzero entries indicating an interaction and zero entries indicating no interaction. Diagonal elements are excluded. Precision, recall, F$_1$ score, and accuracy are computed by comparing the flattened off-diagonal binary matrices. This metric assesses the ability to accurately recover the network topology, regardless of the regulatory sign.

\begin{figure}[H]
    \centering
    \includegraphics[width=\linewidth]{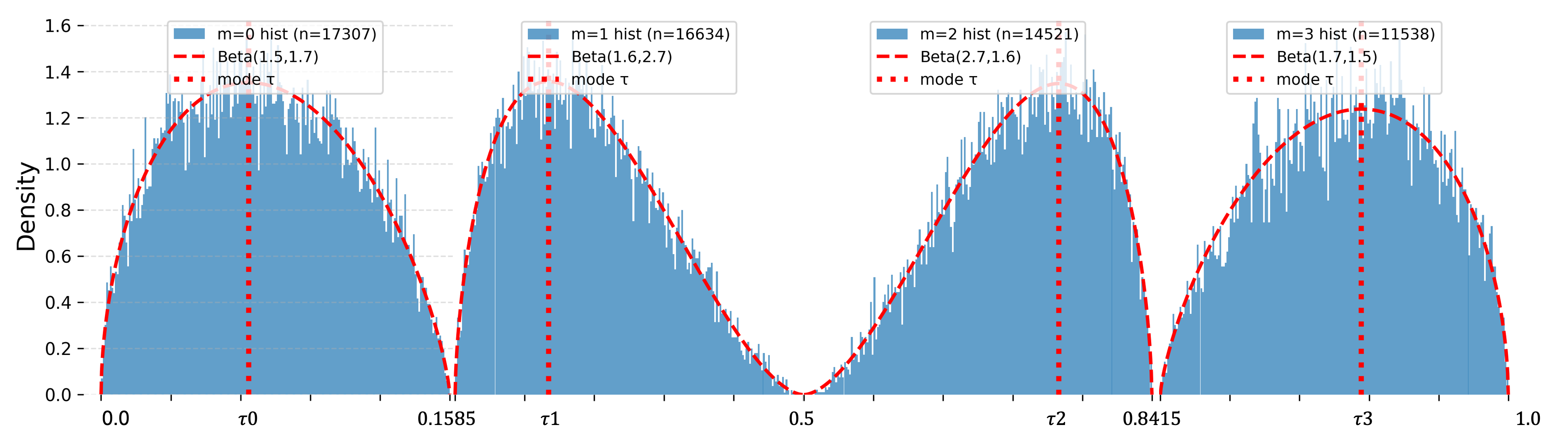}
    \caption{Visualization of continuous value gene expression data generated from discrete QHGM-generated data using localized Beta distributions.} 
    \label{fig:beta}
\end{figure}

\subsection{Proof of Lemma \ref{lem:khintchine}.}
\label{app:proof:lem:khintchine}
Let $\Upsilon := \sum_{k} \sigma_k \Lambda_k$. We can express the spectral norm of the Hermitian matrix $\Upsilon$ as follows:
\begin{align*}
    \|\Upsilon\| = \max\{\lambda_{\max}(\Upsilon),\lambda_{\max}(-\Upsilon)\}&\overset{a}{\leq} \frac{1}{\theta} \log(\exp\{\theta \lambda_{\max}(\Upsilon)\} + \exp\{\theta\lambda_{\max}(-\Upsilon)\}) \;\text{for all} \;\theta >0\\
    &\overset{b}{\leq} \frac{1}{\theta} \log(\tr(\exp\{\theta \Upsilon\}) + \tr(\exp\{-\theta \Upsilon\})),
\end{align*}
where $\lambda_{\max}(\pm\Upsilon)$ denotes the maximum eigenvalue of $\pm\Upsilon$, $(a)$ follows from the LogSumExp inequality \cite[Eqn. 4]{nielsen2016guaranteed} stated below: 
$$\max\{x_1,x_2,\cdots,x_m\} \leq \frac{1}{\theta}\log\Big(\sum_{i=1}^m \exp{(\theta x_i)}\Big) \leq \max\{x_1,x_2,\cdots,x_m\} + \frac{\log{n}}{\theta} \text{ for all } x_i \in \RR, \theta>0,$$
and $(b)$ follows because $\exp\{\theta \lambda_{\max}(\Upsilon)\} \leq \tr(\exp\{\theta\Upsilon\})$.
Next, we take the expectation with respect to Rademacher variables on both sides.
\begin{align*}
    \EE_{\sigma}[ \|\Upsilon\|] &\overset{c}{\leq}  \frac{1}{\theta} \log(\EE[\tr\exp\{\theta \Upsilon\})] + \EE[\tr\exp\{-\theta \Upsilon\})])\\
    &\overset{d}{\leq}  \frac{1}{\theta} \log\Big(\tr\exp\Big\{\sum_k \log \EE[e^{\theta \sigma_k\Lambda_k}]\Big\} + \tr\exp\Big\{\sum_k \log \EE[e^{-\theta \sigma_k\Lambda_k}]\Big\}\Big)\\
    &\overset{e}{\leq}  \frac{1}{\theta} \log\Big(2  \tr\exp\Big\{\tfrac{1}{2}\sum_k\theta^2\Lambda_k^2\Big\}\Big)\overset{f}{\leq} \frac{1}{\theta} \log\Big(2\sfD\exp\Big\{\frac{\theta^2}{2}\Big\|\sum_k\Lambda_k^2\Big\|\Big\}\Big)=\frac{1}{\theta} \log(2\sfD) + \frac{\theta}{2}\Big\|\sum_k\Lambda_k^2\Big\|,
\end{align*}
where $(c)$ follows from Jensen's inequality, $(d)$ follows from sub-additivity of matrix cumulant generating function \cite[Lemma 3.1]{tropp2015introduction}, $(e)$ is derived using the following results:
\begin{itemize}
    \item (\textbf{Lemma 4.2} \cite{tropp2015introduction}.) 
        Suppose $A$ is a fixed Hermitian matrix and $\sigma$ is a Rademacher variable. Then, for $\theta \in \RR$, we have 
        $\EE_\sigma[e^{\theta\sigma A}] \leq \exp\{\theta^2A^2/2\} \ \text{and}\ \log(\EE_\sigma[e^{\theta\sigma A}]) \leq \theta^2A^2/2.$
        \item For Hermitian matrices $A_1,A_2,B_1,B_2$, if $A_1\leq B_1$ and $A_2\leq B_2$, then $A_1+A_2 \leq B_1+B_2$.
    \item $\tr\exp$ (trace-exponential) function is monotone with respect to the semi-definite order \cite{tropp2015introduction}.
\end{itemize}
and $(f)$ follows from the inequality $\tr(e^A) \leq d \cdot \exp\{\lambda_{\max}(A)\}$, which holds for all $d\times d$ Hermitian matrix $A$. Additionally, since $\Lambda_k^2$ is a positive semi-definite matrix for each $k$, it follows that $\lambda_{\max}(\sum_k \Lambda_k^2) = \|\sum_k \Lambda_k^2\|$. Now, by minimizing the right-hand side of the inequality with respect to $\theta>0$, we get 
$\theta^\star = \sqrt{{2\log(2\sfD)}/{\|\sum_k \Lambda_k^2\|}}.$
Finally, this leads us to conclude that \[\EE_\sigma\bigg[\Big\|\sum_k\sigma_k \Lambda_k\Big\|\bigg] \leq \sqrt{2\log(2\sfD)} \Big\|\sum_k \Lambda_k^2\Big\|^{1/2}.\]

\subsection{Proof of Lemma \ref{lem:grad-hess-exp}.}
\label{app:proof:lem:grad-hess-exp}

Following \cite[Theorem 3]{haber2018notes}, we recall the parametric derivative formula for the operator exponential: \begin{equation}\label{eqn:parametric_deriv}
    \frac{\partial}{\partial w_k} e^{-\rmA(\bfw)} = -\int_0^1 e^{-(1-s)\rmA(\bfw)} \bigg( \frac{\partial \rmA(\bfw)}{\partial w_k} \bigg)e^{-s\rmA(\bfw)} \ds,
\end{equation}
where $\rmA(\bfw)$ is a parameterized Hermitian operator. Using \eqref{eqn:parametric_deriv}, we compute the derivative of $U_t(\bfw) = e^{-it\rmH(\bfw)}$ with respect to the parameter $w_k$ as follows:
\begin{align*}
    \frac{\partial}{\partial w_k} U_t(\bfw)&= 
    -\int_0^1 e^{-(1-s)( \imag t\rmH(\bfw))} \ \frac{\partial}{\partial w_k}( \imag t\rmH(\bfw))\ e^{-s(\imag t\rmH(\bfw))} \ds
    \\
   &= 
    -\imag t \int_0^1 e^{-(1-s)(\imag t\rmH(\bfw))}\ \rmH_k\ e^{-s(\imag t\rmH(\bfw))} \ds
   =
-\imag \int_0^t U_t(\bfw)\, \rmH_k(s)\, \ds,
\end{align*}
where the final equality follows from the change of variables 
$st\rightarrow s$. Applying the adjoint operation to both sides yields
\[
\frac{\partial}{\partial w_k} U_t^\dagger(\bfw)
=
+\imag \int_0^t \rmH_k(s) U_t^\dagger(\bfw)\, \, \ds .
\]
Now, we compute the derivative of time-evolved  state $\rho_t(\bfw) = U_t \rho_0 U_t^\dagger$ as follows:
\begin{align}
    \frac{\partial \rho_t}{\partial w_k}
&=
\bigg(\frac{\partial U_t}{\partial w_k}\bigg)\rho_0 U_t^\dagger
\;+\;
U_t \rho_0 \bigg(\frac{\partial U_t^\dagger}{\partial w_k}\bigg) \nonumber\\
&= -\imag \int_0^t U_t(\bfw)\, \rmH_k(s) \rho_0 U_t^\dagger(\bfw)\,\ds +\imag \int_0^t U_t(\bfw)\rho_0 \rmH_k(s) U_t^\dagger(\bfw)\, \ds =
-\imag \int_0^t U_t \big[\rmH_k(s),\rho_0\big] U_t^\dagger\, \ds. \label{app:eqn:deriv_rhot}
\end{align}
Therefore,
\[
\frac{\partial}{\partial w_k}\tr(\Lambda_\bfm \rho_t)
=
-\imag \int_0^t \tr\big(\Lambda_\bfm\, U_t \big[\rmH_k(s),\rho_0\big] U_t^\dagger\big)\, \ds
=
-\imag \int_0^t \tr\big(\Lambda_\bfm(t)\, \big[\rmH_k(s),\rho_0\big]\big)\, \ds,
\]
where we used cyclicity of trace and $\Lambda_\bfm(t):=U_t^\dagger \Lambda_\bfm U_t$. Next, we compute the second-order partial derivative as follows:
\begin{align*}
    \frac{\partial^2}{\partial w_j\partial w_k}\tr(\Lambda_\bfm \rho_t)&= \underbrace{-\imag \int_0^t \tr\Big(\Big(\frac{\partial}{\partial w_j}\Lambda_\bfm(t)\Big)\, \big[\rmH_k(s),\rho_0\big]\Big)\, \ds}_{\rmT_1} \underbrace{-\imag \int_0^t \tr\Big(\Lambda_\bfm(t)\, \Big[\Big(\frac{\partial}{\partial w_j}\rmH_k(s)\Big),\rho_0\Big]\Big)\, \ds}_{\rmT_2}.
\end{align*}
By applying steps analogous to the derivative of the time-evolved state $\rho_t(\mathbf{w})$ \eqref{app:eqn:deriv_rhot}, we derive the following:
\begin{equation}
    \frac{\partial}{\partial w_j}\Lambda_\bfm(t) = \imag \int_0^t \big[\rmH_j(s),\Lambda_\bfm(t)\big] \, \ds\quad \eqand \quad \frac{\partial}{\partial w_j}\rmH_k(s) = \imag \int_0^s \big[\rmH_j(s_1),\rmH_k(s)\big] \, \ds_1. \label{app:eqn:deriv_Lambdam}
\end{equation}
This gives, 
\begin{align}
    \rmT_1
    &=   \int_0^t \int_0^t \tr\big(\big[\rmH_j(s_2),\Lambda_\bfm(t)\big]\, \big[\rmH_k(s_1),\rho_0\big]\big)\, \ds_2\, \ds_1\nonumber\\
    &\overset{a}{=}- \int_0^t \int_0^t \tr\big(\Lambda_\bfm(t)\big[\rmH_j(s_2),\big[\rmH_k(s_1),\rho_0\big]\big]\, \big)\, \ds_2\, \ds_1\nonumber\\
    &\overset{b}{=}- \int_0^t \int_0^{s_1} \Big(\tr\big(\Lambda_\bfm(t)\big[\rmH_j(s_1),\big[\rmH_k(s_2),\rho_0\big]\big]\,\big) + \tr\big(\Lambda_\bfm(t)\big[\rmH_j(s_2),\big[\rmH_k(s_1),\rho_0\big]\big]\,\big)\Big)\, \ds_2\, \ds_1,
    \label{app:eqn:termA_secondderiv}
\end{align}
where $(a)$ follows from the fact that the trace of the product of two commutators is given by the equation $\tr([\rmA,\rmB][\rmC,\rmD]) = -\tr(\rmB[\rmA,[\rmC,\rmD]])$ and $(b)$ is derived from the fact that for any function \( f(s_1, s_2) \), the square integral over the region \([0, t]^2\) can be expressed as follows:
\begin{align*}
    \int_0^t\int_0^t f(s_1,s_2)\, \ds_2\, \ds_1 &=\int_{0\leq s_2\leq s_1\leq t} f(s_1,s_2)\, \ds_2\, \ds_1 + \int_{0\leq s_1<s_2\leq t} f(s_1,s_2)\, \ds_2\, \ds_1 \\
    &=\int_{0\leq s_2\leq s_1\leq t} (f(s_1,s_2) + f(s_2,s_1))\, \ds_2\, \ds_1 \\
    &= \int_{0}^t  \int_0^{s_1}(f(s_1,s_2) + f(s_2,s_1))\, \ds_2\, \ds_1.
\end{align*}
\begin{align}
    \rmT_2 
    &= \int_0^t \int_0^{s_1}\tr\big(\Lambda_\bfm(t)\, \big[ \big[\rmH_j(s_2),\rmH_k(s_1)\big] \, ,\rho_0\big]\big)\, \ds_2\,\ds_1\nonumber\\
    &\overset{c}{=} \int_0^t \int_0^{s_1}\Big(\tr\big(\Lambda_\bfm(t)\, \big[  \rmH_j(s_2),\big[\rmH_k(s_1) ,\rho_0\big]\big]\big) - \tr\big(\Lambda_\bfm(t)\, \big[\rmH_k(s_1),\big[\rmH_j(s_2) ,\rho_0\big]\big]\big)\Big)\, \ds_2\,\ds_1,
    \label{app:eqn:termB_secondderiv}
\end{align}
where $(c)$ follows from the use of Jacobi identity $[[\rmA,\rmB],\rmC]=[\rmA,[\rmB,\rmC]]-[\rmB,[\rmA,\rmC]].$ Adding $\rmT_1$ \eqref{app:eqn:termA_secondderiv} and $\rmT_2$ \eqref{app:eqn:termB_secondderiv}, we get  
 \begin{align*}
     \frac{\partial^2}{\partial w_j\partial w_k}\tr(\Lambda_\bfm \rho_t) &= -\int_0^t \int_0^{s_1}\tr\big(\Lambda_\bfm(t)\big[\rmH_j(s_1),\big[\rmH_k(s_2),\rho_0\big]\big]\,\big)\, \ds_2\,\ds_1 \\
     &\hspace{50pt}-\int_0^t \int_0^{s_1} \tr\big(\Lambda_\bfm(t)\, \big[\rmH_k(s_1),\big[\rmH_j(s_2) ,\rho_0\big]\big]\big)\, \ds_2\,\ds_1.
 \end{align*}

This completes the first part of Lemma \ref{lem:grad-hess-exp}. Next, we show gradient and Hessian are bounded. Consider the following inequalities:
\begin{align*}
    \Big|\frac{\partial }{\partial {w_k}} \phi{(\bfm|t,\bfw)}\Big| 
    &\overset{}{=} \Big|-\imag\int_0^t \tr(\Lambda_{\bfm}(t)[\rmH_k(s),\rho_0])\ds\Big|\\
    &\overset{}{\leq} \int_0^t\big|\tr(\Lambda_{\bfm}(t)[\rmH_k(s),\rho_0])\big|\ds\quad \quad \quad \quad \text{using triangle inequality}\\
    &\overset{a}{\leq} \int_0^t \|\Lambda_{\bfm}(t)\|\|[\rmH_k(s),\rho_0]\|_1\ds \\
    &\overset{b}{\leq} \int_0^t \|[\rmH_k(s),\rho_0]\|_1\ds\, \overset{c}{\leq} 2\int_0^t \|\rmH_k(s)\|\ \|\rho_0\|_1\ds \leq 2t_{\max}\|\rmH_k\|,
\end{align*}
where $(a)$ follows from the inequality $|\tr(\rmA^\dagger \rmB)|\le \|\rmA\|\|\rmB\|_1$ \cite[Eqn.~1.174]{watrous2018theory}, $(b)$ is derived from the isometric invariance property of spectral norm, i.e., \( \|\Lambda_{\bfm}(t)\| = \|\Lambda_{\bfm}\| \), along with the fact that \( \|\Lambda_{\bfm}\| \leq 1 \), and $(c)$ follows from Hölder's inequality, which states that for \( \frac{1}{p} + \frac{1}{q} = 1 \), we have \( \|\rmA\rmB\|_1 \leq \|\rmA\|_p \|\rmB\|_q \) (see \cite[Eqn.~1.175]{watrous2018theory}). 
Therefore, we get,
\[\|\nabla_\bfw \phi(\bfm|t,\bfw)\|_2 \leq \Big(4t_{\max}^2 \sum_{k=1}^c \|\rmH_k\|^2\Big)^{1/2} \leq 2t_{\max}\sqrt{c}\;\|\rmH\|_{\infty,\max}.\]
Next, by applying steps analogous to those for the derivative of the likelihood function, we bound its Hessian. Consider the following inequalities:
\begin{align*}
     \Big|\frac{\partial^2 }{\partial {w_j}\partial {w_k}} \phi{(\bfm|t,\bfw)}\Big| &\overset{}{\leq} \Big|\int_0^t \int_0^{s_1}\tr\big(\Lambda_\bfm(t)\big[\rmH_j(s_1),\big[\rmH_k(s_2),\rho_0\big]\big]\,\big)\, \ds_2\,\ds_1\Big| \\
     &\hspace{50pt} + \Big|\int_0^t \int_0^{s_1} \tr\big(\Lambda_\bfm(t)\, \big[\rmH_k(s_1),\big[\rmH_j(s_2) ,\rho_0\big]\big]\big)\, \ds_2\,\ds_1\Big| \\
     &\overset{}{\leq} \int_0^t \int_0^{s_1}\Big|\tr\big(\Lambda_\bfm(t)\big[\rmH_j(s_1),\big[\rmH_k(s_2),\rho_0\big]\big]\,\big)\Big|\, \ds_2\,\ds_1 \\
     &\hspace{50pt} + \int_0^t \int_0^{s_1} \Big|\tr\big(\Lambda_\bfm(t)\, \big[\rmH_k(s_1),\big[\rmH_j(s_2) ,\rho_0\big]\big]\big)\Big|\, \ds_2\,\ds_1 \\
     &\overset{a}{\leq} \int_0^t \int_0^{s_1}\|\Lambda_\bfm(t)\|\Big\|\big[\rmH_j(s_1),\big[\rmH_k(s_2),\rho_0\big]\big]\Big\|_1\, \ds_2\,\ds_1 \\
     &\hspace{50pt} + \int_0^t \int_0^{s_1}\|\Lambda_\bfm(t)\|\Big\|\big[\rmH_k(s_1),\big[\rmH_j(s_2),\rho_0\big]\big]\Big\|_1\, \ds_2\,\ds_1\\
     &\overset{b}{\leq} 8\int_0^t \int_0^{s_1}\|\rmH_k\| \|\rmH_j\|  \ds_2\,\ds_1  \leq  4t_{\max}^2\|\rmH_k\| \|\rmH_j\|
\end{align*}
where $(a)$ follows from \cite[Eqn.~1.174]{watrous2018theory}, and $(b)$ derives from the matrix Hölder inequality applied to the commutator, leading to $\|[\rmA,\rmB]\| \leq 2\|\rmA\|\; \|\rmB\|_1$. Thus, we obtain 
\[\|\nabla^2_{\bfw} \phi(\bfm|t,\bfw)\| \leq \|\nabla^2_{\bfw} \phi(\bfm|t,\bfw)\|_\rmF= 4t_{\max}^2\Big(\sum_{j,k} \|\rmH_k\|^2\|\rmH_j\|^2\Big)^{1/2} \leq 4t_{\max}^2 c \|\rmH\|^2_{\infty,\max},\]
where $\|\cdot\|_\rmF$ is the Frobenius norm. 

Since the gradient and the Hessian of the likelihood function are bounded, the likelihood function and its gradient are Lipschitz continuous by definition. Finally, we show the Lipschitz continuity of the Hessian. To this end, define $\rmH_k'(s):= U_s^\dagger(\bfw')\, \rmH_k\, U_s(\bfw')$.
Consider the following inequalities:
\begin{align*}
    \Big|&\frac{\partial^2 }{\partial {w_j}\partial {w_k}} \phi{(\bfm|t,\bfw)}- \frac{\partial^2 }{\partial {w_j}\partial {w_k}} \phi{(\bfm|t,\bfw')}\Big| \\
    &\leq \underbrace{\int_0^t \int_0^{s_1}\Big|\tr\big\{\Lambda_\bfm\big(U_t(\bfw)\big[\rmH_k(s_1),\big[\rmH_j(s_2),\rho_0\big]\big]U_t^\dagger(\bfw)-U_t(\bfw')\big[\rmH_k'(s_1),\big[\rmH_j'(s_2),\rho_0\big]\big]U_t^\dagger(\bfw')\big)\big\}\Big| \ds_2\,\ds_1}_{\rmA_1}\\
    &+ \underbrace{\int_0^t \int_0^{s_1}\Big|\tr\big\{\Lambda_\bfm\big(U_t(\bfw)\big[\rmH_j(s_1),\big[\rmH_k(s_2),\rho_0\big]\big]U_t^\dagger(\bfw)-U_t(\bfw')\big[\rmH_j'(s_1),\big[\rmH_k'(s_2),\rho_0\big]\big]U_t^\dagger(\bfw')\big)\big\}\Big| \ds_2\,\ds_1}_{\rmA_2}.
\end{align*}
Since $\rm A_1$ and $\rm A_2$ are similar except that the indices $j$ and $k$ are reversed, if we set bounds on one, we can similarly bound the other. Without loss of generality, consider $\rmA_1$, and then apply \cite[Eqn.~1.174]{watrous2018theory}. 
\begin{align}
    \rmA_1 &\overset{}{\leq} \int_0^t \int_0^{s_1} \|U_t(\bfw)[\rmH_k(s_1),[\rmH_j(s_2),\rho_0]]U_t^\dagger(\bfw)-U_t(\bfw')[\rmH_k'(s_1),[\rmH_j'(s_2),\rho_0]]U_t^\dagger(\bfw')\|\ds_2\,\ds_1 \nonumber\\
    &\overset{a}{\leq} \int_0^t \int_0^{s_1} \|U_t(\bfw)[\rmH_k(s_1),[\rmH_j(s_2),\rho_0]]U_t^\dagger(\bfw)-U_t(\bfw')[\rmH_k(s_1),[\rmH_j(s_2),\rho_0]]U_t^\dagger(\bfw')\|\ds_2\,\ds_1 \nonumber\\
     &\hspace{25pt} + \int_0^t \int_0^{s_1} \|[\rmH_k(s_1),[\rmH_j(s_2),\rho_0]] - [\rmH_k'(s_1),[\rmH_j'(s_2),\rho_0]]\|_1\; \ds_2\,\ds_1\nonumber\\
    &\overset{b}{\leq} \int_0^t \int_0^{s_1} 2\|U_t(\bfw)-U_t(\bfw') \| \; \big\|[\rmH_k(s_1),[\rmH_j(s_2),\rho_0]]\big\|_1\; \ds_2\,\ds_1 \nonumber \\
    &\hspace{25pt} + \int_0^t \int_0^{s_1} \Big(\big\|[\rmH_k(s_1)-\rmH_k'(s_1),[\rmH_j(s_2),\rho_0]]\big\|_1 + \big\|[\rmH_k'(s_1),[\rmH_j(s_2),\rho_0]-[\rmH_j'(s_2),\rho_0]]\big\|_1 \Big)\; \ds_2\,\ds_1,\nonumber
\end{align}
where $(a)$ follows by adding and subtracting $U_t(\bfw')[\rmH_k(s_1),[\rmH_j(s_2),\rho_0]]U_t^\dagger(\bfw')$ and $(b)$ follows by first adding and subtracting $U_t(\bfw)[\rmH_k(s_1),[\rmH_j(s_2),\rho_0]]U_t^\dagger(\bfw')$ and then using \cite[Eqn. 1.175]{watrous2018theory}. Next, applying the identity $e^A-e^B = \int_{0}^1 e^{sA} (A-B) e^{(1-s)B} \ds$ \cite[Eqn.43]{haber2018notes}, we obtain 
\begin{equation}
    \|U_t(\bfw)-U_t(\bfw')\| \leq t\|\rmH(\bfw)-\rmH(\bfw')\| \leq t\|\rmH\|_{\infty,\max}\; \|\bfw-\bfw'\|_1,  \label{eqn:Ut_diff_infty}
\end{equation}
and using \cite[Eqn.1.175]{watrous2018theory} for $p=\infty$ yields
\begin{equation}
    \|\rmH_k(s)-\rmH_k'(s)\| \leq 2\|\rmH_j\| \; \|U_s(\bfw) - U_s(\bfw')\| \leq 2s\|\rmH\|_{\infty,\max}^2\; \|\bfw-\bfw'\|_1. \label{eqn:Hs_diff_infty}
\end{equation}
Using \eqref{eqn:Ut_diff_infty} and \eqref{eqn:Hs_diff_infty} along with matrix Hölder inequality applied to the commutator, we get 
\begin{align*}
    \rmA_1 & \leq \int_0^t \int_0^{s_1} 8t\|\rmH\|_{\infty,\max}^3 \|\bfw-\bfw'\|_1 \; \ds_2\,\ds_1 + \int_0^t \int_0^{s_1} 8(s_1+s_2)\|\rmH\|_{\infty,\max}^3 \|\bfw-\bfw'\|_1\; \ds_2\,\ds_1\\
    &\leq 8t_{\max}^3\|\rmH\|_{\infty,\max}^3 \|\bfw-\bfw'\|_1.
\end{align*}
Therefore, 
\begin{align*}
 \Big|\frac{\partial^2 }{\partial {w_j}\partial {w_k}} \phi{(\bfm|t,\bfw)}- \frac{\partial^2 }{\partial {w_j}\partial {w_k}} \phi{(\bfm|t,\bfw')}\Big| &\leq 16 t_{\max}^3\|\rmH\|_{\infty,\max}^3 \|\bfw-\bfw'\|_1\\
\hspace{-10pt}\implies \|\nabla^2_{\bfw} \phi(\bfm|t,\bfw) -\nabla^2_{\bfw} \phi(\bfm|t,\bfw')\| &\leq \|\nabla^2_{\bfw} \phi(\bfm|t,\bfw) -\nabla^2_{\bfw} \phi(\bfm|t,\bfw')\|_{\rmF} \\
&\leq 16 t_{\max}^3 c^2\|\rmH\|_{\infty,\max}^3 \|\bfw-\bfw'\|_1
\\
&\leq 16 t_{\max}^3 c^{3/2}\|\rmH\|_{\infty,\max}^3 \|\bfw-\bfw'\|_2 = 2L_{\phi}^3 \|\bfw-\bfw'\|_2,
\end{align*}
where the last inequality follows from the inequality $\|x\|_1 \leq \sqrt{n}\|x\|$ for $x\in\RR^n$. 
This completes the proof of Lemma \ref{lem:grad-hess-exp}.

\subsection{Proof of Equation \ref{eqn:bound_Fbar_ti}}
\label{app:proof:eqn:bound_Fbar_ti}
Consider the following inequalities:
\begin{align*}
    \|\nabla^2_\bfw \log(\phi(\bfm|t,\bfw))\| &= \Big\|\frac{\nabla^2_\bfw \phi(\bfm|t,\bfw)}{\phi(\bfm|t,\bfw)} - \frac{\nabla_\bfw \phi(\bfm|t,\bfw) \nabla_\bfw \phi(\bfm|t,\bfw)^{\intercal}}{\phi(\bfm|t,\bfw)^2}\Big\|\\
    &\leq \Big\|\frac{\nabla^2_\bfw \phi(\bfm|t,\bfw)}{\phi(\bfm|t,\bfw)}\Big\| + \Big\|\frac{\nabla_\bfw \phi(\bfm|t,\bfw) \nabla_\bfw \phi(\bfm|t,\bfw)^{\intercal}}{\phi(\bfm|t,\bfw)^2}\Big\|\\
     &\overset{a}{\leq} \frac{\|\nabla^2_\bfw \phi(\bfm|t,\bfw)\|}{p_{\min}} + \frac{\|\nabla_\bfw \phi(\bfm|t,\bfw)\|^2}{p_{\min}^2}\overset{b}{\leq} \frac{L_\phi^2}{p_{\min}} + \frac{L_\phi^2} {p_{\min}^2} \leq \frac{2L_\phi^2}{p_{\min}^2},
\end{align*}
where $(a)$ follows from the fact that for any vector $v$, the spectral norm of the outer product $\|vv^\intercal\|$ equals $\|v\|^2$, and $(b)$ follows from Lemma \ref{lem:grad-hess-exp}. Next, consider the following inequalities:
\begin{align*}
    \|\bar{H}_{t}(\bfw)\|
    &\overset{c}{\leq}\EE_{\bfm \sim \phi(\cdot|t,\bfw^*)}\|\nabla^2_\bfw \log(\phi(\bfm|t,\bfw))\| \leq \frac{2L_\phi^2}{p_{\min}^2},
\end{align*}
where $(c)$ follows from Jensen's inequality.

\subsection{Proof of Proposition \ref{prop:FisherBound}}
\label{app:proof:prop:FisherBound}
For a given $t_i$, the random matrices $\widehat{H}_{(i,k)}(\bfw)$ are independent satisfying
$\EE_{\phi(\cdot|t_i,\bfw^*)}\big[\widehat{H}_{(i,k)}(\bfw)\big] = \bar{H}_{t_i}(\bfw),$ and from Eqn.~\eqref{eqn:bound_Fbar_ti}, $\|\widehat{H}_{(i,k)}(\bfw)\| \leq 2(L_\phi/p_{\min})^2$. 
Therefore, using Lemma~\ref{lem:matrix_sampling}, we obtain, for all $\varepsilon_1 > 0$,
\[\Pr\Big\{\Big\|\frac{1}{\sfN_c}\sum_{k=1}^{\sfN_c}\widehat{H}_{(i,k)}(\bfw) - \bar{H}_{t_i}(\bfw)\Big\| \geq \varepsilon_1\Big\}\leq 2c \cdot\exp\left\{\frac{-\sfN_c\; \varepsilon_1^2/2}{\sigma^2_i + 4\varepsilon_1 L_\phi^2/3p_{\min}^2}\right\},\]
where $\sigma^2_i:= \big\|\EE_{\phi(\cdot|t_i,\bfw^*)}\big[(\widehat{H}_{(i,k)}(\bfw))^2\big]\big\|.$ Applying the union bound, we get 
\begin{align*}
    \Pr\Big\{\max_{1\leq i \leq \sfN_t} \|\widehat{H}_{t_i}(\bfw) - \bar{H}_{t_i}(\bfw)\| \geq \varepsilon_1\Big\} &= \Pr\Big\{\bigcup_{i=1}^{\sfN_t}\{\|\widehat{H}_{t_i}(\bfw) - \bar{H}_{t_i}(\bfw)\| \geq \varepsilon_1\}\Big\} \\
    &\leq 2c\, \sfN_t \exp\left\{\frac{-\sfN_c\; \varepsilon_1^2/2}{\sigma^2_{\max} + 4\varepsilon_1 L_\phi^2/3p_{\min}^2}\right\}. 
\end{align*}
It remains to bound $\sigma_{\max}^2:=\max_i \sigma_i^2$, which we do next using Lemma ~\ref{lem:grad-hess-exp}. Consider the following inequalities:
\begin{align*}
    \sigma^2_i &\leq \EE_{\phi(\cdot|t_i,\bfw^*)}\big\|(\nabla^2_\bfw \log(\phi(\bfm_{(i,k)}|t_i,\bfw)))^2\big\|\overset{}{=} \EE_{\phi(\cdot|t_i,\bfw^*)}\big\|\nabla^2_\bfw \log(\phi(\bfm_{(i,k)}|t_i,\bfw))\big\|^2 \leq 4({L_\phi/p_{\min}})^4,  
\end{align*}
where the equality follows from the property that for any Hermitian matrix \(\rmA\), it holds that \(\|\rmA^2\| = \|\rmA\|^2\) and the last inequality follows from Eqn.\eqref{eqn:bound_Fbar_ti}. Hence, $\sigma_{\max}^2 \leq 4(L_\phi/p_{\min})^4$. This completes the bound on 
$\rmT_1$. We next bound $\rmT_2$, which captures the deviation of the per-time expected Hessian from the expected Hessian. Observe that the random matrices  $\bar{H}_{t_i}(\bfw)$ are independent satisfying $\EE_{t_i}[\bar{H}_{t_i}(\bfw)] = \bar{H}(\bfw)$, and using \eqref{eqn:bound_Fbar_ti}  $\|\bar{H}_{t_i}(\bfw)\| \leq 2(L_\phi/p_{\min})^2$. Therefore, applying Lemma \ref{lem:matrix_sampling}, we obtain, for all $\varepsilon_2> 0$,
\[\Pr\Big\{\Big\|\frac{1}{\sfN_t}\sum_{i=1}^{\sfN_t}\bar{H}_{t_i}(\bfw) - \bar{H}(\bfw)\Big\| \geq \varepsilon_2\Big\}\leq 2c \cdot\exp\left\{\frac{-\sfN_c\; \varepsilon_2^2/2}{\sigma^2 + 2\varepsilon_2 L_\phi^2/3p_{\min}^2}\right\},\]
where $\sigma^2:= \|\EE_{t\sim \pi}[(\bar{H}_t(\bfw))^2]\| \leq 4(L_\phi/p_{\min})^4.$ This completes the proof of Proposition \ref{prop:FisherBound}.

\subsection{Proof of Proposition \ref{prop:FisherLipschitz}}
\label{app:proof:prop:FisherLipschitz}
Define $s(\bfm,t,\bfw):=-\nabla_\bfw^2 \log(\phi(\bfm|t,\bfw)),$ which can be simplified as
\[s(\bfm,t,\bfw) = \frac{\nabla_\bfw \phi(\bfm|t,\bfw)\nabla_\bfw \phi(\bfm|t,\bfw)^\intercal}{ \phi(\bfm|t,\bfw)^2}-\frac{\nabla^2_\bfw \phi(\bfm|t,\bfw)}{\phi(\bfm|t,\bfw)}.\]
For every $\bfm\in\calM^n \eqand t\in(0,t_{\max}]$, consider the following inequalities:
\begin{align*}
    &\|s(\bfm,t,\bfw) -  s(\bfm,t,\bfw')\| \\
    &\leq \Big\|\frac{\nabla_\bfw \phi(\bfm|t,\bfw)\nabla_\bfw \phi(\bfm|t,\bfw)^\intercal}{\phi(\bfm|t,\bfw)^2} \!-\! \frac{\nabla_\bfw \phi(\bfm|t,\bfw')\nabla_\bfw \phi(\bfm|t,\bfw')^\intercal}{\phi(\bfm|t,\bfw')^2}\Big\|\!+\!\Big\|\frac{\nabla^2_\bfw \phi(\bfm|t,\bfw)}{\phi(\bfm|t,\bfw)}\!-\! \frac{\nabla^2_\bfw \phi(\bfm|t,\bfw')}{\phi(\bfm|t,\bfw')}\Big\|\\
    &\leq \frac{1}{\phi(\bfm|t,\bfw)^2}\|\nabla_\bfw \phi(\bfm|t,\bfw)\nabla_\bfw\phi(\bfm|t,\bfw)^{\intercal} -\nabla_\bfw \phi(\bfm|t,\bfw')\nabla_\bfw\phi(\bfm|t,\bfw')^{\intercal}\|\\
    &\hspace{35pt }+ \|\nabla_\bfw\phi(\bfm|t,\bfw')\|^2 \frac{|\phi(\bfm|t,\bfw)^2-\phi(\bfm|t,\bfw')^2|}{(\phi(\bfm|t,\bfw)\phi(\bfm|t,\bfw'))^2}  + \|\nabla_\bfw^2 \phi(\bfm|t,\bfw')\| \; \frac{|\phi(\bfm|t,\bfw)-\phi(\bfm|t,\bfw')|}{\phi(\bfm|t,\bfw)\phi(\bfm|t,\bfw')}  \\
    &\hspace{35pt}+ \frac{1}{\phi(\bfm|t,\bfw)}\|\nabla_\bfw^2\phi(\bfm|t,\bfw) - \nabla_\bfw^2 \phi(\bfm|t,\bfw')\| \\
    &\leq \frac{1}{p_{\min}^2}\|\nabla_\bfw \phi(\bfm|t,\bfw)\nabla_\bfw\phi(\bfm|t,\bfw)^{\intercal} -\nabla_\bfw \phi(\bfm|t,\bfw')\nabla_\bfw\phi(\bfm|t,\bfw')^{\intercal}\|  \\
    &\hspace{35pt } + 2\frac{L_\phi^3}{p_{\min}^3} \|\bfw - \bfw'\| + \frac{L_\phi^3}{p_{\min}^2}\|\bfw - \bfw'\| + 2\frac{L_\phi^3}{p_{\min}}\|\bfw - \bfw'\|  \\
    &\leq \frac{1}{p_{\min}^2}\|\nabla_\bfw \phi(\bfm|t,\bfw) -\nabla_\bfw \phi(\bfm|t,\bfw')\|(\|\nabla_\bfw\phi(\bfm|t,\bfw)\|+\|\nabla_\bfw\phi(\bfm|t,\bfw')\|) + \frac{5L_\phi^3}{p_{\min}^3} \|\bfw - \bfw'\| \\
    &\leq \frac{2L_\phi^3}{p_{\min}^2}\|\bfw - \bfw'\| + \frac{5L_\phi^3}{p_{\min}^3} \|\bfw - \bfw'\| \leq \frac{7L_\phi^3}{p_{\min}^3} \|\bfw - \bfw'\|.
\end{align*}
where the fourth inequality follows by adding and subtracting $\nabla_\bfw \phi(\bfm|t,\bfw)\nabla_\bfw\phi(\bfm|t,\bfw')^{\intercal}$ and using the fact that for any vectors $a$ and $b \in \mathbb{R}^n$, we have $\|ab^\intercal\| = \|a\|\|b\|$.
We are now equipped to bound the following:
\begin{align*}
    \|\widehat{H}(\bfw)-\widehat{H}(\bfw')\| &= \bigg\|\frac{1}{\sfN_t}\sum_{i=1}^{\sfN_t}\frac{1}{\sfN_c}\sum_{k=1}^{\sfN_c} s(\bfm_{(i,k)},t_i,\bfw) - \frac{1}{\sfN_t}\sum_{i=1}^{\sfN_t}\frac{1}{\sfN_c}\sum_{k=1}^{\sfN_c}s(\bfm_{(i,k)},t_i,\bfw')\bigg\|\\
    &\leq \frac{1}{\sfN_t}\sum_{i=1}^{\sfN_t}\frac{1}{\sfN_c}\sum_{k=1}^{\sfN_c}\|s(\bfm_{(i,k)},t_i,\bfw)-s(\bfm_{(i,k)},t_i,\bfw')\| \leq \frac{7L_\phi^3}{p_{\min}^3} \|\bfw - \bfw'\|.
\end{align*}
Next, applying the similar inequalities as above, we obtain $\|\bar{H}(\bfw)-\bar{H}(\bfw')\| \leq 7(L_\phi/p_{\min})^3\|\bfw-\bfw'\|.$
\subsection{Equivalent Conditions of Strong Convexity}
\label{app:strong_convexity}
We state a well-known result on the equivalence conditions of strongly convexity here for convenience \cite[Exercise 9.9]{wainwright2019high}. 
\begin{lemma}[Strong convexity]
Let $f:\mathbb{R}^n\to\mathbb{R}$ be twice continuously differentiable and let 
$\Omega \subset \mathbb{R}^n$ be a compact set. 
Then $f$ is $\mu$-strongly convex on $\Omega$ if and only if any of the following equivalent conditions hold:
\begin{enumerate}
\item[\textnormal{(i)}] For all $x,y\in \Omega$, $\quad f(x)\ \ge\ f(y)+\nabla f(y)^\top(x-y)+\frac{\mu}{2}\|x-y\|^2.$
\item[\textnormal{(ii)}] For all $x,y\in \Omega$, $\quad (\nabla f(x)-\nabla f(y))^\top(x-y)\ \ge\ \mu\|x-y\|^2.$
\item[\textnormal{(iii)}] For all $z\in \Omega$,$\quad \quad\;\nabla^2 f(z)\ \succeq\ \mu I.$
\end{enumerate}
\end{lemma}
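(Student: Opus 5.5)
The statement to prove is the strong-convexity lemma in Appendix~\ref{app:strong_convexity}, i.e.\ the equivalence of (i), (ii), (iii) for a $C^2$ function $f$ on $\Omega$. I will read ``$f$ is $\mu$-strongly convex on $\Omega$'' as condition (i). I will also assume $\Omega$ is convex as well as compact; in the paper's use, $\Omega=\calW_B$ is a Euclidean ball, so this holds. Convexity is needed because every step restricts $f$ to segments $x+s(y-x)$, $s\in[0,1]$, and these must stay in $\Omega$. I will prove the cycle (iii)$\Rightarrow$(i)$\Rightarrow$(ii)$\Rightarrow$(iii).

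\textbf{(iii)$\Rightarrow$(i).} Fix $x,y\in\Omega$ and set $g(s):=f(y+s(x-y))$. Then $g'(s)=\nabla f(y+s(x-y))^\top(x-y)$ and $g''(s)=(x-y)^\top\nabla^2 f(y+s(x-y))(x-y)\ge\mu\|x-y\|^2$. Taylor's theorem with integral remainder gives
\[
g(1)=g(0)+g'(0)+\int_0^1(1-s)g''(s)\,\ds .
\]
Since $\int_0^1(1-s)\,\ds=1/2$, the remainder is at least $\frac{\mu}{2}\|x-y\|^2$, which is exactly (i).

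\textbf{(i)$\Rightarrow$(ii).} Write inequality (i) once with the pair $(x,y)$ and once with the roles of $x$ and $y$ swapped. Adding the two, the function values cancel and we obtain $0\ge(\nabla f(y)-\nabla f(x))^\top(x-y)+\mu\|x-y\|^2$. Rearranging gives (ii).

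\textbf{(ii)$\Rightarrow$(iii).} Take $z\in\Omega$ and a direction $v$, and apply (ii) with $x=z+hv$, $y=z$. This yields $(\nabla f(z+hv)-\nabla f(z))^\top v\ge\mu h\|v\|^2$. Divide by $h$ and let $h\downarrow0$. Because $f$ is $C^2$, the left side converges to $v^\top\nabla^2 f(z)v$, so $v^\top\nabla^2 f(z)v\ge\mu\|v\|^2$, i.e.\ $\nabla^2 f(z)\succeq\mu I$.

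\textbf{Main obstacle.} The last step requires $z+hv\in\Omega$ for small $h>0$. For interior $z$ every direction $v$ is admissible. At a boundary point $z$, I would first obtain the Hessian inequality on the interior and then pass to $z$ using continuity of $\nabla^2 f$ together with density of the interior in $\Omega$, which holds for a ball. A closed convex set with nonempty interior is the closure of its interior, so this is the only place where regularity of $\Omega$ beyond convexity enters.
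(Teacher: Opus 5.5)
Your proof is correct. Its (i)$\Rightarrow$(ii) and (ii)$\Rightarrow$(iii) steps match the paper's; the difference is in how the cycle is closed. The paper proves (i)$\Rightarrow$(ii)$\Rightarrow$(iii)$\Rightarrow$(ii)$\Rightarrow$(i). It gets from the Hessian bound back to (ii) by integrating $\nabla^2 f$ along a segment, then from (ii) to (i) by a second integration of $f$ along the segment, using (ii) for the pair $(y+t(x-y),y)$. You replace these two first-order integrations with one second-order Taylor formula with integral remainder, which gives (iii)$\Rightarrow$(i) directly and a three-step cycle.

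The more substantive difference is your treatment of the domain. With $\Omega$ merely compact, as in the statement, the lemma is false.
\begin{itemize}
\item (iii)$\Rightarrow$(i) fails for $\Omega=[-2,-1]\cup[1,2]$ and a $C^2$ function equal to $\mu x^2/2$ on $[-2,-1]$ and to $\mu x^2/2-C$ on $[1,2]$ with $C>0$; test $x=1$, $y=-1$.
\item (ii)$\Rightarrow$(iii) fails when $\Omega$ has empty interior, e.g.\ a segment of the $x_1$-axis in $\RR^2$ with $f=\mu x_1^2/2-x_2^2$. There (i) and (ii) hold but $\nabla^2 f\not\succeq\mu I$.
\end{itemize}
The paper's argument tacitly needs both convexity and a nonempty interior: it integrates along segments and writes $z\in\sfB(x_0,r)$. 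It also never checks that $z+tv$ stays in the set, which fails at boundary points for tangential or outward $v$.

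Your convexity assumption supplies the first missing piece. Your passage from interior points to the boundary, via continuity of $\nabla^2 f$ and density of the interior, supplies the second. Neither costs anything in the paper's application, where $\Omega=\calW_B$ is a Euclidean ball.
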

\begin{proof}
We prove the equivalence by establishing the implications
$\textnormal{(i)} \Rightarrow \textnormal{(ii)} \Rightarrow \textnormal{(iii)} \Rightarrow \textnormal{(ii)} \Rightarrow \textnormal{(i)}.$

\medskip
\noindent\textbf{(i)$\Rightarrow$(ii).}
Applying $(i)$ to the ordered pairs $(x,y)$ and $(y,x)$ yields
\[
f(x)\ge f(y)+\nabla f(y)^\intercal(x-y)+\frac{\mu}{2}\|x-y\|^2,
\]
\[
f(y)\ge f(x)+\nabla f(x)^\intercal(y-x)+\frac{\mu}{2}\|x-y\|^2.
\]
Summing these two inequalities gives
$(\nabla f(x)-\nabla f(y))^\intercal(x-y)\ge \mu\|x-y\|^2$.

\medskip
\noindent\textbf{(ii)$\Rightarrow$(iii).}
Fix $z\in\sfB(x_0,r)$ and an arbitrary direction $v\in\sfB(x_0,r)$. Applying $(ii)$ with $x=z+tv$ and $y=z$ gives
\[
(\nabla f(z+tv)-\nabla f(z))^\intercal tv \ge \mu t^2\|v\|^2
\qquad \text{for all } t>0.
\]
After dividing both sides by $t^2$ gives
\[
\left(\frac{\nabla f(z+tv)-\nabla f(z)}{t}\right)^\intercal v \ge \mu \|v\|^2
\qquad \text{for all } t>0.
\]
Since $\nabla f$ is differentiable at $z$, its derivative at $z$ is given by the Hessian matrix $\nabla^2 f(z)$. In particular, for any direction $v\in\sfB(x_0,r)$, the directional derivative of the gradient satisfies
\[
\lim_{t\to 0} \frac{\nabla f(z+tv)-\nabla f(z)}{t}
= \nabla^2 f(z)\,v.
\]
Taking the limit $t\to 0$ in the preceding inequality therefore yields
$v^\intercal \nabla^2 f(z)\, v \ge \mu \|v\|^2.$
Since this inequality holds for every $v\in\sfB(x_0,r)$, it follows that the symmetric matrix $\nabla^2 f(z)-\mu I$ is positive semidefinite. Equivalently,
$\nabla^2 f(z)\ge \mu I.$
As $z$ was arbitrary, the result holds for all $z\in\sfB(x_0,r)$.

\medskip
\noindent\textbf{(iii)$\Rightarrow$(ii).}
Consider the line segment $\gamma(t)=y+t(x-y)$ for $t\in[0,1]$. By the fundamental theorem of calculus applied to the gradient,
\[
\nabla f(x)-\nabla f(y)
=\int_0^1 \nabla^2 f(\gamma(t))\,\gamma'(t)\,\dt.
\]
Taking the inner product with $\gamma'(t) = (x-y)$ and using $(iii)$, we obtain
\[
 (\nabla f(x)-\nabla f(y))^\intercal \gamma'(t)
=\int_0^1 (x-y)^\top \nabla^2 f(\gamma(t))\,(x-y)\,\dt
\ge \int_0^1 \mu\|x-y\|^2\,dt
=\mu\|x-y\|^2.
\]
\medskip
\noindent\textbf{(ii)$\Rightarrow$(i).}
Consider the line segment $\gamma(t)=y+t(x-y)$ for $t\in[0,1]$. By the fundamental theorem of calculus,
\[
f(x)-f(y)=\int_0^1 \nabla f(\gamma(t))^\intercal\gamma'(t)\,\dt
=\int_0^1 \nabla f(y+t(x-y))^\intercal (x-y)\,\dt.
\]
Using $(ii)$ with the pair $(y+t(x-y),y)$ yields
\[
(\nabla f(y+t(x-y))-\nabla f(y))^\intercal(x-y)
\ge \mu t\|x-y\|^2,
\]
for all $t\in[0,1]$. Substituting this bound into the integral representation above gives
\[
f(x)-f(y)\ge
\int_0^1 \left(\nabla f(y)^\intercal(x-y) + \mu t\|x-y\|^2\right)\,dt
=\nabla f(y)^\intercal(x-y)+\frac{\mu}{2}\|x-y\|^2.
\]
This completes the proof. 
\end{proof}

\subsection{Derivation of Equation \eqref{eqn:unif_1}}
\label{app:proof:eqn:unif_1}
We use the well-known McDiarmids inequality \cite[Corollary 2.21]{wainwright2019high} \cite[Lemma 26.4]{shalev2014understanding}.
Define the function,
\[
f(\calS_i) = \sup_{\bfw \in \calW_B} | \widehat{L}_{t_i}(\bfw; \calS_i) -L_{t_i}(\bfw)|,
\]
where $\calS_i=\{\bfm_{(i,1)}$, $\cdots$ , $\bfm_{(i,\sfN_c)}\} \eqand \widehat{L}_{t_i}(\bfw; \calS_i) := \tfrac{1}{\sfN_c}\sum_{k} \ell(\phi(\bfm_{(i,k)}|t_i,\bfw))$. Let $\calS'_i=\{\bfm_{(i,1)}$, $\cdots$ , $\bfm'_{(i,j)}$, $\cdots$ , $\bfm_{(i,\sfN_c)}\}$ differ from $\calS_i$ in only one coordinate.
Then,
\[
\begin{aligned}
|f(\calS_i) - f(\calS'_i)|
&= \Big| \sup_{\bfw\in\calW_B} |\widehat{L}_{t_i}(\bfw;\calS_i)-L_{t_i}(\bfw)| - \sup_{\bfw\in\calW_B}|\widehat{L}_{t_i}(\bfw;\calS'_i)-L_{t_i}(\bfw)| \Big| \\[4pt]
&\le \sup_{\bfw\in\calW_B}\Big|\,|\widehat{L}_{t_i}(\bfw;\calS_i)-L_{t_i}(\bfw)| - |\widehat{L}_{t_i}(\bfw;\calS'_i)-L_{t_i}(\bfw)|\,\Big| \\[4pt]
&\le \sup_{\bfw\in\calW_B}\Big| \widehat{L}_{t_i}(\bfw;\calS_i) - \widehat{L}_{t_i}(\bfw;\calS'_i) \Big| 
\le \frac{1}{\sfN_c}\;\sup_{\bfw\in\calW_B}|\ell(\phi(\bfm_{(i,j)}|t_i,\bfw)) - \ell(\phi(\bfm'_{(i,j)}|t_i,\bfw))|,
\end{aligned}
\]
where the second inequality follows from the following result: for bounded functions \( g_1, g_2: \calX \rightarrow \RR \), we have 
$|\sup_{\calX} g_1 - \sup_{\calX} g_2| \leq \sup_{\calX} |g_1 - g_2|,$ and the third inequality follows the reverse triangle inequality.
Since
\(
|\log x - \log y | \le \log(1/a),
\) for all $a\leq x,y\leq 1$, we get \(
|f(\calS_i) - f(\calS'_i)| \le (-\log p_{\min})/{\sfN_c}.\) This means
 \(f \) satisfies the bounded-difference condition of McDiarmid’s inequality with  $c_j = (-\log p_{\min})/\sfN_c$ for all $k\in \{1,2,\cdots,\sfN_c\}$.
Therefore, using McDiarmid’s inequality,
for any \(\varepsilon>0\),
\[
\Pr\left\{
|f(\calS_i)- \mathbb E[f(\calS_i)]| \ge \varepsilon
\right\}
\le 2\exp\!\Big\{\frac{-2\varepsilon^2 \sfN_c}{(-\log p_{\min})^2}\Big\}.
\]
Applying the union bound gives
\[\Pr\left\{ \max_i
|f(\calS_i)- \mathbb E[f(\calS_i)]| \ge \varepsilon
\right\}
= \Pr\bigg\{ \bigcup_{i=1}^{\sfN_t}
\{|f(\calS_i)- \mathbb E[f(\calS_i)]| \ge \varepsilon\}
\bigg\} \leq 
2\sfN_t \exp\!\Big\{\frac{-2\varepsilon^2\sfN_c}{(-\log p_{\min})^2}\Big\}.\]
This yields, for any $\delta_1\in(0,1)$, with probability at least \((1-\delta_1)\),
\begin{align}
   \max_i  \sup_{\bfw\in\calW_B}\big|\widehat{L}_{t_i}(\bfw)-L_{t_i}(\bfw)\big|
&\le \max_i \EE\Big[\sup_{\bfw\in\calW_B}\big|\widehat{L}_{t_i}(\bfw)-L_{t_i}(\bfw)\big|\Big]
+ (-\log p_{\min})\,\sqrt{\frac{2\log(2\sfN_t/\delta_1)}{\sfN_c}} \nonumber \\
&\leq \max_i 2\;\EE[\hat{\mathfrak R}_{\calS_i}(\calL_{\bfm}^{(i)})] + (-\log p_{\min})\,\sqrt{\frac{\log(2\sfN_t/\delta_1)}{2\sfN_c}},\label{eqn:unif_conv_1}
\end{align}
where the last inequality follows from \cite[Lemma 26.2]{shalev2014understanding}.
Next, we apply the bounded-difference property to the empirical Rademacher complexity. With $\calS'_i$ as stated above, we have 
\[
\begin{aligned}
\big|\hat{\mathfrak R}_{\calS_i}(\calL_{\bfm}^{(i)})-\hat{\mathfrak R}_{\calS'_i}(\calL_{\bfm}^{(i)})\big|
& = \Bigg|\mathbb{E}_{\sigma} \sup_{\bfw\in\calW_B}
\frac{1}{\sfN_c}
\sum_{k=1}^{\sfN_c} \sigma_{(i,k)} \ell(\phi(\bfm_{(i,k)}|t_i,\bfw))
\\
&\hspace{50pt}-
\mathbb{E}_{\sigma} \sup_{\bfw\in\calW_B}
\frac{1}{\sfN_c}
\bigg(
\sum_{ k \ne j} \sigma_{(i,k)} \ell(\phi(\bfm_{(i,k)}|t_i,\bfw)) + \sigma_{(i,j)} \ell(\phi(\bfm'_{(i,j)}|t_i,\bfw))
\bigg)\Bigg|\\
&\le
\mathbb{E}_{\sigma} \sup_{\bfw\in\calW_B}
\frac{1}{\sfN_c} \big|\sigma_{(i,j)} \big(\ell(\phi(\bfm_{(i,j)}|t_i,\bfw)) - \ell(\phi(\bfm'_{(i,j)}|t_i,\bfw))\big)\big|\\
&=
\frac{1}{\sfN_c} \sup_{\bfw\in\calW_B} |\ell(\phi(\bfm_{(i,j)}|t_i,\bfw)) - \ell(\phi(\bfm'_{(i,j)}|t_i,\bfw))|
\le
\frac{(-\log p_{\min})}{\sfN_c}.
\end{aligned}
\]
Thus, $\hat{\mathfrak R}_{\calS_i}(\calL_{\bfm}^{(i)})$ satisfies bounded differences
with $c_j=-\log p_{\min}/\sfN_c$. By McDiarmid’s inequality, for any $\varepsilon>0$
\[
\Pr\Big\{\big|\hat{\mathfrak R}_{\calS_i}(\calL_{\bfm}^{(i)})-\EE\big[\hat{\mathfrak R}_{\calS_i}(\calL_{\bfm}^{(i)})\big]\big| \geq \varepsilon\Big\}
\le 2\exp\bigg\{\frac{-2\varepsilon^2\sfN_c}{(-\log p_{\min})^2}\bigg\}.
\]
Applying the union bound, we get, 
\begin{align*}
    \Pr\Big\{\max_i \big|\hat{\mathfrak R}_{\calS_i}(\calL_{\bfm}^{(i)})-\EE\big[\hat{\mathfrak R}_{\calS_i}(\calL_{\bfm}^{(i)})\big]\big| \geq \varepsilon\Big\} 
    &\leq 2\sfN_t\exp\bigg\{\frac{-2\varepsilon^2\sfN_c}{(-\log p_{\min})^2}\bigg\}.
\end{align*}
Therefore, for any $\delta_2 \in (0,1)$, with probability at least $(1-\delta_2),$ we have
\[\max_i\EE\big[\hat{\mathfrak R}_{\calS_i}(\calL_{\bfm}^{(i)})\big] \leq \max_i\hat{\mathfrak R}_{\calS_i}(\calL_{\bfm}^{(i)})  + (-\log p_{\min})\,\sqrt{\frac{\log(2\sfN_t/\delta_2)}{2\sfN_c}}.\]
Finally, by letting $\delta_1=\delta_2=\delta$, we obtain
$$\max_{1\le i\le \sfN_t}\;
\sup_{\bfw\in\calW_B}
\rmT_1^{(i)}(\bfw)
\le
\max_i 2\;\hat{\mathfrak R}_{S_i}(\calL_{\bfm}^{(i)})
+3(-\log p_{\min})\sqrt{\frac{\log({2\sfN_t/\delta})}{2\sfN_c}},$$
which holds with probability at least $(1-2\delta)$. 

\subsection{Derivation of Equation ~\eqref{eqn:unif_2}}
\label{app:proof:eqn:unif_2}
We use the well-known McDiarmids inequality \cite[Corollary 2.21]{wainwright2019high} \cite[Lemma 26.4]{shalev2014understanding}.
Define the function,
\[
f(\calT) = \sup_{\bfw \in \calW_B} \Big|\frac{1}{\sfN_t}\sum_{i=1}^{\sfN_t} {L}_{t_i}(\bfw) -L(\bfw)\Big|,
\]
Let $\calT'=\{t_i,\cdots,t'_j,\cdots,t_{\sfN_c}\}$ differ from $\calT$ in only one coordinate.
Then,
\[
\begin{aligned}
|f(\calT) - f(\calT')|
&\le \frac{1}{\sfN_t}\;\sup_{\bfw\in\calW_B}\big|\EE_{\bfm\sim \phi(\cdot|t_j,\bfw^*)}
[\ell(\phi(\bfm|t_j,\bfw))] - \EE_{\bfm\sim \phi(\cdot|t'_j,\bfw^*)}
[\ell(\phi(\bfm|t'_j,\bfw))]\big|\\
&\le \frac{1}{\sfN_t}\;\sup_{\bfw\in\calW_B}\big|\EE_{\bfm\sim \phi(\cdot|t_j,\bfw^*)}
[\ell(\phi(\bfm|t_j,\bfw))]\big| + \frac{1}{\sfN_t}\;\sup_{\bfw\in\calW_B}\big|\EE_{\bfm\sim \phi(\cdot|t'_j,\bfw^*)}
[\ell(\phi(\bfm|t'_j,\bfw))]\big|\\
&\le {2(-\log p_{\min})}/{\sfN_t}.
\end{aligned}
\]
Therefore, applying McDiarmid’s inequality and \cite[Lemma 26.2]{shalev2014understanding} gives, for any $\delta_1\in(0,1)$, with probability at least \((1-\delta_1)\),
\begin{align}
\sup_{\bfw\in\calW_B}\Big|\frac{1}{\sfN_t}\sum_{i=1}^{\sfN_t}{L}_{t_i}(\bfw)-L(\bfw)\Big|
&\leq 2\;\EE[\hat{\mathfrak R}_{\calT}(\calL_{t})] + (-\log p_{\min})\,\sqrt{\frac{\log(2/\delta_1)}{2\sfN_t}},\label{eqn:unif_conv_2}
\end{align}
Next, we apply the bounded-difference property to the empirical Rademacher complexity. With $\calS'_i$ as stated above, we have 
\[
\begin{aligned}
\big|\hat{\mathfrak R}_{\calT}(\calL_t)-\hat{\mathfrak R}_{\calT'}(\calL_t)\big|
& = \Bigg|\mathbb{E}_{\sigma} \sup_{\bfw\in\calW_B}
\frac{1}{\sfN_t}
\sum_{i=1}^{\sfN_t} \sigma_{i} L_{t_i}(\bfw)
-
\mathbb{E}_{\sigma} \sup_{\bfw\in\calW_B}
\frac{1}{\sfN_t}
\bigg(
\sum_{ i \ne j} \sigma_{i} L_{t_i}(\bfw) + \sigma_{j} L_{t'_j}(\bfw)
\bigg)\Bigg|\\
&\le
\mathbb{E}_{\sigma} \sup_{\bfw\in\calW_B}
\frac{1}{\sfN_t} \big|\sigma_{j} \big(L_{t_j}(\bfw) - L_{t'_j}(\bfw)\big)\big|\le{2(-\log p_{\min})}/{\sfN_t}.
\end{aligned}
\]
Thus, $\hat{\mathfrak R}_{\calT}(\calL_t)$ satisfies bounded differences
with $c_j=-\log p_{\min}/\sfN_c$. By McDiarmid’s inequality \cite[Corollary 2.21]{wainwright2019high} \cite[Lemma 26.4]{shalev2014understanding}, we obtain, for any $\delta_2 \in (0,1)$, with probability at least $(1-\delta_2),$ we have
\[\EE\big[\hat{\mathfrak R}_{\calT}(\calL_t)\big] \leq \hat{\mathfrak R}_{\calT}(\calL_t)  + (-\log p_{\min})\,\sqrt{\frac{2\log(2/\delta_2)}{\sfN_c}}.\]
Finally, by letting $\delta_1=\delta_2=\delta$, we obtain
\[\sup_{\bfw\in\calW_B}\rmT_2(\bfw)
\le
2\,\hat{\mathfrak R}_\calT(\calL_{t})
\;+\;
3(-\log p_{\min})\sqrt{\frac{2\log(2/\delta)}{\sfN_t}},\]
which holds with probability at least $(1-2\delta)$.

\subsection{Proof of Proposition \ref{prop:rademacher_1}.}
\label{app:proof:prop:rademacher_1}
Using the definition of empirical Rademacher complexity and the contraction lemma \cite[Lemma 26.9]{shalev2014understanding}, we get
\begin{align*}
\hat{\mathfrak R}_{\calS_i}(\calL_{\bfm}^{(i)})
&=\mathbb E_\sigma\Big[\sup_{\bfw\in\calW_B} \frac{1}{\sfN_c}\sum_{k=1}^{\sfN_c} \sigma_{(i,k)}\,\ell(\phi(\bfm_{(i,k)}|t_i,\bfw))\Big]{\leq}\frac{1}{p_{\min}}\mathbb E_\sigma\Big[\sup_{\bfw\in\calW_B}\frac{1}{\sfN_c}\sum_{k=1}^{\sfN_c} \sigma_{(i,k)}\,\tr\big(\Lambda_{\bfm_{(i,k)}}\rho_{t_i}(\bfw)\big)\Big],
\end{align*}
where the inequality follows by noting that $\ell$ is $1/p_{\min}$-Lipschitz.
Next, by grouping the terms within the trace, we get
\[
\sum_{k}\sigma_{(i,k)}\,\tr\big(\Lambda_{\bfm_{(i,k)}}\rho_{t_i}(\bfw)\big)
=\tr\Big(\Big(\sum_{k}\sigma_{(i,k)}\,\Lambda_{\bfm_{(i,k)}}\Big)\rho_{t_i}(\bfw)\Big).
\]
Applying the inequality $|\tr(\rmA^\dagger \rmB)|\le \|\rmA\|\|\rmB\|_1$ \cite[Eqn.~1.174]{watrous2018theory}, gives
\begin{align*}
    \sup_{\bfw\in\calW_B}\tr\Big(\Big(\sum_{k}\sigma_{(i,k)}\,\Lambda_{\bfm_{(i,k)}}\Big)\rho_{t_i}(\bfw)\Big) &\leq \sup_{\bfw\in\calW_B}\Big\|\sum_{k}\sigma_{(i,k)}\,\Lambda_{\bfm_{(i,k)}}\Big\|\|\rho_{t_i}(\bfw)\|_1= \frac{1}{\sfN_c}\Big\|\sum_{k}\sigma_{(i,k)}\,\Lambda_{\bfm_{(i,k)}}\Big\|.
\end{align*}
Taking expectation and using
Lemma~\ref{lem:khintchine} for each $t$, we get
\[
\hat{\mathfrak R}_{\calS_i}(\calL_{\bfm}^{(i)})
\le \frac{\sqrt{2\log(2\sfD)}}{p_{\min}\sfN_c}\Big\|\sum_{k}\Lambda_{\bfm_{(i,k)}}^2\Big\|^{1/2}.
\]
Finally, using triangle inequality, $\sqrt{\|\sum_{k}\Lambda_{\bfm_{(i,k)}}^2\|}\leq \sqrt{\sum_{k}\|\Lambda_{\bfm_{(i,k)}}^2\|}\le \sqrt{\sfN_c}$
since $0\le\Lambda_\bfm\le \rmI$. Thus,
\[
\hat{\mathfrak R}_{\calS_i}(\calL_{\bfm}^{(i)})\leq \frac{\sqrt{2\log(2\sfD)}}{p_{\min}\sfN_c}\sqrt{\sfN_c} = 
\frac{\sqrt{2\log(2\sfD)}}{p_{\min}\sqrt{\sfN_c}}.\]
This completes the proof of Proposition \ref{prop:rademacher_1}.

\subsection{Proof of Proposition \ref{prop:rademacher_2}.}
\label{app:proof:prop:rademacher_2}
Recall, for a metric space $(\calX,d)$, the covering number, denoted as $\calN(\varepsilon,\calX,d)$, is the minimum number of balls of radius $\varepsilon$ required to cover the given space $\calX$ using metric $d$. Note that for $c$-dimensional Euclidean ball of radius $B$, the covering $\calN(\eta,\calW_B, \|\cdot\|_2)$ is upper bounded as: $\calN(\eta,\calW_B, \|\cdot\|_2)\leq \big(\frac{3B}{\eta}\big)^c$. Next, on the sample $\calT=\{t_1,\dots,t_{\sfN_t}\}$, define a data-dependent pseudo-metric on function $L_{t_i}$
\[
d_\calT(\bfw,\bfw'):=\left(\frac{1}{\sfN_t}\sum_{i=1}^{\sfN_t}\big(L_{t_i}(\bfw)-L_{t_i}(\bfw')\big)^2\right)^{1/2}.
\]
Using Eqn.\ref{eqn:lipchitz_Lt}, we get $d_\calT(\bfw,\bfw')\le K\|\bfw-\bfw'\|$, where $K:=(L_\phi/p_{\min})$. Then, the covering number of the function class $\calL_{t}$ with respect to pseudo-metric $d_{\calT}$ is bounded above as 
\[
\calN(\varepsilon,\calL_{t},d_\calT)
\le
\calN(\varepsilon/K,\calW_B,\|\cdot\|_2)
\le
\left(\frac{3KB}{\varepsilon}\right)^c.
\]
Finally, we apply Dudley's theorem, which we restate here for convenience. For more details, please refer to \cite[Example 5.24]{wainwright2019high}.
\begin{theorem*}[Dudley’s Theorem]
Let $\mathcal F$ be a class of real-valued functions from
$S=\{z_1,\dots,z_n\}$ to $\RR$. Define the empirical pseudo-metric
$d_S(f,g)
:=\sqrt{\frac{1}{n}\sum_{i=1}^n \big(f(z_i)-g(z_i)\big)^2}.$
Then the empirical Rademacher complexity of $\mathcal F$ satisfies
\[
\hat{\mathfrak R}_S(\mathcal F)
\;\le\;
\frac{24}{\sqrt{n}}
\int_0^{\mathrm{diam}(\mathcal F,d_S)}
\sqrt{\log \mathcal N(\varepsilon,\mathcal F,d_S)}\,d\varepsilon,
\]
where
$\mathrm{diam}(\mathcal F,d_S)
:=\sup_{f,g\in\mathcal F}\; d_S(f,g),$
and $\mathcal N(\varepsilon,\mathcal F,d_S)$ denotes the $\varepsilon$-covering
number of $\mathcal F$ with respect to the metric $d_S$.
\end{theorem*}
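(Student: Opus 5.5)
We prove the bound by the standard chaining argument, with Massart's finite-class lemma applied at each scale. Write $D:=\mathrm{diam}(\mathcal F,d_S)$ and identify each $f\in\mathcal F$ with the vector $(f(z_1),\dots,f(z_n))\in\RR^n$. We assume $D<\infty$; otherwise there is nothing to prove.

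\emph{Reductions.} The empirical Rademacher complexity here carries no absolute value, and $\EE_\sigma\big[\tfrac1n\sum_i\sigma_i f_0(z_i)\big]=0$ for any fixed $f_0\in\mathcal F$. Hence $\hat{\mathfrak R}_S(\mathcal F)=\EE_\sigma\sup_{f}\tfrac1n\sum_i\sigma_i\big(f(z_i)-f_0(z_i)\big)$, so we may recentre at an arbitrary $f_0$. To avoid measurability issues with the supremum, we first prove the bound for every finite subclass $\mathcal F'\subset\mathcal F$. The supremum over $\mathcal F$ is then the increasing limit of suprema over finite subclasses, and the bound passes to the limit by monotone convergence.

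\emph{Chaining construction.} Set $\varepsilon_j:=D2^{-j}$ for $j\ge0$. For $j\ge1$ let $N_j\subset\mathcal F$ be a minimal $\varepsilon_j$-cover in $d_S$, and let $N_0:=\{f_0\}$, which is an $\varepsilon_0$-cover because $\varepsilon_0=D$. For each $f$ let $\pi_jf\in N_j$ be a nearest point, and fix a depth $J$. Telescoping gives
\[
f-f_0=\sum_{j=1}^J(\pi_jf-\pi_{j-1}f)+(f-\pi_Jf).
\]
\begin{itemize}
\item \emph{Remainder.} By Cauchy--Schwarz, $\tfrac1n\sum_i\sigma_i(f-\pi_Jf)(z_i)\le d_S(f,\pi_Jf)\le\varepsilon_J$, which tends to $0$ as $J\to\infty$.
\item \emph{Links.} Each link satisfies $d_S(\pi_jf,\pi_{j-1}f)\le\varepsilon_j+\varepsilon_{j-1}=3\varepsilon_j$. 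The set of possible links at level $j$ has cardinality at most $|N_j||N_{j-1}|\le|N_j|^2$, since covering numbers are monotone in the radius.
\end{itemize}

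\emph{Bounding each level.} Massart's lemma states that for a finite set $A\subset\RR^n$,
\[
\EE\max_{a\in A}\tfrac1n\sum_i\sigma_ia_i\le \max_{a\in A}\|a\|_2\sqrt{2\log|A|}/n .
\]
A link vector $a$ has $\|a\|_2=\sqrt n\,d_S\le 3\sqrt n\,\varepsilon_j$. The expected supremum at level $j$ is therefore at most
\[
3\varepsilon_j\sqrt{4\log|N_j|}/\sqrt n=6\varepsilon_j\sqrt{\log\mathcal N(\varepsilon_j,\mathcal F,d_S)}/\sqrt n .
\]

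\emph{Summing.} Since $\varepsilon_j=2(\varepsilon_j-\varepsilon_{j+1})$ and $\varepsilon\mapsto\mathcal N(\varepsilon)$ is nonincreasing, we have
\[
\varepsilon_j\sqrt{\log\mathcal N(\varepsilon_j)}\le 2\int_{\varepsilon_{j+1}}^{\varepsilon_j}\sqrt{\log\mathcal N(\varepsilon)}\,d\varepsilon .
\]
Summing over $j\ge1$ and letting $J\to\infty$ yields
\[
\hat{\mathfrak R}_S\le\frac{12}{\sqrt n}\int_0^{D/2}\sqrt{\log\mathcal N(\varepsilon,\mathcal F,d_S)}\,d\varepsilon\le\frac{24}{\sqrt n}\int_0^{D}\sqrt{\log\mathcal N(\varepsilon,\mathcal F,d_S)}\,d\varepsilon .
\]

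\emph{Main obstacle: covering convention.} The chaining needs the centres $\pi_jf$ to lie in $\mathcal F$. If $\mathcal N$ in the statement allows external centres, I would replace them with internal ones, using $\mathcal N_{\rm int}(2\varepsilon)\le\mathcal N_{\rm ext}(\varepsilon)$. After the change of variables, this costs exactly a factor of $2$, and the slack between the constant $12$ obtained above and the stated constant $24$ absorbs it.
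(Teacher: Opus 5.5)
Your chaining argument is correct, and it is essentially the route behind the paper's statement: the paper gives no proof of its own, it only restates Dudley's bound and cites Wainwright (Example 5.24), whose standard proof is exactly your multiscale telescoping with a finite sub-Gaussian (Massart) maximal inequality at each level; your bookkeeping checks out and in fact gives the sharper bound $\frac{12}{\sqrt n}\int_0^{D/2}\sqrt{\log\mathcal N(\varepsilon,\mathcal F,d_S)}\,d\varepsilon$.

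One tidy-up: the finite-subclass reduction is unnecessary, since $\sigma$ takes only $2^n$ values and the chaining bound already holds pointwise in $\sigma$ for $\mathcal F$ itself. If you keep that reduction, note that internal covering numbers are not monotone under passing to a subclass. The transfer must then go through $\mathcal N_{\mathrm{int}}(\varepsilon,\mathcal F')\le\mathcal N_{\mathrm{ext}}(\varepsilon/2,\mathcal F')\le\mathcal N_{\mathrm{ext}}(\varepsilon/2,\mathcal F)$. The same single factor-$2$ slack between your $12$ and the stated $24$ absorbs this.
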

In our setting, the function class is $\mathcal L_t$ and the empirical metric is $d_{\mathcal T}$. The radius can be bounded as $\mathrm{diam}(\calL_t,d_\calT) \leq 2(-\log p_{\min}).$ With this bound in hand, define $L_{\max}:=\min\{3KB,2(-\log p_{\min})\}$. We now derive the following sequence of inequalities.
\begin{align*}
    \hat{\mathfrak R}_\calT(\calL_t) \leq \frac{24}{\sqrt{\sfN_t}}\int_{0}^{L_{\max}} \!\!\sqrt{c\log\bigg(\frac{3KB}{\varepsilon}\bigg)} \mathrm{d} \varepsilon &= \frac{72KB\sqrt{c}}{\sqrt{\sfN_t}}\int_{\log\big(\frac{3KB}{L_{\max}}\big)}^{\infty} \sqrt{u} e^{-u} \mathrm{d}u \\
    &\leq \frac{72KB\sqrt{c}}{\sqrt{\sfN_t}}\int_{0}^{\infty} \sqrt{u} e^{-u} \mathrm{d}u 
    =  \frac{36L_{\phi}B\sqrt{\pi c}}{p_{\min}\sqrt{\sfN_t}}.
\end{align*}
This completes the proof of Proposition \ref{prop:rademacher_2}.

\end{document}